\def\th@plain{%
	\thm@notefont{}
	\itshape 
}
\def\th@definition{%
	\thm@notefont{}
	\normalfont 
}
\newcommand{\R}{\mathbb{R}}
\newcommand{\Z}{\mathbb{Z}}
\newcommand{\B}{\mathcal{B}}
\newcommand{\V}{\mathbb{V}}
\newcommand{\W}{\mathbb{W}}
\newcommand{\eps}{\varepsilon}
\newcommand{\Cech}{\v{C}ech\xspace}
\newcommand{\cech}{\mathcal{C}}
\newcommand{\lazy}{\mathcal{LB}}
\newcommand{\lpixels}{\mathcal{LS}}
\newcommand{\lcomplex}{\mathcal{LX}}
\newcommand{\pixels}{\mathcal{S}} 
\newcommand{\complex}{\mathcal{X}}
\newcommand{\field}{\mathcal{F}}
\newcommand{\nrv}{\mathrm{nerve}}
\newcommand{\nerve}{\mathrm{nerve}}
\newcommand{\simplicialmap}{\varphi}
\newcommand{\linearmap}{\lambda}
\newcommand{\id}{\mathrm{id}}
\newcommand{\inc}{\mathrm{inc}}
\newcommand{\distance}[2]{\|#1-#2\|}
\newcommand{\ignore}[1]{}
\newcommand{\homt}{\overset{h}{\simeq}}
\newtheorem{lemma}{Lemma}
\numberwithin{lemma}{section}
\newtheorem{theorem}[lemma]{Theorem}
\title{\Large Improved Topological Approximations by Digitization}
\author{
	Aruni Choudhary\thanks{Freie Universit\"at Berlin, Berlin, Germany}
	\and 
	Michael Kerber\thanks{Technische Universit\"at Graz, Graz, Austria} 
	\and
	Sharath Raghvendra\thanks{Virginia Tech, Blacksburg, USA}
}
\date{}
\begin{document}
\maketitle

\begin{abstract}
\Cech complexes are useful simplicial complexes for computing
and analyzing topological features of data that lies in Euclidean space.        
Unfortunately, computing these complexes becomes prohibitively
expensive for large-sized data sets even for medium-to-low dimensional data.
We present an approximation scheme for $(1+\eps)$-approximating the topological 
information of the \Cech complexes for $n$ points in $\R^d$, for $\eps\in(0,1]$.
Our approximation has a total size of $n\left(\frac{1}{\eps}\right)^{O(d)}$
for constant dimension $d$, 
improving all the currently available $(1+\eps)$-approximation schemes
of simplicial filtrations in Euclidean space.
Perhaps counter-intuitively, we arrive at our result by adding additional 
$n\left(\frac{1}{\eps}\right)^{O(d)}$ sample points to the input.
We achieve a bound that is independent of the spread of the point set
by pre-identifying the scales at which the \Cech complexes changes 
and sampling accordingly.
\end{abstract}

\section{Introduction}
\label{section:introduction}

\paragraph{Context and Motivation.}
\emph{Topological data analysis} attempts to extract relevant information
out of a data set by interpreting data as a shape and understanding the
connectivity of this shape.
Connectivity clearly includes the study of connected components, but also
the presence of tunnels, voids, and other high-dimensional topological
features.

How do we interpret data as a shape? We consider the common scenario
that our data is a set of $n$ points $P$ in a possibly high-dimensional Euclidean
space $\R^d$. 
Fixing a positive real scale parameter $\alpha$,
we define $\B_\alpha(P)$ (or simply $\B_\alpha$) as the union of balls of radius $\alpha$
centered at each point in $P$.
The shape $\B_\alpha$ can be seen as an approximation of the shape formed by $P$ for a parameter $\alpha$. 
As seen in Figure~\ref{figure:union_of_balls}, for different values of $\alpha$, the union $\B_\alpha$
may form different topological configurations raising the question of the ``right'' $\alpha$-value
that accurately captures the shape of $P$.

\begin{figure}[ht!]
\centering
\begin{subfigure}[t]{0.3\columnwidth}
\centering
\includegraphics[width=0.65\textwidth]{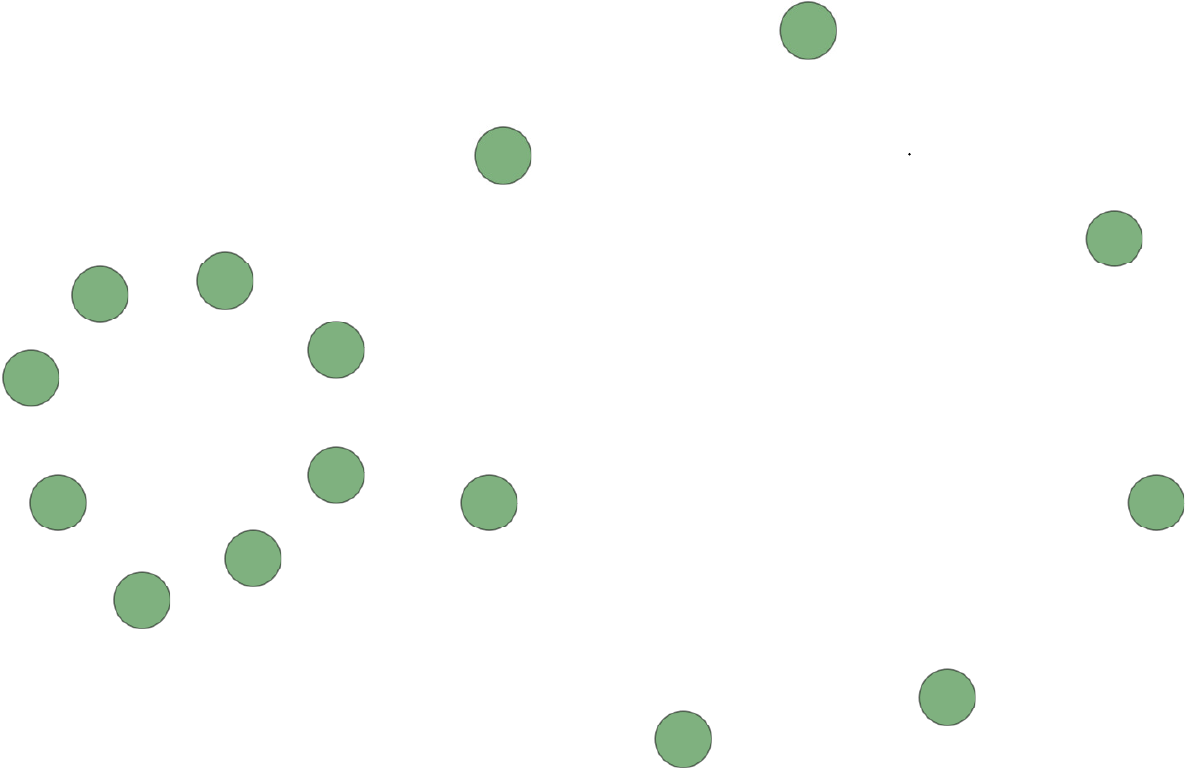}
\end{subfigure}
\begin{subfigure}[t]{0.3\columnwidth}
\centering
\includegraphics[width=0.65\textwidth]{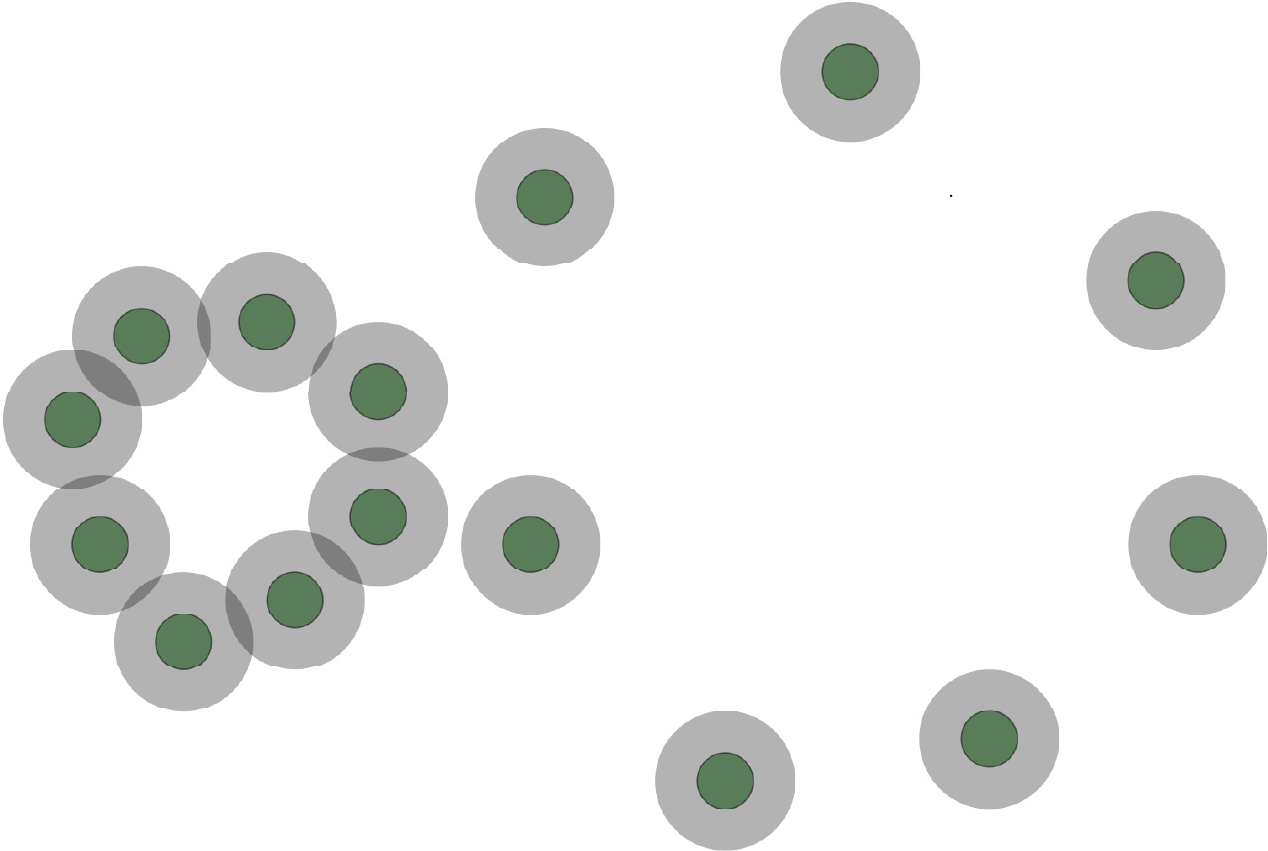}
\end{subfigure}
\begin{subfigure}[t]{0.3\columnwidth}
\centering
\includegraphics[width=0.65\textwidth]{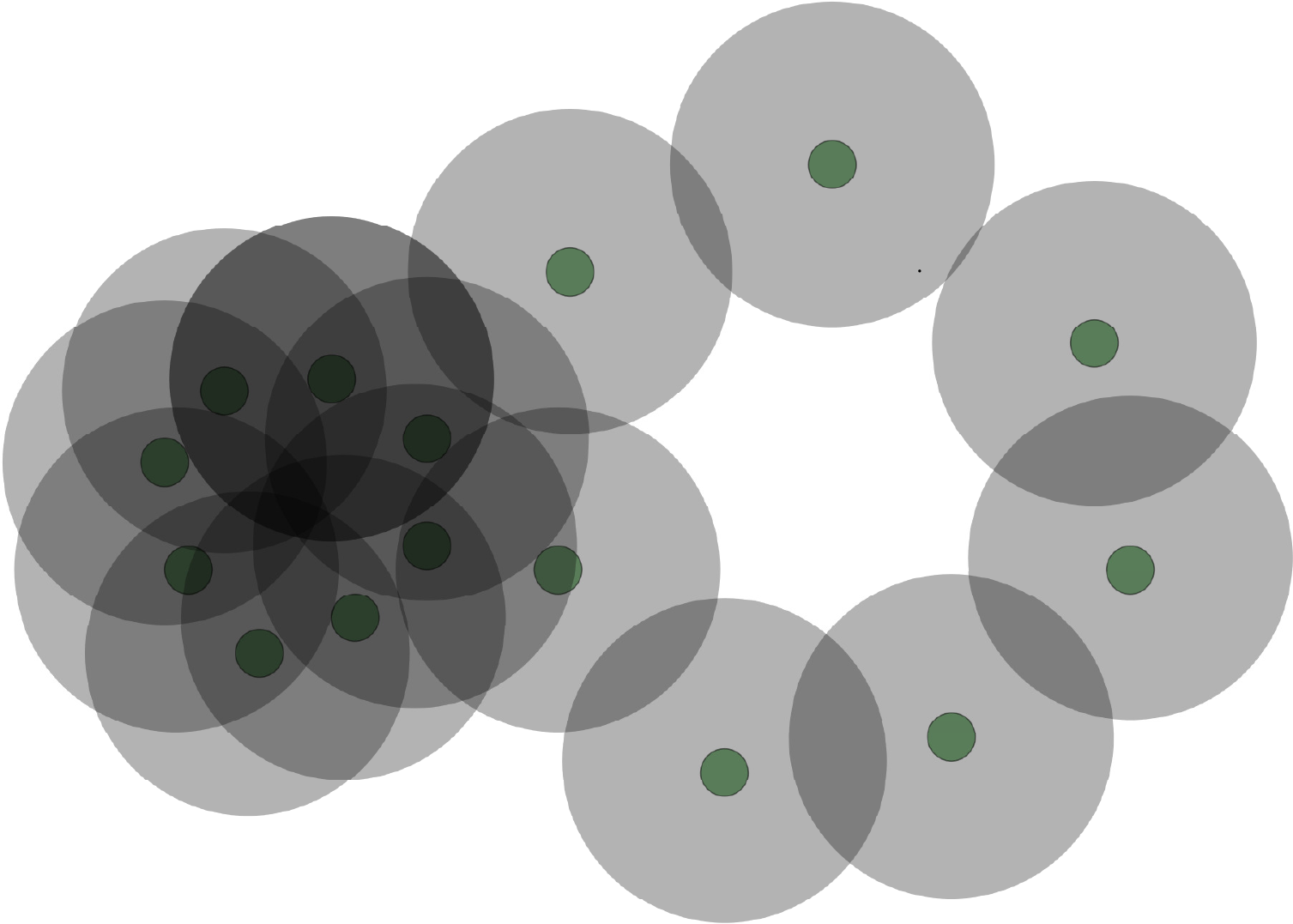}
\end{subfigure}
\caption{At no scale, can the union of balls 
determine the two loops simultaneously.
}
\label{figure:union_of_balls}
\end{figure}

One answer, formalized by the theory of \emph{persistent homology}, is
to generate a summary of every topological feature formed at any scale. Note that $\B_\alpha\subseteq\B_{\alpha'}$ for $\alpha\leq\alpha'$.
The collection $(\B_\alpha)_{\alpha\geq 0}$ together with 
inclusion maps $\B_\alpha\to\B_{\alpha'}$ whenever
$\alpha\leq\alpha'$ forms a \emph{filtration}. The topological properties
of this filtration can be summarized by a finite multi-set of points in the
plane, called the \emph{persistence diagram}. For every topological feature, 
a point in the diagram denotes the range of scales for which the
feature is present. This range is called the \emph{persistence}
of the feature.

Importantly, there are several notions of a distance between two diagrams.
In this paper, we use the \emph{multiplicative bottleneck distance} 
to compare persistence diagrams.
Intuitively, diagrams with small distance have similar features that have 
long persistence; note that this does not imply that their homology is equal
on any scale (see Section~\ref{section:preliminaries} for details).

Another important aspect of the above theory is the possibility of efficiently
computing persistence diagrams. 
The \emph{\Cech complex} $\cech_\alpha$ is the intersection/nerve complex
of the balls of radius $\alpha$ and has the same homological properties
as the union of balls.
The derived \emph{\Cech filtration} is a combinatorial object
yielding the same persistence diagram as the union of balls. 
The problem with
this standard approach is the sheer size of the produced
filtration. It is common practice to only look at the $k$-skeleton
of the \Cech filtration, that is, ignore intersection of $(k+2)$ or
more balls, where $k\leq d$ is another constant parameter. 
That filtration consists of $O(n^{k+1})$ simplices, which is too large
for large data sets, even for relatively small values of $k$.
In this paper, we will present a compact and efficiently computable 
approximation of the \Cech filtration. 

\paragraph{Problem statement and prior work.}
We study the question: 
can we  efficiently compute a discrete filtration
$(A_{\alpha_i})_{i=1,\ldots,\ell}$, called the \emph{approximate filtration},
that is significantly smaller than the \Cech filtration with the 
persistence diagrams of \Cech and approximate filtration being close to each
other? Note that the question defines three quality criteria: the size, computation time
and the quality of the approximation.

Sheehy~\cite{sheehy-rips} showed that an $\eps$-approximate filtration
of size $n(\frac{1}{\eps})^{O(dk)}$ can be computed efficiently for the $k$-skeleton. For $k=d$, the size of this filtration is $n(1/\eps)^{O(d^2)}$.
His result, formulated for the closely related \emph{Vietoris-Rips complex}, also extends to the \Cech complex.
Note that the size is linear in the number of points, which 
is a significant  improvement over $O(n^{k+1})$ especially when $n$ is large and $d$, $k$,
and $1/\eps$ are of moderate size. The technical idea of the approximation
scheme is to compute a net-tree based hierarchical clustering and using it to combine balls whose centers are in the  same cluster into a single ball. 
As a result of this, a large number of ball intersections can be avoided.
Similar ideas with the same guarantees, have appeared
in~\cite{bs-approximating,dfw-gic,ks-wssd}.
The dimension $d$ in the complexity bounds can also be replaced with the
doubling dimension of the metric space, which makes 
the result especially useful for spaces with low doubling dimension.

Choudhary et al.~\cite{ckr-polynomial-dcg} presented an alternative
approximation scheme, where the approximation quality is only $O(d)$,
but the size of the approximation is 
significantly smaller at  $n2^{O(d\log k)}$. 
This result is obtained by tiling the space into \emph{permutahedra}
(a generalization of the hexagonal grid
in $\R^2$) where the diameter of the permutahedra is controlled
by the scale parameter $\alpha$. Every permutahedron containing at least one input point is selected and the nerve of every selected permutahedron is reported as the approximation. The improvement in size
stems from the fact that at most $(d+1)$-permutahedra can intersect
in $d$ dimensions which upper bounds the number of simplices in the nerve.
The approximation quality has been improved to $O(\sqrt[4]{d})$
in subsequent work, with size  $n2^{O(d\log k)}$~\cite{ckr-barycentric}.
In~\cite{ckr-polynomial-dcg}, it is also shown that, for any $\eps < 1/\log^{1+c} n$ with $c\in (0,1)$, any $\eps$-approximate filtration must have a size of at least $n^{\Omega(\log \log n)}$.
\paragraph{Our contribution.}
We derive an approximation scheme of size 
\[n\, 2^{O(d\log d+dk)} \left(\frac{1}{\eps}\right)^{O(d)}\]
for an $\eps$-approximation of the \Cech complex for $\eps\in(0,1]$.
For constant dimension $d$, this simplifies to $n\left(\frac{1}{\eps}\right)^{O(d)}$,
improving all previous results for the Euclidean case. 
We achieve this approximation based on the techniques devised in
\cite{ckr-barycentric,ckr-polynomial-dcg}.
On a fixed scale $\alpha$, we tile the space with a cubical grid,
carefully select a subset of them and take their nerve as our approximation complex.
The novelty lies in the resolution of the tiling. 
Unlike in~\cite{ckr-polynomial-dcg} where the diameter of a permutahedron
was in the order of $\alpha$, we use a much smaller diameter of roughly
$\eps\alpha$, and we select a hypercube in the approximation
if its center is $\alpha$-close to an input point. This is equivalent to
approximating the union of $\alpha$-balls $\B_\alpha$
with cubical ``pixels'' at resolution $\eps$ (see Figure~\ref{figure:digital_2d}).
The number of selected pixels can be bounded by  
$n\left(\frac{d}{\eps}\right)^d$ which is significantly larger 
than the number of points. 
Perhaps counter-intuitively, its nerve is still smaller
than in previous approaches, because each vertex of the nerve is only
incident to $2^{O(dk)}$ simplices in the $k$-skeleton.
This technique resembles previous work in computational geometry
where adding points (referred to Steiner points) helps to 
reduce the size of triangulations~\cite{beg-provably}.
Hudson et al.~\cite{hmos-meshing} used similar techniques
to compute an $\eps$-approximation 
of the \Cech complex of size $n/\eps^{O(d^2)}$.

Our results pave the way for important extensions whose proofs 
we omit for the sake of brevity of the presentation. 
We only announce them to underline the importance of our techniques
and postpone the technical discussion to 
an extended version of this manuscript:

\begin{itemize}
\item Our approximation from above is not a filtration, but a simplicial tower,
that is a sequence of simplicial complexes connected by simplicial maps.
Our technique allows, however, for a definition of an actual approximate
filtration of the same size. This filtration has the additional property
of being a flag complex at each scale, which has computational benefits
as its persistence diagram can be computed using faster algorithms.%
\footnote{e.g., Ripser by U.Bauer: \url{https://github.com/Ripser/ripser}}

\item The factor $2^{O(d\log d+dk)}$ in our size bound can be further reduced
to $2^{O(d\log d)}$ by replacing the cubical grid by a permutahedral grid~\cite{ckr-polynomial-dcg}.
A consequence of this result is that when $d = \Theta(\log n)$ and $\eps = 1/\log^{1+c} n$, for some $c \in (0,1)$, 
our approximation with the permutahedral grid has a size of $n^{O(\log \log n)}$ matching the lower bound of~\cite{ckr-polynomial-dcg}.
\end{itemize}  

The result of this work are proved in two steps. We first devise a simple
approximation scheme which approximates the union of balls by pixels
of a carefully chosen size on each scale. It follows with standard techniques
that the approximation quality is $(1+\eps)$. However, the total size of
the approximation has an additional factor that depends on the spread
of the point set. To achieve spread-independence, we use the observation
that when we increase the scale from $\alpha$ to $(1+\eps)\alpha$, the growth
in radius of a ball does only affect the topology of the union of balls
if the enlarged balls gives rise to a new intersection pattern with other 
balls.
In other words, we can delay the growth of a ball (and hence, avoiding
resampling the balls by pixels) if $\alpha$ is not in a critical range
of scales. Moreover, we can bound the critical range of all balls
efficiently through a \emph{well-separated pair decomposition} (WSPD)
of the point set. This suffices to reduce the total number of pixels used
in the approximation to $n(\frac{1}{\eps})^{O(d)}$.
One technical difficulty arising from this approach is that on a fixed scale,
the approximation complex consists of cubes of different sizes.

\paragraph{Outline of the paper}
We review the topological concepts needed for our result 
in Section~\ref{section:preliminaries}.
We present the simple, spread-dependent approximation in
Section~\ref{section:digitization}, and extend it to a spread-independent
approximation in Section~\ref{section:spreadremoval}. We conclude
in Section~\ref{section:conclusion}.

\section{Preliminaries}
\label{section:preliminaries}

We give a short introduction to topological concepts that are essential
for our results.
For more details, we refer to standard references such as~\cite{bss-metrics,cdgo-sspm,eh-book,hatcher,munkres}.

\subparagraph*{Simplicial complexes and simplicial maps}

A \emph{simplicial complex} $K$ on a finite set of elements $S$ 
is a collection of subsets $\{\sigma\subseteq S\}$ called \emph{simplices} 
such that each subset $\tau\subset\sigma$ is also in $K$. 
In other words, simplicial complexes are closed under taking subsets.
We say that a simplex $\sigma\in K$ has dimension $k:=|\sigma|-1$,
in which case $\sigma$ is called a \emph{$k$-simplex}.
A simplex $\tau$ is a \emph{face} of $\sigma$ if $\tau\subseteq\sigma$. 
The \emph{$k$-skeleton} of $K$ consists of
all simplices of $K$ of dimension $k$ or lower.
For instance, the $1$-skeleton of $K$ is a graph 
defined by its $0$-simplices and $1$-simplices.

For a point set $P\subset\R^d$ and a real number $\alpha\ge 0$,
the \emph{\Cech complex} $\cech_\alpha$ on $P$ at scale $\alpha$ is the collection of all 
simplices $\sigma\subseteq P$ such that there is a non-empty
intersection between the Euclidean balls of radius $\alpha$ centered at
the points of $\sigma$. 
Equivalently, $\sigma\in\cech_\alpha$ if the minimum enclosing ball
of $\sigma$ has radius at most $\alpha$.
We denote the union of $\alpha$-balls as $\B_\alpha$.
See Figure~\ref{figure:only_cech} for an example.

\begin{figure}[h]
\centering
\includegraphics[width=0.3\columnwidth]{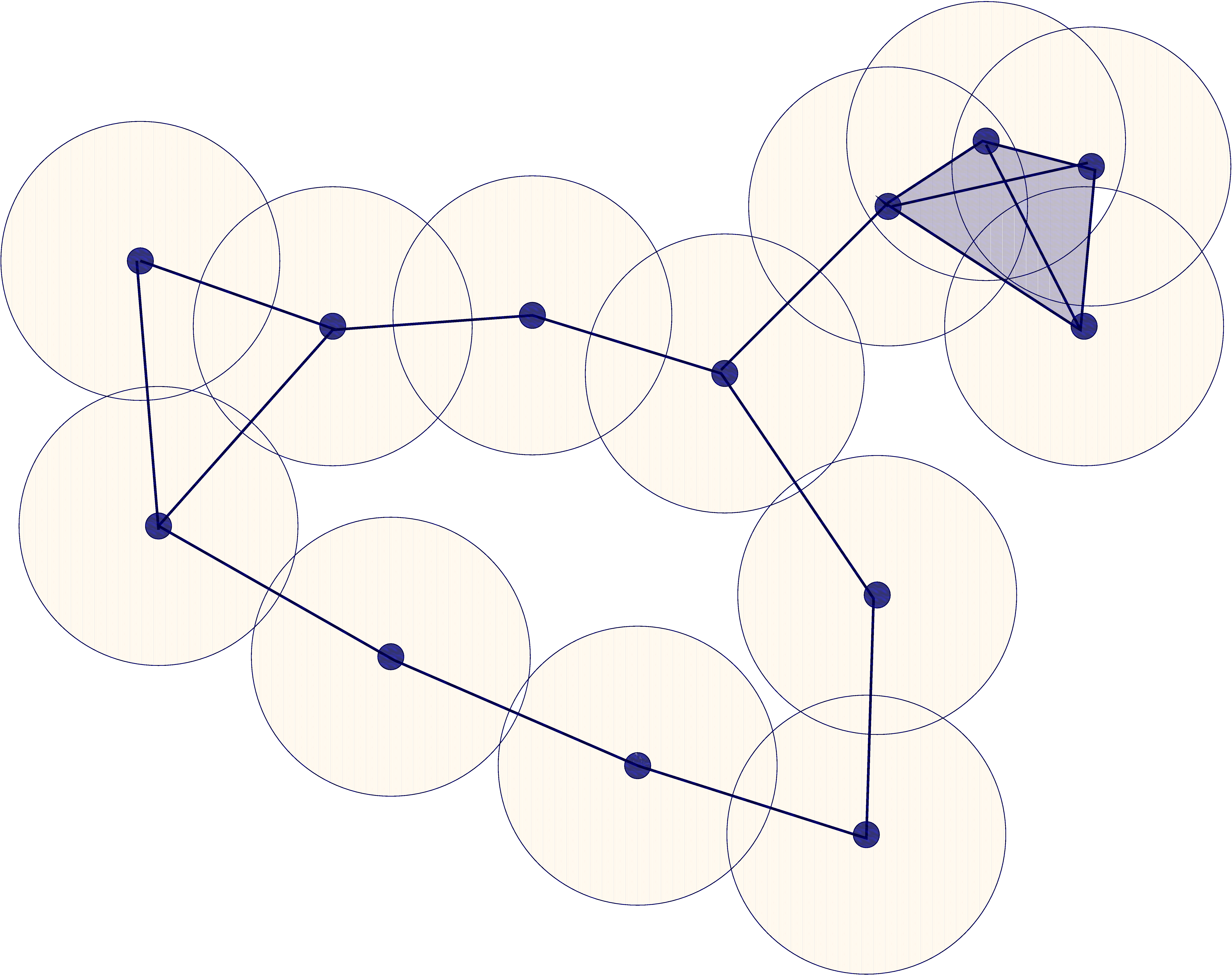}
\caption{
$\B_\alpha$ and $\cech_\alpha$ for some $\alpha>0$.
}
\label{figure:only_cech}
\end{figure}

\begin{figure}[t!]
\centering
\begin{subfigure}[t]{0.4\columnwidth}
\centering
\includegraphics[width=0.6\textwidth]{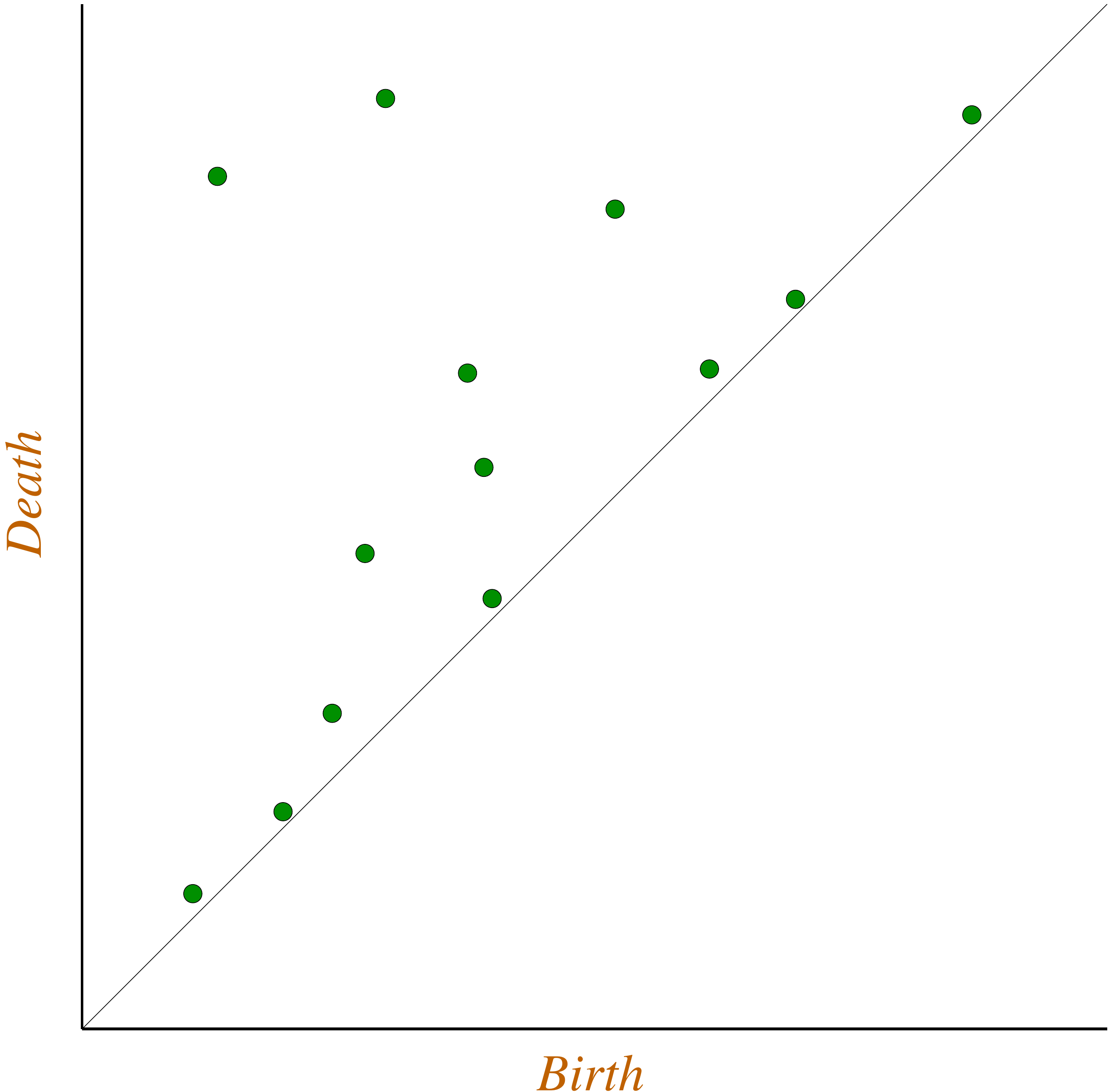}
\end{subfigure}
\begin{subfigure}[t]{0.4\columnwidth}
\centering
\includegraphics[width=0.6\textwidth]{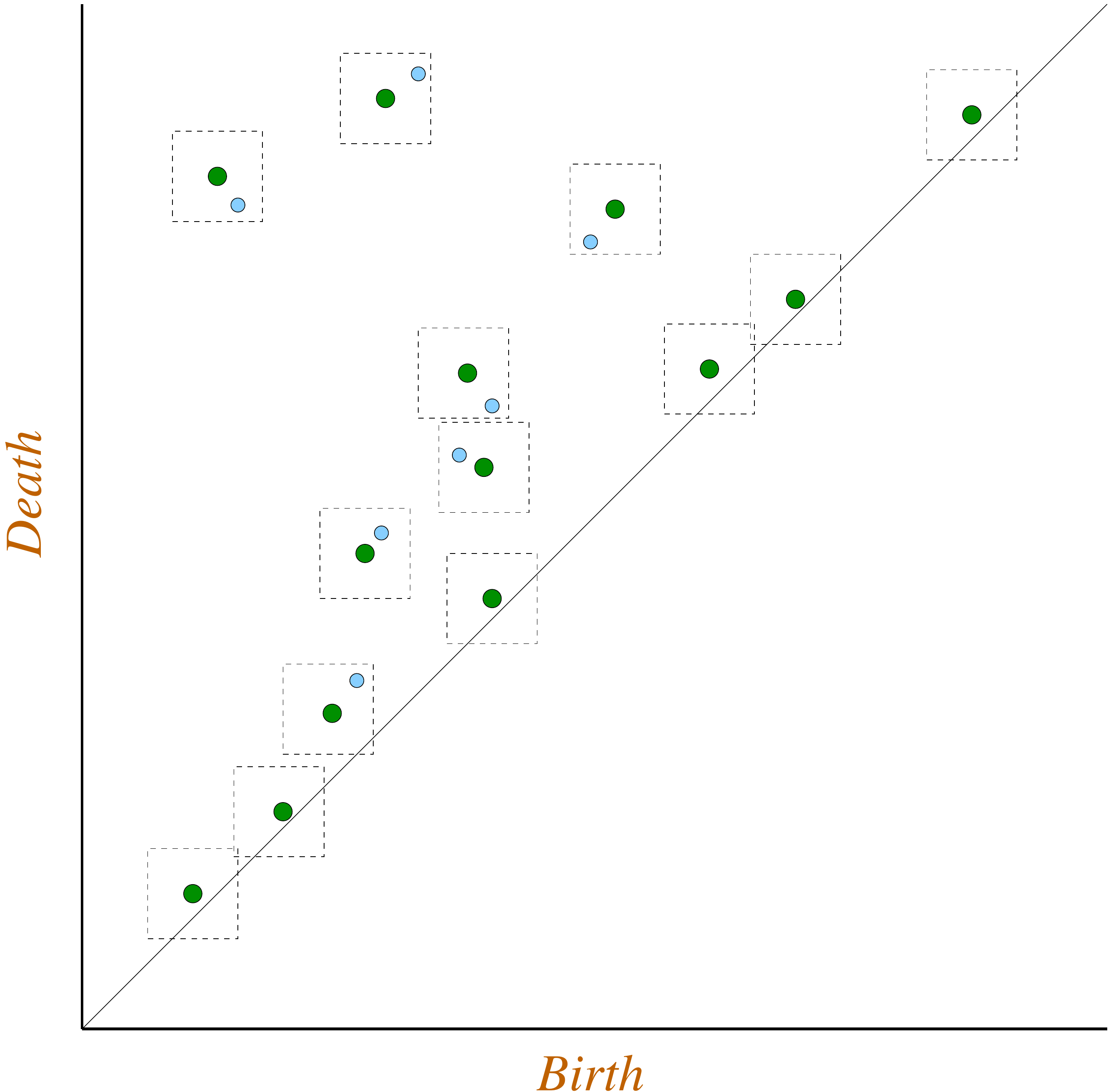}
\end{subfigure}
\caption{A persistence diagram on the left, and its approximation 
on the right.
Points within boxes form a partial matching.}
\label{figure:pd}
\end{figure}

A simplicial complex is called a \emph{flag complex}, 
if it has the following property:
whenever a set $\{p_0,\ldots,p_k\}\subseteq P$ has the property
that every $1$-simplex $\{p_i,p_j\}$ is in the complex, then the
$k$-simplex $\{p_0,\ldots,p_k\}$ is also in the complex.

A simplicial complex $K'$ is a \emph{subcomplex} of $K$ if $K'\subseteq K$.
For instance, $\cech_{\alpha}$ is a subcomplex of $\cech_{\alpha'}$ for 
$0\le\alpha\leq\alpha'$. 
Let $L$ be another simplicial complex.
Let $\hat{\simplicialmap}$ be any map that assigns to each vertex of $K$
a vertex of $L$.
A \emph{simplicial map} is a map $\simplicialmap:K\rightarrow L$ defined
using some vertex map $\hat{\simplicialmap}$, such that
for every simplex $\sigma=\{p_0,\ldots,p_k\}$ in $K$, the set of vertices
$\simplicialmap(\sigma):=\{\hat{\simplicialmap}(p_0),\ldots,\hat{\simplicialmap}(p_k)\}$ 
is a simplex of $L$. 
For $K'\subseteq K$, the inclusion map $\inc:K'\rightarrow K$
is an elementary example of a simplicial map. 
Note that any simplicial map is completely specified
by its action on the vertices of the domain.

\paragraph{Homotopy and Nerve theorem}
Let $f_1,f_2:X\rightarrow Y$ be two continuous maps between 
topological spaces $X,Y$.
A continuous function 
$H:X\times[0,1]\rightarrow Y$ is said to be a \emph{homotopy} 
between $f_1$ and $f_2$, 
if $H(x,0)=f_1(x)$ and $H(x,1)=f_2(x)$. 
In this case, $f_1$ and $f_2$ are said to be \emph{homotopic} to each other, 
and we record this relation as $f_1\homt f_2$.
Informally, the second parameter of $H$ can be interpreted as time,
so that $H$ is a continuous deformation of $f_1$ into $f_2$, 
as time varies from 0 to 1.

$X$ and $Y$ are said to be \emph{homotopy equivalent} if there exist 
continuous maps $f:X\rightarrow Y$ and $g:Y\rightarrow X$ such that 
$f\circ g\homt \id_Y$ and $g\circ f\homt \id_X$, where $\id_X$ and $\id_Y$ 
are the identity functions on $X$ and $Y$, respectively.
We record this relation as $X\homt Y$.
Intuitively, two spaces are homotopy equivalent if they can be continuously
transformed into one another.

Let $U:=\{U_1,\ldots,U_m\}$ denote a finite collection of sets.
The \emph{nerve} of $U$ is an abstract simplicial complex $\nrv(U)$ 
consisting of simplices
corresponding to non-empty intersections of elements of $U$, that is,
\[
\nrv(U):=\{V\subseteq U \mid  \bigcap_{U_i\in V} U_i \neq \varnothing\}.
\]
For instance, the \Cech complex at scale $\alpha$ 
is $\cech_\alpha=\nerve(\B_\alpha)$.

\begin{theorem}[nerve theorem]
\label{theorem:nervethm}
Let us denote by $U:=\{U_1,\ldots,U_m\}$ a finite collection of closed sets 
in Euclidean space, such that all non-empty intersections of the form
$\{\cap_{U_i\in V} U_i \neq \varnothing \mid V\subseteq U  \}$
are contractible.
Then, $\nrv(U)\homt \bigcup_{i=1}^{m} U_i$, that is, they are homotopy 
equivalent (see \cite{bjorner-detailed,bjorner-small,borsuk-nerve,walker-hom} 
for more general versions of the theorem).
\end{theorem}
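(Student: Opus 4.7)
The nerve theorem is classical; my plan is to connect $|\nrv(U)|$ and $\bigcup_i U_i$ through an intermediate ``Mayer--Vietoris blow-up'' space
\[
Z := \bigsqcup_{\sigma \in \nrv(U)} \Delta^{|\sigma|-1} \times \bigcap_{U_i \in \sigma} U_i \,\Big/\, \sim,
\]
where the identifications glue face inclusions in the simplex factor with the corresponding inclusions of iterated intersections. There are two natural projections: $\pi_N : Z \to |\nrv(U)|$ forgetting the point coordinate, and $\pi_U : Z \to \bigcup_i U_i$ forgetting the simplex coordinate. The strategy is to show that both projections are homotopy equivalences; composing one with a homotopy inverse of the other then yields $|\nrv(U)| \homt \bigcup_i U_i$.

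For $\pi_N$, I would argue by induction over the simplices of $\nrv(U)$ ordered by dimension. The preimage of a closed simplex $\sigma$ is $\Delta^{|\sigma|-1} \times \bigcap_{U_i \in \sigma} U_i$, and by the contractibility hypothesis this deformation retracts onto $\Delta^{|\sigma|-1}$. Attaching such pieces along cofibrations preserves homotopy type, so iterating shows that $Z$ collapses onto $|\nrv(U)|$.

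For $\pi_U$, the fiber over a point $x \in \bigcup_i U_i$ is the geometric realization of the subcomplex of $\nrv(U)$ consisting of simplices $\sigma$ with $x \in \bigcap_{U_i \in \sigma} U_i$, which is itself a full simplex (spanned by the $U_i$ containing $x$) and hence contractible. Using a partition of unity $\{\phi_i\}$ subordinate to a slight open thickening of $\{U_i\}$, I would construct a section $s : \bigcup_i U_i \to Z$ encoding $x$ together with the barycentric coordinates $\phi_i(x)$ in the appropriate simplex piece, and verify $\pi_U \circ s = \id$ and $s \circ \pi_U \homt \id_Z$ via a straight-line homotopy in the simplex factor that stays inside the contractible fiber over each point.

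The main obstacle is justifying the closed-set hypothesis, since the cleanest formulation of the nerve theorem is typically stated for \emph{open} good covers. To reduce to that setting, I would thicken each $U_i$ to an open neighborhood $U_i^{\delta}$; by compactness and the finiteness of the collection, one can choose $\delta>0$ small enough that the intersection pattern is preserved and each $\bigcap_{i \in \sigma} U_i^{\delta}$ deformation retracts onto $\bigcap_{i \in \sigma} U_i$, hence remains contractible. Then $\{U_i^\delta\}$ is an open good cover whose union deformation retracts onto $\bigcup_i U_i$ and whose nerve equals $\nrv(U)$, allowing the combinatorial argument above to apply.
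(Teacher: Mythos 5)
The paper gives no proof of this statement at all --- it is quoted as a classical result with citations --- so there is no internal argument to compare against; what matters is whether your proof is sound, and it has a genuine gap exactly at the reduction from closed to open covers. Your blow-up space $Z$ with the two projections is the standard Segal/Mayer--Vietoris argument for \emph{open} good covers, and those parts are essentially fine (one small caveat: contractibility of $\bigcap_{U_i\in\sigma}U_i$ gives a homotopy equivalence $\Delta^{|\sigma|-1}\times\bigcap_{U_i\in\sigma}U_i \homt \Delta^{|\sigma|-1}$, not necessarily a deformation retraction, but the skeleton-wise gluing-lemma induction goes through with equivalences; also note that contractible fibers of $\pi_U$ alone prove nothing --- it is really your partition-of-unity section that carries that half).

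The step that fails is the $\delta$-thickening. For arbitrary closed sets both claims you need are false. First, preservation of the intersection pattern uses compactness, which is not in the hypothesis: two disjoint closed sets such as a hyperbola and its asymptote have intersecting $\delta$-neighbourhoods for every $\delta>0$. Second, and more seriously, a contractible closed (even compact) set need not be a neighbourhood retract, so $\bigcap_{i\in\sigma}U_i^{\delta}$ need not deformation retract onto, or even be homotopy equivalent to, $\bigcap_{i\in\sigma}U_i$ for any $\delta>0$; comb-space-like sets are the standard obstruction. Indeed, without some niceness beyond contractibility of intersections (convexity, ANR/CW-subcomplex hypotheses, or openness) the closed-cover statement itself is not true in this generality: one can find two contractible compact subsets of $\R^2$ meeting in a single point whose union (a double comb) is weakly contractible but not contractible, while the nerve is a single edge. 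So your argument cannot be completed at the stated level of generality; it does become correct for the covers the paper actually uses (Euclidean balls and cubical pixels), since there every nonempty intersection is convex, and then both your thickening claims and the retraction onto the original intersections hold.
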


In particular, if $U$ is a collection of convex sets, then all non-empty
common intersections are contractible, so that the nerve theorem
applies to $U$.
As an example, $\cech_\alpha\homt\B_\alpha$.

\subparagraph*{Filtrations and simplicial towers}
Let $I\subseteq\R$ be a set of real values which we refer to as \emph{scales}.
A \emph{filtration} is a collection of simplicial complexes indexed by $I$,
$(K_\alpha)_{\alpha\in I}$ such that $K_\alpha\subseteq K_\alpha'$ 
for all $\alpha\leq\alpha'\in I$. 
For instance, $(\cech_\alpha)_{\alpha\geq 0}$ is the \emph{\Cech filtration}.
A \emph{(simplicial) tower} is a sequence $(K_\alpha)_{\alpha\in J}$ of simplicial 
complexes with $J$ being a discrete set (for instance $J=\{2^k\mid k\in\Z\}$),
together with simplicial maps 
$\simplicialmap_\alpha:K_\alpha\rightarrow K_{\alpha'}$ between
complexes at consecutive scales.

A tower of the form $(K_\alpha)_{\alpha\in J}$
on simplicial maps $g^{.,.}$ with index set $J=\{0\le j_1<j_2<\ldots\}$
can always extend the tower to $(K_\alpha)_{\alpha\ge 0}$,
by setting $K_\alpha=K_{j_i}$ 
and $g^{\alpha,j_{i+1}}=g^{j_i,j_{i+1}}$ 
for all $\alpha\in[j_i,j_{i+1})$.
We call this technique of extending towers the 
\emph{standard filling technique} in our paper for simplicity.
The approximation constructed in this paper will be an 
example of such a tower.

We say that a simplex $\sigma$ is \emph{included} in the tower at scale 
$\alpha'$ if $\sigma\in K_{\alpha'}$ is not in the image of the map
$\simplicialmap_{\alpha}:K_\alpha\rightarrow K_{\alpha'}$,
where $\alpha$ is the scale preceding $\alpha'$ in the tower.
The \emph{size} of a tower is the number of simplices included over all scales.
If a tower arises from a filtration, its size is simply the size of the 
largest complex in the filtration (or infinite, if no such complex exists).
However, this is not true in general for
simplicial towers, since simplices can collapse in the tower and the size of the
complex at a given scale may not take into account the collapsed simplices
which were included at earlier scales in the tower.

\subparagraph*{Persistence diagram and Interleavings}
A collection of vector spaces $(V_\alpha)_{\alpha\in I}$ connected with homomorphisms (linear maps)
$\linearmap_{\alpha_1,\alpha_2}:V_{\alpha_1}\rightarrow V_{\alpha_2}$ 
is called a \emph{persistence module}, if $\linearmap_{\alpha,\alpha}$ is 
identity on $V_\alpha$ for all $\alpha\in I$, and 
$\linearmap_{\alpha_2,\alpha_3}\circ\linearmap_{\alpha_1,\alpha_2}
=\linearmap_{\alpha_1,\alpha_3}$ 
for all $\alpha_1\le\alpha_2\le\alpha_3\in I$ for the index set $I$.

We can generate persistence modules using simplicial complexes.
Given a simplicial tower $(K_\alpha)_{\alpha\in I}$, we can fix some base
field $\field$ to obtain a sequence $(H(K_\alpha))_{\alpha\in I}$
of vector spaces with linear maps $\overline{\simplicialmap}^\ast$, 
that forms a persistence module.
This is true because of the functorial properties of homology~\cite{munkres}.
The same construction is also applicable to filtrations.

Under certain tameness conditions,
persistence modules admit a decomposition into a collection of 
intervals of the form $[\alpha,\beta]$
(with $\alpha,\beta\in I$).
These intervals can be represented as a set of points in the plane, 
called the \emph{persistence diagram}, where each interval $[\alpha,\beta]$
is simply represented as the point $(\alpha,\beta)$.
See Figure~\ref{figure:pd} for an example.
The persistence diagram of a persistence module characterizes the 
module uniquely up to isomorphism.
If the persistence module is generated by a simplicial complex,
each point $(\alpha,\beta)$ in the diagram corresponds 
to a homological feature (a ``hole'') that comes into existence at 
complex $K_\alpha$ and persists until it disappears at $K_\beta$. 

Two persistence modules $(V_\alpha)_{\alpha\ge 0}$ and $(W_\alpha)_{\alpha\ge 0}$
with respective linear maps $\phi_{\cdot,\cdot}$ and $\psi_{\cdot,\cdot}$
are said to be \emph{(multiplicatively) strongly $c$-interleaved} 
if there exist a pair of families of 
linear maps $\gamma_\alpha:V_{\alpha}\rightarrow W_{c\alpha}$ and 
$\delta_\alpha:W_{\alpha}\rightarrow V_{c\alpha}$ for $c>0$,
such that Diagram~\eqref{diag:strong_diag} 
commutes for all $0\le \alpha\le \alpha'$ (the subscripts of the maps
are excluded for readability).
In such a case, the persistence diagrams of the two modules are said 
to be $c$-approximations of each other in the sense of~\cite{ccggo-proximity}.
More precisely, there is a partial matching between the points of the 
two diagrams, after a suitable logarithmic scaling of the diagrams. 
Further details can be found in~\cite{buchet-thesis}.
See Figure~\ref{figure:pd} for an example.

\begin{equation}
\label{diag:strong_diag}
\xymatrix{
	V_{\frac{\alpha}{c}} \ar[rrr]^{\phi} \ar[rd]^{\gamma} & & & V_{c\alpha'}    &  & V_{c\alpha} \ar[r]^{\phi} & V_{c\alpha'} 
	\\
	& W_\alpha \ar[r]^{\psi} & W_{\alpha'} \ar[ru]^{\delta} &                 & W_\alpha \ar[r]^{\psi} \ar[ru]^{\delta} & W_{\alpha'} \ar[ru]^{\delta}
	\\ 
	& V_\alpha \ar[r]^{\phi} & V_{\alpha'} \ar[rd]^{\gamma} &                 & V_\alpha \ar[r]^{\phi} \ar[rd]^{\gamma} & V_{\alpha'} \ar[rd]^{\gamma} 
	\\ 
	W_{\frac{\alpha}{c}} \ar[rrr]^{\psi} \ar[ru]^{\delta} & & & 
	W_{c\alpha'}    &  & W_{c\alpha} \ar[r]^{\psi} & W_{c\alpha'}\\
}
\end{equation}

Next, we discuss a special case
that relates to the equivalence of persistence modules~\cite{cz-computing,handbook}.
Two persistence modules 
$\V=(V_\alpha)_{\alpha\in I}$ and $\W=(W_\alpha)_{\alpha\in I}$
on linear maps $\phi,\psi$ respectively are isomorphic,
if there exists an isomorphism $f_\alpha:V_\alpha\rightarrow W_\alpha$ for
each $\alpha\in I$ such that the diagram
\begin{eqnarray}
\label{equation:persisistence_equiv}
\xymatrix{
	\dots \ar[r] & V_\alpha \ar[r]^{\phi} \ar[d]^{f_\alpha} 
	& V_\beta \ar[r]  \ar[d]^{f_\beta} & \dots 
	\\
	\dots \ar[r] & W_\alpha \ar[r]^{\psi} & W_\beta \ar[r] & \dots 
}
\end{eqnarray}
commutes for all $\alpha\le \beta \in I$.
Isomorphic persistence modules have identical persistence barcodes.

\section{A simple digitization scheme}
\label{section:digitization}

In this section we describe our first approximation scheme which is
conceptually simple and lays down the foundation for the more technical
approximation scheme that we present in Section~\ref{section:spreadremoval}.
We describe our scheme for a set of $n$ points $P$ in 
$d$-dimensional Euclidean space. 
We assume throughout that $\eps\in(0,\frac{1}{5}]$.

\paragraph{A cubical subdivision}

Let $J:=\{2^{i} \mid i\in\Z \}$.
For any $\alpha\in J$, we define the lattice 
$L_\alpha:=\left(\frac{\eps\alpha}{4\sqrt{d}}\right)\Z^d$ 
as the $\Z^d$ lattice whose basis vectors have been scaled by
$\frac{\eps\alpha}{4\sqrt{d}}$.
We define the \emph{pixels} of this lattice as the smallest $d$-dimensional cubes 
whose vertices are the lattice points.
The diameter of the pixels of $L_\alpha$ is at most $\eps\alpha/4$. 
Our approximation complexes are built as nerves of the union
of pixels of $L_\alpha$.\footnote{It seems simpler at first to define
the approximation as a cubical complex directly instead of taking the nerve, 
but it is more 
complicated to construct the chain maps connecting different scales
in this cubical setup.}

Each pixel of lattice $L_\alpha$ is fully contained
in some pixel of $L_\beta$ when $\beta>\alpha$ with $\beta,\alpha$ both in $J$.
This also implies that each pixel center of $L_\alpha$ has a unique
nearest-neighbor among the pixel centers in $L_\beta$.

\subsection{Approximation complex}
\label{subsection:appcpx}

Let $B(p,\alpha)$ denote the $d$-dimensional Euclidean ball of radius $\alpha$,
centered at any input point $p\in P$.
We denote by
\[
\B_\alpha:= \left( B(p,\alpha) \right)_{p\in P},
\]
the set of $\alpha$-balls centered at the input points. 
Naturally, the \Cech complex on $P$ at scale $\alpha$ is 
the nerve of $\B_\alpha$, that is, 
$\cech_\alpha=\nrv(\B_\alpha)$.

Let $I$ denote the set of scales
\[
I:=\{ \alpha_k:=(1+\eps)^k \mid k\in \Z \}.
\]
We now define our approximation complex $\complex_\alpha$ for $\alpha\in I$.
Consider the maximal $\beta\in J$ such that $\beta\leq\alpha$.
As discussed, there exists a cubical subdivision based on the lattice
$L_\beta=(\frac{\eps\beta}{4\sqrt{d}})\Z^d$.
Let $\pixels_\alpha$ denote the set of pixels of that subdivision whose
center lies in $\B_\alpha$, that is, which are in distance at most $\alpha$
to a point in $P$. We define $\complex_\alpha:=\nrv(\pixels_\alpha)$.
Also, we write $|\pixels_\alpha|$ for the union of all pixels in $\pixels_\alpha$.
See Figure~\ref{figure:digital_2d} for an illustration
of $\B_\alpha$ and $|\pixels_\alpha|$.

\begin{figure}[t!]
\centering
\begin{subfigure}[t]{0.5\columnwidth}
\centering
\includegraphics[width=0.6\textwidth]{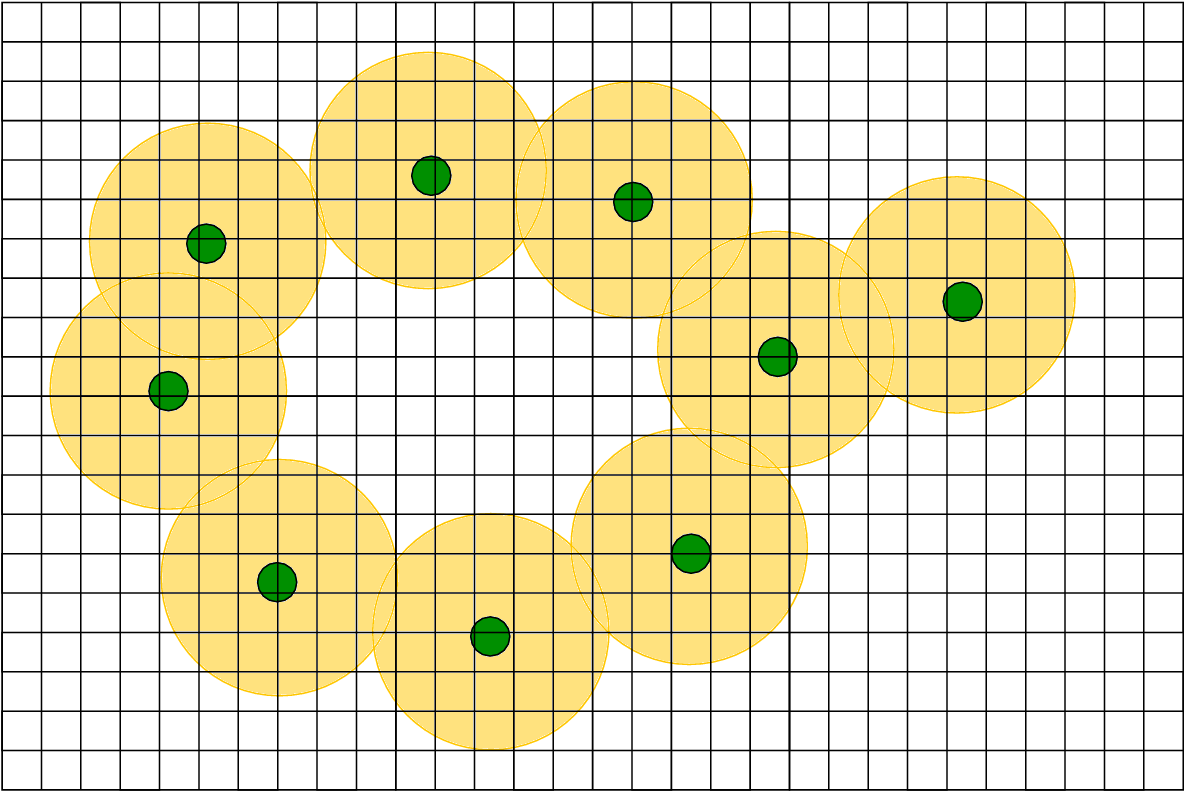}
\caption{
A collection of balls centered
at the input points...
}
\end{subfigure}%
\begin{subfigure}[t]{0.5\columnwidth}
\centering
\includegraphics[width=0.6\textwidth]{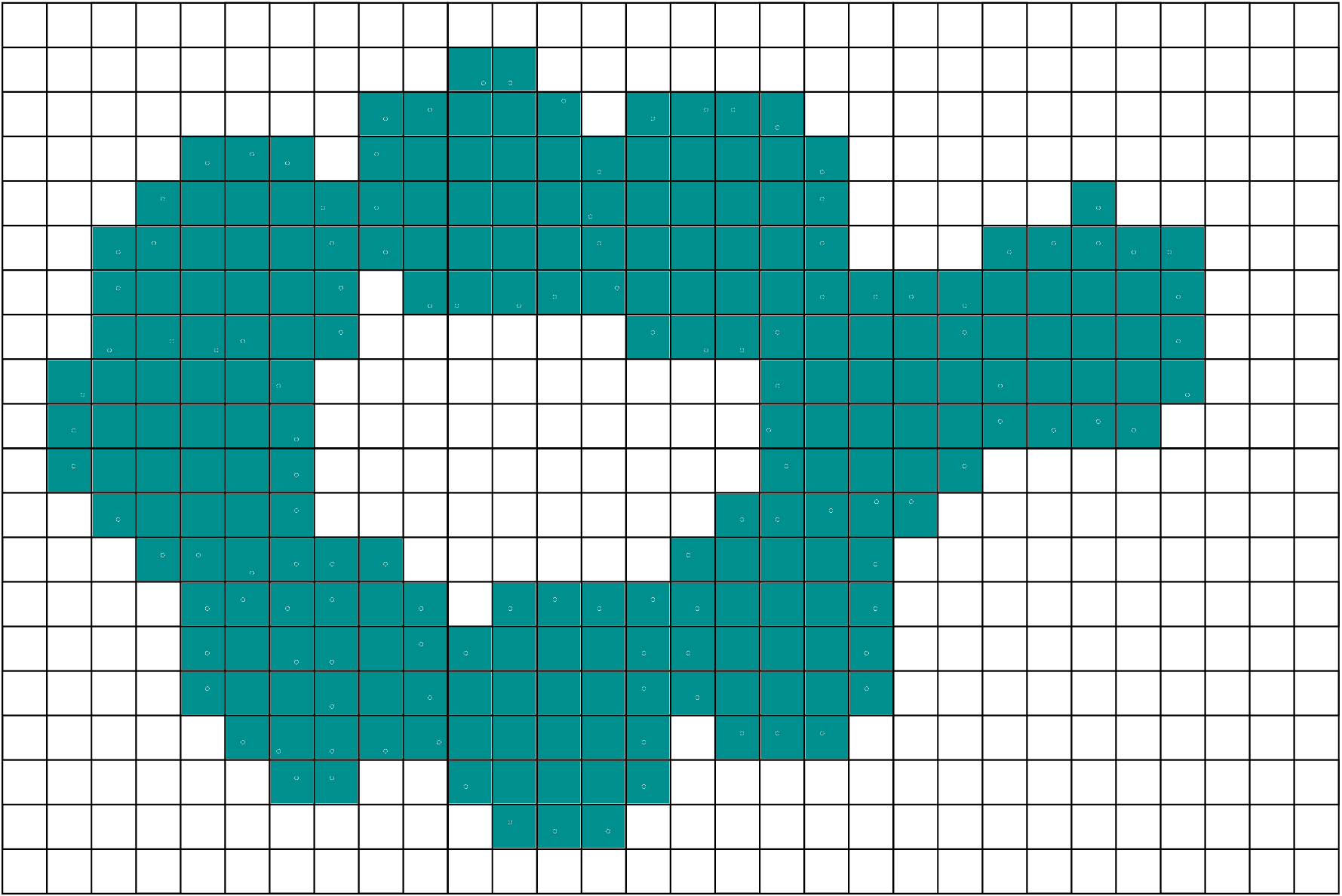}
\caption{
...and the corresponding pixels.
}
\end{subfigure}
\caption[Digitization]{An example of digitization in the plane.}
\label{figure:digital_2d}
\end{figure}

\begin{lemma}[Sandwich lemma]
\label{lemma:digital_sandwich}
For any $\alpha\in I$, 
\[
|\pixels_\alpha|\subseteq \B_{(1+\eps/2)\alpha}\subseteq |\pixels_{(1+\eps)\alpha}|.
\]
\end{lemma}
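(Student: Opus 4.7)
The plan is to prove both inclusions by a straightforward triangle inequality, using the fact that each pixel of $L_\beta$ sits inside a ball of radius $\eps\beta/8$ (half the diagonal) around its center. The main thing to monitor is the relationship between the lattice scales on the two sides of each inclusion.

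\textbf{Setup.} For any $\gamma\in I$, write $\beta(\gamma)$ for the maximal element of $J$ with $\beta(\gamma)\le\gamma$, so that $\pixels_\gamma$ is formed out of pixels of $L_{\beta(\gamma)}$. Each pixel of $L_{\beta(\gamma)}$ is a cube of side length $\eps\beta(\gamma)/(4\sqrt{d})$, so its diameter is $\eps\beta(\gamma)/4$ and the distance from its center to any point in the pixel is at most $\eps\beta(\gamma)/8$. In particular, $\beta(\gamma)\le\gamma$ gives a half-diameter bound of $\eps\gamma/8$.

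\textbf{First inclusion.} Given $x\in|\pixels_\alpha|$, pick a pixel $s\in\pixels_\alpha$ with $x\in s$ and let $c$ be its center. By definition of $\pixels_\alpha$, there exists $p\in P$ with $\|c-p\|\le\alpha$. Using the half-diameter bound, $\|x-c\|\le\eps\beta(\alpha)/8\le\eps\alpha/8$, so by the triangle inequality
\[
\|x-p\|\le\|x-c\|+\|c-p\|\le\tfrac{\eps}{8}\alpha+\alpha=\bigl(1+\tfrac{\eps}{8}\bigr)\alpha\le\bigl(1+\tfrac{\eps}{2}\bigr)\alpha,
\]
so $x\in B(p,(1+\eps/2)\alpha)\subseteq \B_{(1+\eps/2)\alpha}$.

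\textbf{Second inclusion.} Given $x\in\B_{(1+\eps/2)\alpha}$, fix $p\in P$ with $\|x-p\|\le(1+\eps/2)\alpha$. The point $x$ lies in some pixel $s'$ of the lattice $L_{\beta((1+\eps)\alpha)}$ used to build $\pixels_{(1+\eps)\alpha}$; let $c'$ be its center. Since $\beta((1+\eps)\alpha)\le(1+\eps)\alpha$, the half-diameter bound gives $\|c'-x\|\le\eps(1+\eps)\alpha/8$, and then
\[
\|c'-p\|\le\|c'-x\|+\|x-p\|\le\tfrac{\eps(1+\eps)}{8}\alpha+\bigl(1+\tfrac{\eps}{2}\bigr)\alpha.
\]
A short computation reduces the inequality $\tfrac{\eps(1+\eps)}{8}+1+\tfrac{\eps}{2}\le 1+\eps$ to $(1+\eps)/8\le 1/2$, which holds comfortably under our standing hypothesis $\eps\le 1/5$. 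Hence $\|c'-p\|\le(1+\eps)\alpha$, so $s'\in\pixels_{(1+\eps)\alpha}$ and $x\in|\pixels_{(1+\eps)\alpha}|$.

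The only subtle point is the second inclusion: one must be careful that, when passing from scale $\alpha$ to scale $(1+\eps)\alpha$, the lattice $L_{\beta((1+\eps)\alpha)}$ may be strictly coarser than $L_{\beta(\alpha)}$, which enlarges the pixel diameter by a factor of up to $(1+\eps)$. The arithmetic above shows that this enlargement is still absorbed by the slack $\eps/2$ between $(1+\eps/2)\alpha$ and $(1+\eps)\alpha$, as long as $\eps$ is bounded away from a small constant, which is exactly what the assumption $\eps\le 1/5$ provides.
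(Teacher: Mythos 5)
Your proof is correct and follows essentially the same route as the paper's: both inclusions are established by the triangle inequality using the pixel-diameter bound, with the second inclusion absorbing the possible coarsening of the grid at scale $(1+\eps)\alpha$. The only cosmetic differences are that you use the uniform bound $\beta(\gamma)\le\gamma$ with the half-diameter $\eps\gamma/8$ where the paper makes an explicit case split ($\beta$ versus $2\beta$) and works with the cruder bound $\eps\alpha/2$; the constants close comfortably either way.
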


\begin{proof}
For $\alpha\in I$, let $\beta\in J$ be as in the definition of 
$|\pixels_\alpha|$. For the first inclusion, if $x\in|\pixels_\alpha|$,
$x$ lies in some pixel $C$ of the grid $L_\beta$.
Let $c$ denote the center of this pixel.
There is some $p\in P$ with $\distance{c}{p}\leq\alpha$.
The diameter of $C$ is $\eps\beta/4<\eps\alpha$, hence 
$\distance{c}{x}\leq \eps\alpha/2$, and by the triangle inequality,
$\distance{x}{p}\leq (1+\eps/2)\alpha$.

For the second inclusion, fix $x\in \B_{(1+\eps/2)\alpha}$
and let $p\in P$ such that $\distance{x}{p}\leq (1+\eps/2)\alpha$.
Note that $(1+\eps)\alpha\in I$ as well,
and the approximation complex is either constructed using the same $\beta$
as for $\alpha$, or using $2\beta$ if $2\beta=(1+\eps)\alpha$.
In both cases, $x$ lies in a pixel $C$ with diameter at most 
$\eps\beta/2<\eps\alpha$, so the distance of $x$ to the center $c$ of $C$
is at most $\eps\alpha/2$. 
By the triangle inequality, $\distance{x}{p}\leq (1+\eps)\alpha$, 
so the pixel belongs to $|\pixels_{(1+\eps)\alpha}|$.
\end{proof}

The lemma shows that $|\pixels_\alpha|\subset|\pixels_{(1+\eps)\alpha}|$.
Using the standard filling technique from Section~\ref{section:preliminaries},
we extend this discrete space to a continuous filtration:

\[(|\pixels_{\alpha}|)_{\alpha > 0}.\]

Moreover, the Sandwich lemma together with the 
strong-interleaving diagram (Diagram~\eqref{diag:strong_diag})
implies at once:

\begin{theorem}
\label{theorem:digital_basic_intlv}
$(H(|\pixels_{\alpha}|))_{\alpha> 0}$
$(1+\eps)^2$-approximates the persistence module 
$(H(\B_\alpha))_{\alpha\ge0}$.
\end{theorem}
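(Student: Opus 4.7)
The plan is to turn the two set inclusions in the Sandwich lemma directly into a strong multiplicative $c$-interleaving of the two persistence modules and then invoke the definition that identifies such an interleaving with a $c$-approximation of persistence diagrams (as recorded in Diagram~\eqref{diag:strong_diag}). Since every map in sight will come from an honest inclusion of topological spaces, commutativity of all the required subdiagrams will follow for free from functoriality of singular homology, and the only real work is to pin down the correct constant $c$ and to handle scales $\alpha>0$ that do not lie in the discrete index set $I$.

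First I would define the interleaving maps only for $\alpha\in I$. The Sandwich lemma gives a chain of inclusions
\[
|\pixels_\alpha|\;\hookrightarrow\;\B_{(1+\eps/2)\alpha}\;\hookrightarrow\;|\pixels_{(1+\eps)\alpha}|\;\hookrightarrow\;\B_{(1+\eps/2)(1+\eps)\alpha},
\]
so applying $H(\cdot)$ yields families of linear maps $\gamma_\alpha\colon H(\B_\alpha)\to H(|\pixels_{(1+\eps)\alpha}|)$ and $\delta_\alpha\colon H(|\pixels_\alpha|)\to H(\B_{(1+\eps/2)\alpha})$ (the second factor absorbs into the interleaving constant). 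To extend to arbitrary $\alpha>0$ I would use the standard filling technique of Section~\ref{section:preliminaries}: for each $\alpha>0$ choose the unique $\alpha^{\ast}\in I$ with $\alpha^{\ast}\le\alpha<(1+\eps)\alpha^{\ast}$, set $|\pixels_\alpha|:=|\pixels_{\alpha^{\ast}}|$, and define the interleaving maps by composing the inclusions above with the filtration maps $\B_\alpha\hookrightarrow\B_{\alpha^\ast}$ (where needed) or $|\pixels_{\alpha^\ast}|\hookrightarrow|\pixels_{((1+\eps)\alpha^\ast)^\ast}|$. A short scale chase shows that both $\gamma$ and $\delta$ can be chosen to land at scale at most $(1+\eps)^2\alpha$, which is where the quadratic loss in the interleaving constant comes from: one factor of $(1+\eps)$ is paid to the Sandwich lemma and one factor is paid to align a generic $\alpha$ with the grid $I$.

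The commutativity of Diagram~\eqref{diag:strong_diag} is where one has to be careful, but I do not expect any real obstacle there. Every one of the four square/triangular subdiagrams in \eqref{diag:strong_diag} can, after unwinding the definitions, be identified with a diagram of subspace inclusions
\[
X_1\subseteq X_2\subseteq X_3\subseteq X_4
\]
between unions of balls and unions of pixels at appropriate scales; applying $H(\cdot)$ and using functoriality of singular homology makes each such diagram commute on the nose. The only nuisance is bookkeeping the scales after the filling step, and a single table matching the labels in \eqref{diag:strong_diag} with chains of inclusions derived from the Sandwich lemma should dispose of this. Having exhibited a strong $(1+\eps)^2$-interleaving, the theorem then follows immediately from the correspondence between strong $c$-interleavings and $c$-approximations of persistence diagrams recalled in Section~\ref{section:preliminaries}.
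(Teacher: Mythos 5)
Your proposal is correct and follows essentially the same route as the paper: the paper likewise turns the inclusions of Lemma~\ref{lemma:digital_sandwich} into a strong interleaving in the sense of Diagram~\eqref{diag:strong_diag} via inclusion-induced maps, with one factor of $(1+\eps)$ lost to the sandwich and the second to extending the discrete pixel filtration to all $\alpha>0$ by the standard filling technique. Your write-up merely makes explicit the scale bookkeeping and functoriality argument that the paper leaves as an ``implies at once.''
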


Note that we obtain $(1+\eps)^2$ instead of $(1+\eps)$ because
we consider the continuous filtration instead of the discrete
filtration of $|\pixels_\alpha|$.

The theorem also implies that $(H(|\pixels_{\alpha}|))_{\alpha> 0}$
is a $(1+\eps)^2$-approximation of the \Cech filtration since
the \Cech filtration is dual to $(\B_\alpha)_{\alpha\ge0}$
and has the same persistence diagram.

\subsection{Connecting the scales}
\label{subsection:tower}

We now turn our attention to the approximation complex $\complex_\alpha$.
While at each scale $\alpha$, $\complex_\alpha$ is the nerve of $\pixels_\alpha$ and $|\pixels_\beta|$ forms a filtration, it is not
true that $\complex_\alpha\subseteq \complex_\beta$ for all $\alpha\le \beta$
since $\pixels_\alpha \not\subset \pixels_\beta$.
Therefore, it is not sufficient to apply the persistent nerve lemma 
of~\cite{co-pnl} directly, but it requires a more involved analysis, which 
we describe next.
We show that the complexes are connected by simplicial maps.

A first useful property is that $\complex_\alpha$ is a flag complex.
This follows from the following statement, which we prove in more
general form for later use. We define an axis-aligned cuboid
in $\R^d$ as the Cartesian product of $d$ intervals 
$I_1 \times \ldots \times I_d$ (where the degenerate case is allowed 
that $I_j$ consists of only one point).
For instance, all pixels of any lattice $L_\beta$ are
(non-degenerate) cuboids.

\begin{lemma}
\label{lemma:digital_cubeflag}
The nerve complex of any finite collection of cuboids is a flag complex.
\end{lemma}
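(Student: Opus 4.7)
The plan is to reduce the statement to the one-dimensional Helly property for intervals, exploiting the product structure of cuboids. Concretely, the nerve of a collection being a flag complex means: whenever a finite subset of the cuboids is pairwise intersecting, the entire subset has a common point. So the whole task is to establish this Helly-type property for axis-aligned cuboids in $\R^d$.

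First I would fix a finite collection $C_0, C_1, \ldots, C_k$ of axis-aligned cuboids whose pairwise intersections are all non-empty, and write each one as a product of intervals,
\[
C_i = I_{i,1} \times I_{i,2} \times \cdots \times I_{i,d},
\]
where each $I_{i,j}$ is a (possibly degenerate) interval on the $j$-th coordinate axis. The elementary observation is that a product $\bigcap_i C_i$ is non-empty if and only if, for every coordinate $j \in \{1,\ldots,d\}$, the one-dimensional intersection $\bigcap_i I_{i,j}$ is non-empty; similarly, pairwise non-emptiness $C_i \cap C_{i'} \neq \varnothing$ is equivalent to $I_{i,j} \cap I_{i',j} \neq \varnothing$ for every $j$.

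Next I would invoke the one-dimensional version of Helly's theorem for intervals: any finite collection of pairwise intersecting intervals on $\R$ has a common point. This is immediate because if $I_{i,j} = [a_{i,j}, b_{i,j}]$, then pairwise intersection forces $a_{i,j} \le b_{i',j}$ for all $i,i'$, hence $\max_i a_{i,j} \le \min_i b_{i,j}$, so $\max_i a_{i,j}$ lies in every $I_{i,j}$. Applying this to each coordinate $j$ in turn produces points $x_j \in \bigcap_i I_{i,j}$, and the point $(x_1,\ldots,x_d)$ lies in every $C_i$.

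Finally I would translate back to the nerve. A simplex $\sigma = \{C_{i_0}, \ldots, C_{i_k}\}$ of the nerve is the vertex set of a $k$-face whose 1-skeleton is present in $\nrv(\{C_0,\ldots,C_k\})$ precisely when every pair $C_{i_a}, C_{i_b}$ intersects; by the argument above this forces $\bigcap_a C_{i_a} \neq \varnothing$, hence $\sigma \in \nrv$, which is the flag condition. The main (and really only) obstacle is making sure degenerate intervals cause no issue, but since a single point is still a closed interval the Helly argument goes through verbatim.
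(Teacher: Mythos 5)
Your proof is correct and follows essentially the same route as the paper: both reduce the flag condition to a Helly-type property for axis-aligned cuboids, decompose each cuboid as a product of intervals, and settle each coordinate by the one-dimensional Helly theorem (which you additionally prove inline via the $\max_i a_{i,j} \le \min_i b_{i,j}$ argument, where the paper simply cites Helly). No gaps; the handling of degenerate intervals is fine.
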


\begin{proof}
We show that if there is a set of $(k+1)$ cuboids 
that pairwise intersect,
then all of them have a common intersection. 
The intersection of the set of $(k+1)$ cuboids 
\[
\left(I^{(0)}_1\times\ldots\times I^{(0)}_d\right),
\ldots,
\left(I^{(k)}_1\times\ldots\times I^{(k)}_d\right)
\]
is the cuboid 
\[\left(I^{(0)}_1\cap\ldots\cap I^{(k)}_1\right)
\times\ldots\times
\left(I^{(0)}_d\cap\ldots\cap I^{(k)}_d\right),\]
and it suffices to show that these intersection are non-empty
coordinate-wise. By assumption, $I^{(\ell_1)}_1\cap I^{(\ell_2)}_1$ 
is non-empty for all $0\leq \ell_1,\ell_2\leq k$. 
Helly's theorem~\cite{helly} for
the case of intervals implies that all $I^{(\cdot)}_1$ intersect commonly.
\end{proof}

Next, we consider $\alpha\in I$ and define a simplicial map 
$g:\complex_\alpha\to\complex_{(1+\eps)\alpha}$. This is simple
if $\complex_\alpha$ and $\complex_{(1+\eps)\alpha}$ 
are constructed using the same grid,
as in that case $\pixels_\alpha\subseteq\pixels_{(1+\eps)\alpha}$
and consequently, $\complex_\alpha\subseteq \complex_{(1+\eps)\alpha}$.
If the grid changes, there is no direct inclusion. There is, however,
a natural map $g'$ mapping each pixel $C$ in $\pixels_\alpha$ to the 
unique pixel $g'(C)$ in $\pixels_{(1+\eps)\alpha}$ that contains $C$.
That fact that $g'(C)$ is indeed in $\pixels_{(1+\eps)\alpha}$
follows from Sandwich Lemma.

\begin{lemma}
\label{lemma:digital_gsimplicial}
The map $g'$ extends to a simplicial map 
$g:\complex_{\alpha}\rightarrow\complex_{(1+\eps)\alpha}$.
\end{lemma}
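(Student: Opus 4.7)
The plan is to verify the two conditions that make a vertex map into a simplicial map: first, that $g'$ is well-defined as a map into the vertex set of $\complex_{(1+\eps)\alpha}$, i.e.\ that for every $C\in\pixels_\alpha$ its unique containing pixel $g'(C)$ in the (possibly coarser) grid at scale $(1+\eps)\alpha$ actually lies in $\pixels_{(1+\eps)\alpha}$; and second, that for every simplex $\sigma=\{C_0,\ldots,C_k\}\in\complex_\alpha$ the image $\{g'(C_0),\ldots,g'(C_k)\}$ has a non-empty common intersection, so is a simplex of $\complex_{(1+\eps)\alpha}$.

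For the well-definedness, I would split into the two possible cases coming from the definition of the grid. Let $\beta\in J$ be maximal with $\beta\le\alpha$, and let $\beta'\in J$ be maximal with $\beta'\le(1+\eps)\alpha$. Since $\eps\le 1/5$, we have $(1+\eps)\alpha<2\alpha$, so either $\beta'=\beta$ or $\beta'=2\beta$. In the first case the grids coincide, $g'(C)=C$, and membership in $\pixels_{(1+\eps)\alpha}$ is immediate because the center of $C$ is within $\alpha\le(1+\eps)\alpha$ of some $p\in P$. In the second case, let $c$ and $c'$ denote the centers of $C$ and $g'(C)$ respectively. Since $c\in C\subseteq g'(C)$, the point $c$ is within half the diameter of the new pixel of $c'$, i.e.\ $\|c-c'\|\le \eps(2\beta)/8=\eps\beta/4\le \eps\alpha/4$. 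Choosing $p\in P$ with $\|c-p\|\le\alpha$ and applying the triangle inequality yields $\|c'-p\|\le(1+\eps/4)\alpha<(1+\eps)\alpha$, so $g'(C)\in\pixels_{(1+\eps)\alpha}$.

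For the simplicial condition, the argument is a one-liner: if $x\in\bigcap_{i=0}^k C_i$ witnesses that $\sigma\in\complex_\alpha$, then since $C_i\subseteq g'(C_i)$ for each $i$, the same $x$ lies in $\bigcap_{i=0}^k g'(C_i)$, so this intersection is non-empty and the image is a simplex of $\complex_{(1+\eps)\alpha}$. (Distinct $C_i$ may collapse to the same $g'(C_i)$, but this is permitted for simplicial maps.) Hence $g'$ extends to a simplicial map $g:\complex_\alpha\to\complex_{(1+\eps)\alpha}$.

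The only place where there is anything to do is the grid-change case of well-definedness; the preservation of intersections is essentially by inclusion, and it is the numerical bound $\|c'-p\|<(1+\eps)\alpha$ that is the actual content. The restriction $\eps\le 1/5$ guarantees that the grid can change by at most one doubling step between consecutive scales in $I$, which is what keeps this calculation clean.
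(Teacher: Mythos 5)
Your proof is correct, and it takes a slightly different route from the paper's. The paper delegates well-definedness (that $g'(C)$ actually lies in $\pixels_{(1+\eps)\alpha}$) to the Sandwich Lemma (Lemma~\ref{lemma:digital_sandwich}), stated just before the lemma, and then proves simpliciality by invoking the flag-complex property of the target (Lemma~\ref{lemma:digital_cubeflag}): it suffices to check that every edge of $\complex_\alpha$ maps to an edge or a vertex, which holds because two intersecting pixels map to the same or to intersecting pixels. You instead verify well-definedness by a direct triangle-inequality computation (splitting into the grid-unchanged and grid-doubled cases, with the correct bound $\|c-c'\|\le\eps\beta/4$ and the observation that $\beta'\in\{\beta,2\beta\}$), which essentially re-derives the relevant piece of the Sandwich Lemma's proof, and you prove simpliciality on arbitrary simplices at once via the witness point: any $x\in\bigcap_i C_i$ lies in $\bigcap_i g'(C_i)$ because $C_i\subseteq g'(C_i)$. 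Your argument is more elementary and self-contained, since it does not need the flag-complex lemma at all for this step; the paper's edge-reduction via flagness is a reusable template that it deploys again later (for the contiguity argument in Appendix~\ref{subsection:appendix-section3-interleaving-alternate} and for the maps between the lazy complexes in Section~\ref{section:spreadremoval}), where a direct common-witness argument is less immediate. Both proofs are valid; nothing is missing from yours.
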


\begin{proof}
From Lemma~\ref{lemma:digital_cubeflag} we know that $\complex$ is a flag complex.
Therefore it suffices to show that $g$ 
maps every edge of $\complex_{\alpha}$ 
to either a single vertex or an edge of $\complex_{(1+\eps)\alpha}$.
But that follows at once, because an edge in $\complex_{\alpha}$
corresponds to two pixels on the grid of $\complex_{\alpha}$
which are intersecting, and the map $g'$ as defined above either maps
both of them to the same pixel, or to two pixels which are also intersecting.
\end{proof}

By composing the maps above, we obtain maps 
$g^{\alpha_1,\alpha_2}:\complex_{\alpha_1} \rightarrow \complex_{\alpha_2}$
for all $0<\alpha_1\le\alpha_2 \in I$.
Using the standard filling technique, we define the (continuous) 
simplicial tower
\[
(\complex_{\alpha})_{\alpha>0}.
\]

\paragraph{Interleaving}
We next establish a relationship between the approximation tower 
$(\complex_{\alpha})_{\alpha> 0}$
and $(|\pixels_{\alpha}|)_{\alpha> 0}$.
More precisely, we will show that both towers yield
the same persistence diagram. 
Using Theorem~\ref{theorem:digital_basic_intlv},
this implies that $(\complex_{\alpha})_{\alpha> 0}$
is a $(1+\eps)^2$-approximation of the \Cech complex.

Since $\pixels_\alpha$ consists of convex objects, 
the nerve theorem (Theorem~\ref{theorem:nervethm}) asserts that 
$\pixels_\alpha \homt\complex_\alpha$, for each scale $\alpha$,
so they have isomorphic homology groups.
All that is left to show is that the homology map $g^\ast$ 
induced by the simplicial maps $g$ from above
commutes with the isomorphisms from the Nerve theorem.
We prove this indirectly by introducing an intermediate filtration
that is equivalent to both filtrations. This intermediate filtration
considers the union of all pixels $\pixels_\beta$ with $\beta\leq\alpha$.
We shift the technical details to 
Appendix~\ref{subsection:appendix-section3-interleaving-alternate} 
and just state the final result.

\begin{lemma}
\label{lemma:pixel_nerve_iso}
The persistence modules $(H(\complex_{\alpha}))_{\alpha> 0}$ 
and $(H(|\pixels_{\alpha}|))_{\alpha> 0}$ are isomorphic.
\end{lemma}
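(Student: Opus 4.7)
The plan is to prove the isomorphism indirectly via an intermediate filtration that bridges the tower $(\complex_\alpha)$ and the continuous filtration $(|\pixels_\alpha|)$. A direct proof fails because, although the nerve theorem gives isomorphisms $H(\complex_\alpha) \cong H(|\pixels_\alpha|)$ scale-by-scale, the underlying cubical grid for $\pixels_\alpha$ changes as $\alpha$ crosses values in $J$, so the persistent nerve lemma~\cite{co-pnl} does not apply directly. The intermediate filtration fixes this by using a cumulative cover that grows monotonically.

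Concretely, for $\alpha \in I$ I would set
\[
\scubecls_\alpha := \bigcup_{\beta \in I,\ \beta \leq \alpha} \pixels_\beta,
\]
extended to $\alpha > 0$ by the standard filling technique, with nerve $\scubecpx_\alpha := \nrv(\scubecls_\alpha)$. By construction $(\scubecls_\alpha)$ is a proper filtration of collections of convex cuboids and each $\scubecpx_\alpha$ is a flag complex by Lemma~\ref{lemma:digital_cubeflag}. Since the cover is good (cuboid intersections are convex hence contractible by Theorem~\ref{theorem:nervethm}) and the inclusions are cellular, the persistent nerve lemma applies and yields $(H(\scubecpx_\alpha))_{\alpha > 0} \cong (H(|\scubecls_\alpha|))_{\alpha > 0}$. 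I would then chain this with two further isomorphisms. For $(H(|\scubecls_\alpha|)) \cong (H(|\pixels_\alpha|))$, I would exhibit a scale-compatible homotopy equivalence $|\scubecls_\alpha| \simeq |\pixels_\alpha|$, using that every extra pixel in $\scubecls_\alpha \setminus \pixels_\alpha$ originates from a finer grid and is geometrically contained in a current-grid pixel lying within $\B_{(1+\eps/2)\alpha}$ (by a Sandwich-Lemma-style estimate), which supports a consistent deformation retract. For $(H(\scubecpx_\alpha)) \cong (H(\complex_\alpha))$, I would construct mutually inverse simplicial maps: the inclusion $\complex_\alpha \hookrightarrow \scubecpx_\alpha$ in one direction, and in the other a projection analogous to $g'$ from Lemma~\ref{lemma:digital_gsimplicial} sending each pixel of $\scubecls_\alpha$ to the unique current-grid pixel containing it, extended simplicially via the flag property. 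Composing the three isomorphisms yields the claim.

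The main obstacle is verifying that the second and third equivalences commute with the respective transition maps, i.e., that the scale-wise isomorphisms assemble into a commuting diagram of type~\eqref{equation:persisistence_equiv}. This is most delicate at grid transitions $\alpha \in J$, where the simplicial projection $\scubecpx_\alpha \to \complex_\alpha$ and the deformation retract onto $|\pixels_\alpha|$ must be chosen to agree with the simplicial map $g$ of Lemma~\ref{lemma:digital_gsimplicial} across the change of grid. Concretely, one must check that the projection commutes with the containment maps between pixels of $L_\beta$ and $L_{2\beta}$, so that the three persistence modules are simultaneously identified on homology. The verification is routine but scale-by-scale, and forms the technical bulk of the argument deferred to the appendix.
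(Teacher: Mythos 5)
Your proposal follows essentially the same route as the paper's proof: the same cumulative cover $S'_\alpha=\bigcup_{\beta\le\alpha}\pixels_\beta$ as intermediate object, the persistent nerve lemma relating its nerve to its union, and the projection/inclusion pair between that nerve and $\complex_\alpha$. Two minor refinements: the middle step is simpler than you suggest, since by the Sandwich Lemma the unions coincide exactly, $|S'_\alpha|=|\pixels_\alpha|$, so no deformation retract is needed; and the projection and the inclusion are not mutually inverse as simplicial maps --- the composite on the big nerve is only \emph{contiguous} (hence homotopic) to the identity, which is precisely the contiguity argument the paper uses to conclude that the induced maps on homology are inverse to each other.
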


We conclude with the main result of this section.
Setting $\eps'=\eps/4$ in our approximation, 
we see that $(1+\frac{\eps}{4})^{2}<1+\eps$. 
We conclude that

\begin{theorem}
\label{theorem:digital_intlv_main}
$(H(\complex_{\alpha}))_{\alpha > 0}$ and 
$(H(\cech_{\alpha}))_{\alpha\ge 0}$ are 
$(1+\eps)$-approximations of each other, for $\eps\in (0,\frac{1}{20}]$.
\end{theorem}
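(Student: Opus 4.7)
The plan is to chain the ingredients already assembled in this section. Three persistence modules are in play: the \Cech module $(H(\cech_\alpha))_{\alpha\ge 0}$, the continuous pixel-union module $(H(|\pixels_\alpha|))_{\alpha>0}$, and the approximation-tower module $(H(\complex_\alpha))_{\alpha>0}$; the goal is to connect the first and the third by routing through the middle one.

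First I would observe that $(H(\cech_\alpha))_{\alpha\ge 0}$ and $(H(\B_\alpha))_{\alpha\ge 0}$ are isomorphic as persistence modules. Indeed, the nerve theorem (Theorem~\ref{theorem:nervethm}) yields $\cech_\alpha \homt \B_\alpha$ at each fixed scale, and the homotopy equivalences are natural with respect to the inclusions between scales (the standard persistent nerve argument), so the two modules have identical persistence diagrams. Consequently, Theorem~\ref{theorem:digital_basic_intlv} can be rephrased as: $(H(|\pixels_\alpha|))_{\alpha>0}$ is a $(1+\eps)^2$-approximation of $(H(\cech_\alpha))_{\alpha\ge 0}$.

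Next I would apply Lemma~\ref{lemma:pixel_nerve_iso}, which states that $(H(\complex_\alpha))_{\alpha>0}$ and $(H(|\pixels_\alpha|))_{\alpha>0}$ are isomorphic as persistence modules. A persistence-module isomorphism is a trivial (factor $1$) instance of the strong interleaving in Diagram~\eqref{diag:strong_diag}, and its composition with a $(1+\eps)^2$-interleaving is again a $(1+\eps)^2$-interleaving. Hence $(H(\complex_\alpha))_{\alpha>0}$ is a $(1+\eps)^2$-approximation of $(H(\cech_\alpha))_{\alpha\ge 0}$.

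Finally, to sharpen the factor from $(1+\eps)^2$ down to $(1+\eps)$, I would rerun the whole construction of the section with the parameter $\eps' := \eps/4$ substituted for the construction parameter. A direct expansion gives $(1+\eps/4)^2 = 1 + \eps/2 + \eps^2/16 < 1+\eps$ in the relevant regime, and the hypothesis $\eps\in(0,1/20]$ comfortably ensures that $\eps'\in(0,1/5]$, so the standing assumption of the section (and hence all intermediate lemmas, including the Sandwich Lemma) remains in force under the substitution. There is no real obstacle in this theorem: the technical heavy lifting has already been done inside Lemma~\ref{lemma:pixel_nerve_iso}, and this final statement reduces to interleaving composition plus constant bookkeeping.
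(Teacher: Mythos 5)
Your proposal is correct and follows essentially the same route as the paper: it combines Theorem~\ref{theorem:digital_basic_intlv} (via the nerve/persistent-nerve identification of $(H(\cech_\alpha))_{\alpha\ge 0}$ with $(H(\B_\alpha))_{\alpha\ge 0}$) with the isomorphism of Lemma~\ref{lemma:pixel_nerve_iso}, and then rescales the construction parameter to $\eps'=\eps/4$ so that $(1+\eps/4)^2<1+\eps$ while staying within the section's standing assumption on the parameter.
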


\subsection{Size and Computation}
\label{subsection:sizecomputation}

\begin{theorem}
\label{theorem:digital_size}
For every $\alpha\geq 0$, 
the $k$-skeleton of the approximation tower $\complex_\alpha$ has size 
\[
n\left(\frac{1}{\eps}\right)^d2^{O(d\log d + dk)}.
\]
\end{theorem}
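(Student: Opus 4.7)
The plan is to bound the size of $\complex_\alpha$ at any fixed scale $\alpha$ by separately counting vertices and then the number of $k$-simplices incident to each vertex, exploiting the fact that $\complex_\alpha$ is a flag nerve of a cubical pixel arrangement.

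First I would count the vertices of $\complex_\alpha$, which are exactly the pixels in $\pixels_\alpha$. By definition, every such pixel has its center in $\B_\alpha=\bigcup_{p\in P} B(p,\alpha)$, so I can distribute the count over the $n$ input points: for each $p\in P$, I bound the number of lattice points of $L_\beta$ lying in $B(p,\alpha)$, where $\beta\in J$ is the maximal power of $2$ with $\beta\le\alpha$. Since $\beta>\alpha/2$, the lattice spacing is $\eps\beta/(4\sqrt{d}) \ge \eps\alpha/(8\sqrt{d})$, so the number of lattice points in $B(p,\alpha)$ is at most $(16\sqrt{d}/\eps)^d = (1/\eps)^d\, 2^{O(d\log d)}$. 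Summing over $p$ yields the vertex bound $n(1/\eps)^d\, 2^{O(d\log d)}$ (each pixel center may be counted multiple times, which only helps the upper bound).

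Next I would bound the degree in the nerve: given a fixed pixel $C$ of $L_\beta$, how many other pixels of $L_\beta$ can intersect $C$? Since all pixels are axis-aligned congruent cubes on a common lattice, two pixels intersect iff they share a (possibly lower-dimensional) face, and there are at most $3^d-1$ such neighbors. Thus the $1$-skeleton of $\complex_\alpha$ has maximum degree $3^d-1$. Because $\complex_\alpha$ is a flag complex by Lemma~\ref{lemma:digital_cubeflag}, every $k$-simplex through a vertex $C$ corresponds to a $k$-clique in its neighborhood, so the number of $k$-simplices containing $C$ is at most $\binom{3^d-1}{k} \le 3^{dk} = 2^{O(dk)}$.

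Combining the two estimates, the total number of $k$-simplices in $\complex_\alpha$ is at most
\[
\bigl(n(1/\eps)^d\, 2^{O(d\log d)}\bigr)\cdot 2^{O(dk)} \;=\; n\left(\frac{1}{\eps}\right)^d 2^{O(d\log d + dk)},
\]
as claimed. I expect no serious obstacle here: the argument is essentially a volume count for vertices, plus a local neighborhood count that turns the flag property (which forbids any sharper combinatorial explosion) into an explicit $2^{O(dk)}$ bound per vertex. The mildly delicate step is keeping track of the factor $\sqrt{d}$ in the pixel diameter and verifying that going from $\alpha$ to $\beta$ (a factor $\le 2$ in lattice spacing) only affects constants inside the $2^{O(d\log d)}$ term.
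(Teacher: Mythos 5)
Your proposal is correct and follows essentially the same route as the paper: a packing/volume count showing each $\alpha$-ball contributes at most $\left(\frac{1}{\eps}\right)^d 2^{O(d\log d)}$ pixels, followed by the observation that each pixel has at most $3^d-1$ neighbors and hence lies in at most $2^{O(dk)}$ simplices of the $k$-skeleton. The only cosmetic differences are that you count lattice points directly instead of packing inscribed balls, and you invoke the flag-complex lemma where the nerve property alone (all vertices of a simplex through $C$ must intersect $C$) already suffices; neither affects correctness.
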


\begin{proof}
Let $\beta\leq\alpha$ be such that the complex $\complex_{\alpha}$ 
is built using the lattice $L_\beta$. Note that $\beta\geq \alpha/2$.
The sidelength of a pixel of this lattice is $\frac{\eps\beta}{4\sqrt{d}}$.
A ball of half this radius is contained inside the pixel.
Since the pixels are interior-disjoint, 
a simple packing argument shows that each ball in $\B_{\alpha}$ is 
covered by no more than 
\[
\left(\frac{\alpha}{\frac{\eps\beta}{8\sqrt{d}}}\right)^d
\le 
\left(\frac{16\sqrt{d}}{\eps}\right)^d
=\left(\frac{1}{\eps}\right)^d 2^{O(d\log d)}
\] 
pixels. 
There are $n$ balls of $\B_{\alpha}$,
hence, $\pixels_\alpha$ contains at most 
$n\left(\frac{1}{\eps}\right)^d 2^{O(d\log d)}$ pixels.
Each pixel is incident to at most $3^d-1=2^{O(d)}$ other pixels.
Each simplex incident to a pixel has vertices among these $2^{O(d)}$ pixels.
Therefore, the size of the $k$-skeleton incident to each pixel is then $2^{O(dk)}$.
In total, the $k$-skeleton of $\complex_\alpha$ has size
\[
n\left(\frac{1}{\eps}\right)^d 2^{O(d\log d)}2^{O(dk)}
=n\left(\frac{1}{\eps}\right)^d 2^{O(d\log d+dk)},
\]
independent of $\alpha$.
\end{proof}

To compute the complex at a given scale $\alpha\in I$, we first find the
pixels at that scale using a simple flooding algorithm that starts at vertices
of $P$.
Then we inspect the neighborhood of each pixel to compute the $k$-skeleton
incident to that vertex.
See Appendix~\ref{subsection:appendix-section3-algo}  
for more details.

\begin{theorem}
\label{theorem:digital_computation}
At each scale of $I$, the $k$-skeleton of the approximation tower 
and the simplicial map can be computed in time
\[
n\left(\frac{1}{\eps}\right)^d2^{O(d\log d +dk)}.
\]
\end{theorem}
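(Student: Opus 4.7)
The algorithm splits naturally into three stages at each scale $\alpha\in I$: determine the pixel set $\pixels_\alpha$, enumerate the $k$-simplices of $\complex_\alpha=\nrv(\pixels_\alpha)$, and compute the vertex action of the simplicial map $g:\complex_\alpha\to\complex_{(1+\eps)\alpha}$ on the newly produced pixels. Throughout the computation I would maintain a hash table keyed by integer coordinate tuples of pixels of the lattice $L_\beta$ (with $\beta\in J$ the lattice scale associated to $\alpha$), so that membership tests and insertions cost $O(d)$ per operation.

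For stage one, I would iterate over each input point $p\in P$ and enumerate the pixels of $L_\beta$ whose center lies within distance $\alpha$ of $p$; by the definition of $\pixels_\alpha$, the union over $p\in P$ of these sets is exactly $\pixels_\alpha$. The candidate pixels lie in the axis-aligned bounding box $\prod_{j=1}^d[p_j-\alpha,p_j+\alpha]$ of $B(p,\alpha)$, and since the pixel sidelength is $\Theta(\eps\alpha/\sqrt{d})$, this box contains at most $(1/\eps)^d 2^{O(d\log d)}$ candidates---the same packing bound as in Theorem~\ref{theorem:digital_size}. A single $O(d)$-time distance check per candidate decides inclusion, after which the pixel is inserted into the hash table. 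This is equivalent to a BFS flood starting from the pixel containing $p$ and adding neighboring pixels whose center is in $B(p,\alpha)$, with the same asymptotic cost. Summing over all $n$ input points yields total cost $n(1/\eps)^d 2^{O(d\log d)}$.

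For stage two, I process each $C\in\pixels_\alpha$ by inspecting its $3^d-1=2^{O(d)}$ lattice neighbors and restricting to those present in the hash table. By Lemma~\ref{lemma:digital_cubeflag}, $\complex_\alpha$ is a flag complex and any pair of grid-adjacent pixels share a face, so enumerating incident simplices reduces to enumerating $(k+1)$-subsets of $\{C\}$ together with its kept neighbors in which every pair is grid-adjacent---at most $2^{O(dk)}$ subsets per pixel. Emitting each simplex only from its lexicographically smallest vertex pixel prevents double counting, giving a total of $n(1/\eps)^d 2^{O(d\log d+dk)}$. Stage three is then immediate: for each $C\in\pixels_\alpha$ the pixel of the lattice used at scale $(1+\eps)\alpha$ containing $C$ is obtained by $O(d)$ integer arithmetic on the center coordinates of $C$, since the finer and coarser lattices differ by at most a factor of two and are nested.

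The point that requires the most care is stage two: simplex enumeration must be complete and duplicate-free while staying within the $2^{O(dk)}$ per-pixel budget. Without the flag property of Lemma~\ref{lemma:digital_cubeflag}, one would in principle have to test common intersections of arbitrary subsets of cuboids, but the flag reduction turns this into a purely local pairwise-adjacency enumeration over the $2^{O(d)}$ lattice neighbors of each pixel, after which the lexicographic emission rule handles deduplication. With this observation in place, the running-time analysis collapses onto the same packing and counting arguments already established in Theorem~\ref{theorem:digital_size}, and all remaining ingredients---distance checks, hashing, and coordinate arithmetic---are routine.
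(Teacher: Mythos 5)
Your proposal is correct and follows essentially the same route as the paper: a per-point local enumeration of candidate pixels (your bounding-box scan is the same packing argument as the paper's BFS flood, as you note), a dictionary/hash lookup of the $3^d-1$ neighbors combined with the flag property of Lemma~\ref{lemma:digital_cubeflag} to enumerate the $k$-skeleton at cost $2^{O(dk)}$ per pixel, and an $O(d)$-per-pixel computation of the containing coarser pixel for the vertex map. The only differences are cosmetic (explicit subset enumeration with lexicographic deduplication versus the paper's Hasse-diagram traversal), so no further comparison is needed.
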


\section{Removing the spread}
\label{section:spreadremoval}

The size and computation bounds from Section~\ref{subsection:sizecomputation}
only hold for a single considered scale. To get a bound on the total
size and complexity, 
one needs to multiply with the considered number of scales.
If the entire filtration is of interest, this number can be upper bounded
by the logarithm of the spread of the point set (the ratio of the diameter
and the closest distance). In this section, we remove that dependence
on the spread.

\subsection{A short overview of the approximation scheme}
\label{subsection:summary_spreadremoval}

We first informally illustrate the idea behind eliminating the 
dependence on spread.
The main exposition starts from Subsection~\ref{subsection:lazyunion}, 
where we start with the details of our construction.

The crucial observation guiding our improved approximation scheme is that
while the union of balls $\B_\alpha$ changes continuously for all 
$\alpha\ge 0$, the \Cech complex changes only at a finite number of scales.
For instance, if an edge $(p,q)$ enters the \Cech filtration at $\alpha$, 
then the intersection of the $\alpha$-balls at $p$ and $q$ captures the edge.
But at higher scales, it is pointless to allow these balls to grow further 
to represent this edge.
So we allow the balls to grow individually only in a lazy fashion, only
when they form new connections. 
This ensures that the set of edges of the original \Cech filtration 
are correctly captured.
Allowing for some slack in growing the balls, that is, if 
we allow the balls at $p,q$ to grow in the interval $[\alpha,2\alpha]$,
we can capture all simplices incident to $p$ and $q$.
Naturally, this lazy union of balls is homotopy-equivalent 
to the original union.
See Figure~\ref{figure:summary} for an example.

\begin{figure}[ht]
\centering
\begin{subfigure}[t]{0.33\columnwidth}
\centering
\includegraphics[width=0.8\textwidth]{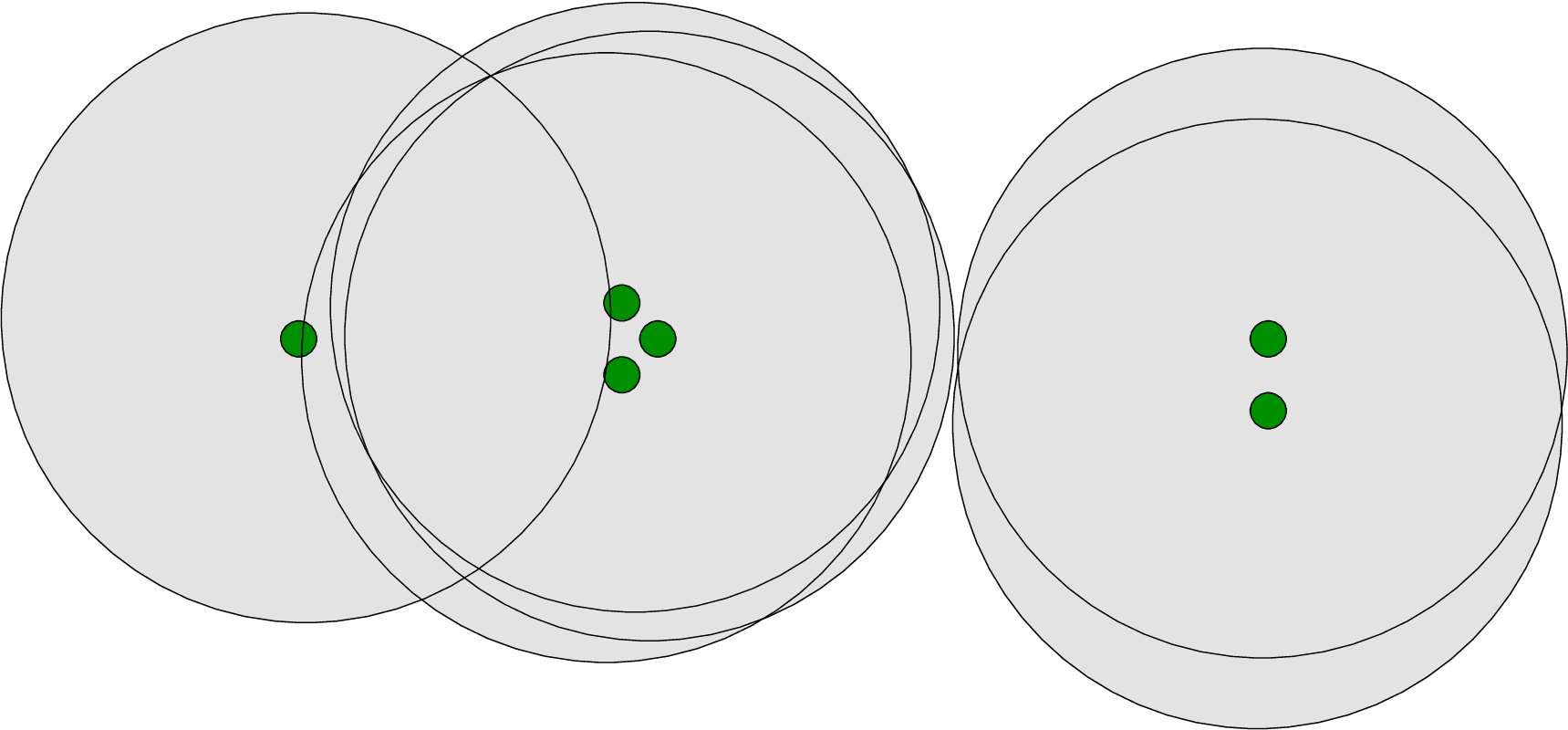}
\caption{
A union of balls.
}
\end{subfigure}%
\begin{subfigure}[t]{0.33\columnwidth}
\centering
\includegraphics[width=0.8\textwidth]{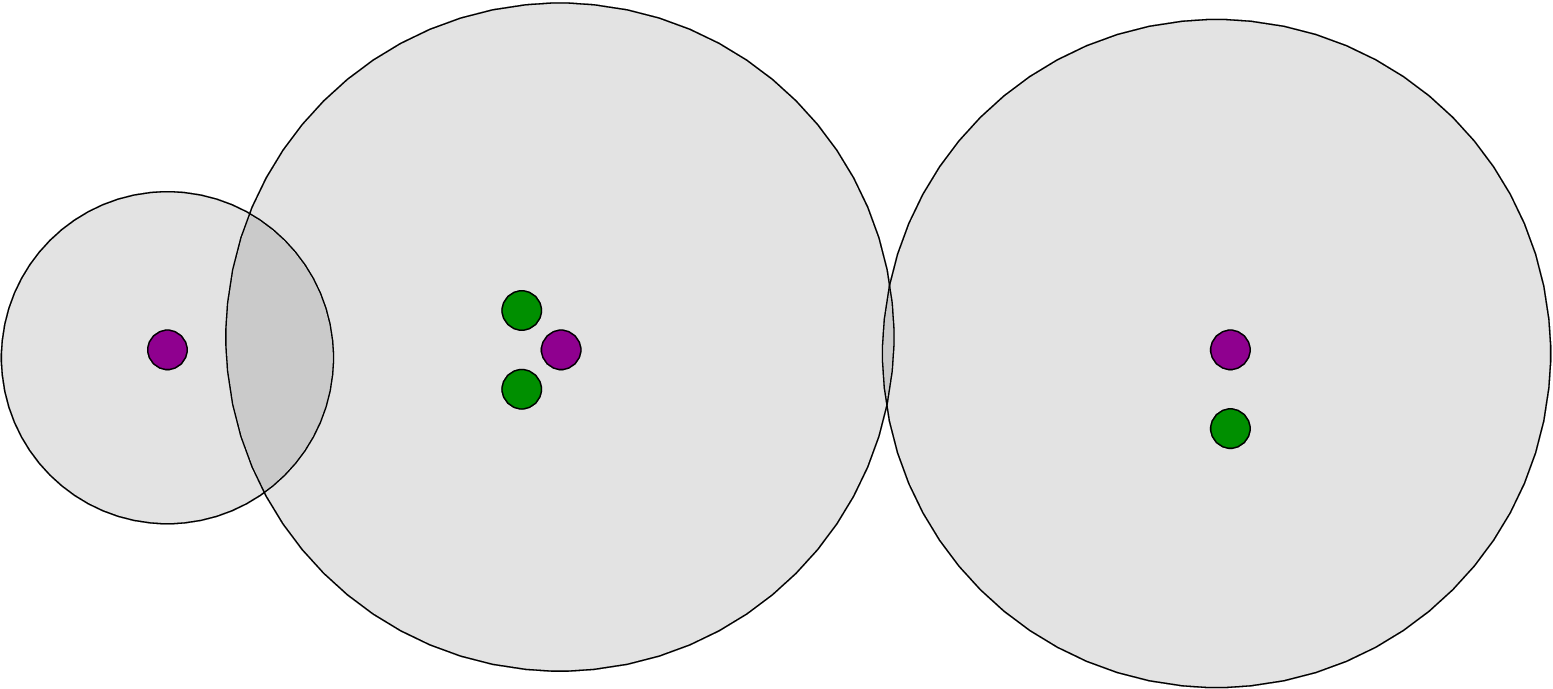}
\caption{
The corresponding lazy union of balls. The violet points are representatives.
}
\end{subfigure}
\begin{subfigure}[t]{0.33\columnwidth}
\centering
\includegraphics[width=0.8\textwidth]{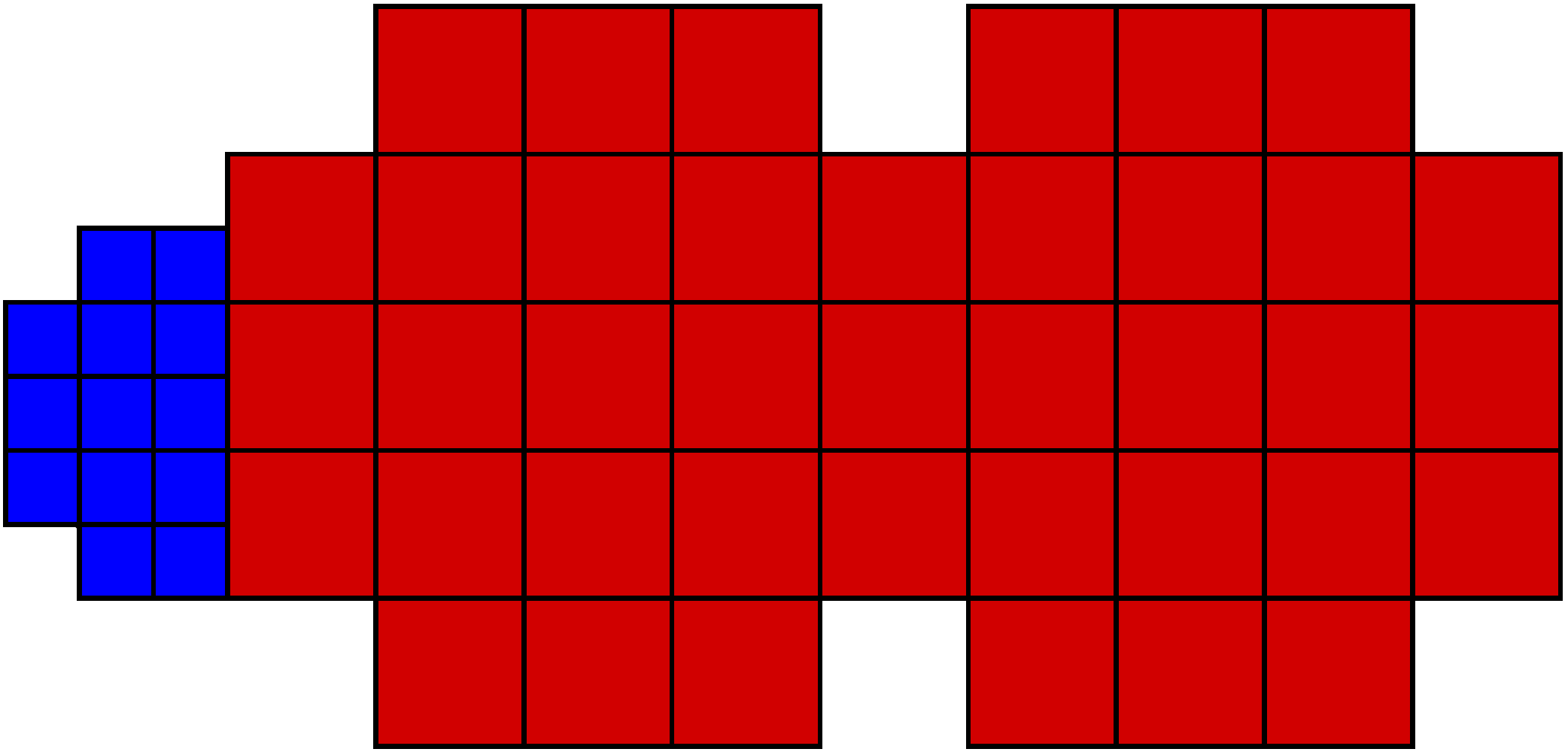}
\caption{
A pixelization.
}
\end{subfigure}
\caption{Original union, lazy union and the pixelization.}
\label{figure:summary}
\end{figure}

The lazy union is non-homogeneous, so it is non-trivial to pixelize it.
The ideal approach is to pixelize each ball according to its radius,
to get a non-homogeneous union of pixels.
As we show later, this does not pose a problem if the pixels are chosen 
as in Section~\ref{section:digitization}.
To construct the pixelization efficiently, 
we make use of a well-separated pair decomposition.
Each pair of points $(p,q)\in P$ has roughly the same distance as the distance
between any points of a well-separated pair that covers $(p,q)$.
Hence, the lazy balls at the representatives approximate the 
lazy balls at the points of the pair, and we use the former for pixelization.
The use of an $\eps$-WSPD ensures that there are at most $n(1/\eps)^{O(d)}$
lazy balls.
See Figure~\ref{figure:summary} for an example
(further details in Subsection~\ref{subsection:lazyunion}).

We collect the pixels that intersect the union of balls.
While the lazy union and its pixelization do not interleave on a space 
level as in Lemma~\ref{lemma:digital_sandwich}, we show that they still
form persistence modules that are closely interleaved 
(Subsection~\ref{subsection:pixelization}).
The nerves of these pixelizations are our approximation complexes,
and they can be connected to form a simplicial tower:
for this, we simply use the map that takes a pixel at a lower scale 
to the pixel at the higher scale that contains it.

To compute the pixelization, we first construct a WSPD and identify the 
intervals at which the representatives are growing. 
At each such scale, we identify only those balls that are expanding, 
pixelize them, and take the nerve.
This constructs our simplex inclusions at that scale.
More details follow in Subsection~\ref{subsection:sizecomputation_new}.

\subsection{Lazy union of balls}
\label{subsection:lazyunion}

We review the concept of a 
\emph{well-separated pair decomposition} (WSPD) from~\cite{ck-wspd},
which plays a crucial role in our algorithm.
A \emph{$\delta$-WSPD} of $P$ consists of pairs of the form 
$(A_i,B_i)\subset P\times P$ that satisfy
\begin{itemize}
\item each $(A_i,B_i)$ is a \emph{well-separated pair} (WSP), that means,
\[
diam(A_i),diam(B_i) \leq \delta\cdot\min_{p\in A_i,q\in B_i} \distance{p}{q},
\]
\item and for each pair of points $(p,q)\in P$, there exists a WSP
$(A_j,B_j)$ in the WSPD such that either $(p\in A_j,q\in B_j)$ 
or $(p\in B_j,q\in A_j)$. 
That means, a WSPD covers each pair of points of $P$.
\end{itemize}

Let $W$ denote a $\delta$-WSPD on $P$, where $\delta\le 1/10$.
For each WSP $(A,B)\in W$ let $\{P_A,P_B \}\subset P$ denote
the points of $P$ in $A$ and $B$, respectively.
We pick a point $rep_A\in P_A$ and call it the representative of $A$
(similarly for $B$).
We denote the distance between the representatives as 
$d(A,B):=\distance{rep_A}{rep_B}$.
From the WSPD-property, $max\{diam(A),diam(B)\}\le \delta d(A,B)$.
Using the triangle inequality,
we see that the distance between any two points of $P_A$ and $P_B$
lies in the interval $[d(A,B)/2,2d(A,B)]$ since $\delta\le 1/10$.
For reasons that becomes apparent later, we scale the interval by a factor
of $4$ and call
\[
R_{(A,B)}:=[d(A,B)/8,8d(A,B)]
\] 
the \emph{active interval} of the pair $(A,B)$.
For every point $p\in P$, we define

\begin{align*}
R_p:=\bigcup\{R_{(A,B)}\mid (A,B)\in W
\text{ and $p=rep_A$ or $p=rep_B$}\}\subset [0,\infty)
\end{align*}
as the \emph{active interval} of $p$. 
Each scale $\alpha\in R_p$ is an \emph{active scale}.
Next, we define for $p\in P$,
\[
r_{p}(\alpha):=\max\{r\leq\alpha\mid r\in R_p\}
\]
and set 
\[
\lazy_\alpha:=\bigcup_{p\in P} B(p,r_p(\alpha)).
\]

We can interpret these definitions as follows: $R_p$ specifies
a range of scales $\alpha$ for which the $\alpha$-ball of $p$
might encounter new intersections. The function $r_{p}(\alpha)$ is
monotonously increasing, $r_p(0)=0$ for all $p$, 
and $r_p(\alpha)=\alpha$ if $\alpha$ is an active
scale; otherwise, the radius of the ball just remains at the last 
encountered active scale. 
$\lazy_\alpha$ is the union of balls with radii given by the $r_p$ functions.

Note that $\lazy_\alpha\subseteq \lazy_{\alpha'}$ whenever $\alpha\leq\alpha'$.
Hence, $(\lazy_\alpha)_{\alpha\geq 0}$ is a filtration.

\begin{lemma}
\label{lem:lazy_interleaving}
The persistence module $(H(\lazy_\alpha))_{\alpha\geq 0}$
is a $(1+8\delta)$-approximation of  $(H(\B_\alpha))_{\alpha\geq 0}$ (and consequently, also of the \Cech filtration).
\end{lemma}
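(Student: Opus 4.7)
The plan is to establish the $(1+8\delta)$-interleaving at the combinatorial level via the nerve theorem, then lift it to the spatial persistence modules. First, the inclusion $\lazy_\alpha \subseteq \B_\alpha$ is immediate since $r_p(\alpha) \leq \alpha$ for every $p$ and $\alpha$, giving the nerve-level inclusion $\nrv(\lazy_\alpha) \subseteq \cech_\alpha$ (if lazy balls meet, so do the larger original balls). Both $\B_\alpha$ and $\lazy_\alpha$ are unions of convex balls, so Theorem~\ref{theorem:nervethm} yields homotopy equivalences $\B_\alpha \simeq \cech_\alpha$ and $\lazy_\alpha \simeq \nrv(\lazy_\alpha)$; these equivalences are natural with respect to the respective filtration inclusions, so proving an interleaving of the nerve filtrations immediately transfers to the persistence modules of interest.

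The heart of the argument is the reverse containment on nerves: $\cech_\alpha \subseteq \nrv(\lazy_{(1+8\delta)\alpha})$. Fix a simplex $\sigma = \{p_1,\ldots,p_k\} \in \cech_\alpha$ and a witness point $x$ with $\distance{x}{p_i} \leq \alpha$ for all $i$. It suffices to prove $r_{p_i}\bigl((1+8\delta)\alpha\bigr) \geq \alpha$ for each $i$, since then $x$ lies in each lazy ball $B\bigl(p_i, r_{p_i}((1+8\delta)\alpha)\bigr)$, giving a common intersection point and hence $\sigma \in \nrv(\lazy_{(1+8\delta)\alpha})$. For $k=1$ this is vacuous. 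For $k\geq 2$, fix any $p_j \in \sigma$ with $j \neq i$ and consider the WSP $(A,B)\in W$ covering the pair $(p_i,p_j)$. Since $\distance{p_i}{p_j} \leq 2\alpha$, we have $d(A,B) \in [\distance{p_i}{p_j}/2, 2\distance{p_i}{p_j}] \subseteq [0,4\alpha]$, so the lower endpoint of $R_{(A,B)} = [d(A,B)/8, 8d(A,B)]$ satisfies $d(A,B)/8 \leq \alpha/2$, and modest separation of $p_i,p_j$ places $\alpha$ comfortably inside $R_{(A,B)}$.

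When $p_i$ itself is the representative of $A$, we obtain $R_{(A,B)} \subseteq R_{p_i}$ directly, so $\alpha \in R_{p_i}$ and hence $r_{p_i}((1+8\delta)\alpha) \geq \alpha$. Otherwise the representative $q = rep_A$ lies within $\delta\,d(A,B) \leq 8\delta\alpha$ of $p_i$, and one must translate the active interval of $q$ to one that $p_i$ itself represents (for instance, via a singleton WSP $(\{p_i\},\cdot)$, which always exists in the standard split-tree WSPD construction and makes $p_i$ a representative at an appropriate scale). The triangle-inequality displacement of at most $8\delta\alpha$ between $p_i$ and such a nearest representative is exactly what forces the scale stretching $\alpha \mapsto (1+8\delta)\alpha$: it absorbs the slack needed so that some active interval of $p_i$ still meets $[\alpha,(1+8\delta)\alpha]$ from below.

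Having both nerve containments, the induced homology maps $\gamma_\alpha: H(\lazy_\alpha) \to H(\B_\alpha)$ and $\delta_\alpha: H(\B_\alpha) \to H(\lazy_{(1+8\delta)\alpha})$ commute with the filtration maps by naturality, satisfying Diagram~\eqref{diag:strong_diag}. The final sentence of the lemma, concerning the \Cech filtration, follows because $\cech_\alpha \simeq \B_\alpha$ by the nerve theorem, so the two persistence modules share a diagram. The main obstacle will be the case where $p_i$ is not a representative in any WSP whose active interval contains $\alpha$; handling it cleanly requires either a careful assumption on the WSPD construction (every point being a representative in singleton WSPs spanning all relevant scales) or an intermediate argument passing through representatives with controlled displacement — in either formulation, the factor $8\delta$ quantifies the distortion incurred.
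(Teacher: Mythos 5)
Your overall strategy---prove a $(1+8\delta)$-interleaving of nerve complexes on the vertex set $P$ and then transfer it to the spatial modules via the nerve theorem---breaks down at its central step: the containment $\cech_\alpha \subseteq \nrv(\lazy_{(1+8\delta)\alpha})$ is false in general, and your sufficient condition $r_{p_i}\bigl((1+8\delta)\alpha\bigr)\geq\alpha$ is even stronger and also false. Two distinct failures occur. First, a vertex $p_i$ of $\sigma$ that is not the representative of any WSP at the relevant distance scale has no active interval near $\alpha$; by definition $r_{p_i}$ then freezes at its last active scale, which can be arbitrarily smaller than $\alpha$, and stretching the scale to $(1+8\delta)\alpha$ does not help because $R_{p_i}$ may miss the whole range $[\alpha,(1+8\delta)\alpha]$. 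Concretely, take $p_3$ extremely close to $p_1=rep_A$ with $A=\{p_1,p_3\}$, $B=\{p_2\}$ and $\distance{p_1}{p_2}=1$: the edge $\{p_3,p_2\}$ lies in $\cech_{0.51}$, but $r_{p_3}$ never exceeds $8\distance{p_1}{p_3}$, so the lazy balls of $p_3$ and $p_2$ need not intersect at scale $(1+8\delta)\cdot 0.51$, and the edge is missing from $\nrv(\lazy_{(1+8\delta)\alpha})$ even though the union $\lazy_{(1+8\delta)\alpha}$ has the right homotopy type (the ball of $p_1$ does the covering). Your singleton-WSP escape hatch does not repair this, since the active interval of a singleton pair is tied to the distance of that particular pair, not to the scale $\alpha$ at which you need the ball to be large. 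Second, even when $p_i$ is a representative, your claim that ``modest separation places $\alpha$ comfortably inside $R_{(A,B)}$'' fails whenever $\distance{p_i}{p_j}\ll\alpha$ (a simplex born long before $\alpha$): then $8d(A,B)\leq 16\distance{p_i}{p_j}\ll\alpha$, so $\alpha\notin R_{(A,B)}$. The correct quantity to track is the minimum enclosing radius $\rho$ of $\sigma$, with the circumcenter as witness, not $\alpha$ itself.

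The paper circumvents exactly these issues by interposing an intermediate filtration $\widetilde{\lazy}_\alpha$ in which \emph{every} point of a WSP (not only its representative) grows during a scaled active interval $\tilde{R}_{(A,B)}=[d(A,B)/4,4d(A,B)]$. Two separate statements then do the work: (a) a space-level sandwich $\lazy_\alpha\subseteq\widetilde{\lazy}_\alpha\subseteq\lazy_{(1+8\delta)\alpha}$, whose second inclusion uses the triangle inequality to cover the ball of a non-representative $p$ by the ball of its representative $q$ with $\distance{p}{q}\leq 8\delta\tilde{r}_p(\alpha)$ --- a statement about unions of balls, not about nerves on $P$; and (b) the identity $\nerve(\widetilde{B}_\alpha)=\cech_\alpha$ for all $\alpha$, proved by showing $\rho\in\tilde{R}_{(A,B)}$ for the WSP covering the two farthest vertices of $\sigma$, hence $\tilde{r}_{p_i}(\alpha)\geq\rho$ for every vertex, so the intermediate module is persistence-equivalent to the \Cech module. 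To salvage your argument you would need to restructure it along these lines rather than trying to force $\alpha$ (or even $\rho$) into the active intervals $R_{p_i}$ of every vertex, which the lazy construction simply does not guarantee.
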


The (somewhat tedious) proof can be summarized as follows: we define an
intermediate filtration of balls where not only the balls of representatives,
but of all balls participating in a WSP grow. In the first part,
we show that this filtration can be sandwiched with 
that of $\lazy_\alpha$.
In the second part, we show that the intermediate filtration has the same
nerve as the union of $\alpha$-balls.

We now define the intermediate filtration. Let
\[
\tilde{R}_{(A,B)}:=[d(A,B)/4,4d(A,B)]
\] 
be a scaled version of the active interval from above. Define
\begin{align*}
\tilde{R}_p:=R_p\cup\bigcup\{\tilde{R}_{(A,B)}\mid (A,B)\in W
\text{ and $p\in A$ or $p\in B$}\}\subset [0,\infty),
\end{align*}

\[
\tilde{r}_{p}(\alpha):=\max\{r\leq\alpha\mid r\in \tilde{R}_p\},
\]
and set 
\[
\widetilde{\lazy}_\alpha:=\bigcup_{p\in P} B(p,\tilde{r}_p(\alpha)).
\]
Intuitively, all the balls centered at points in a WSP grow when the WSP 
is active (but those of the representatives grow for a slightly longer time).

\begin{lemma}
\label{lem:sandwich_tilde}
For all $\alpha\ge 0$, $\lazy_\alpha\subseteq\widetilde{\lazy}_\alpha\subseteq \lazy_{(1+8\delta)\alpha}$.
\end{lemma}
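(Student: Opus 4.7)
The plan is to handle the two inclusions separately, with the first being essentially definitional and the second requiring the WSPD structure.

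For the first inclusion $\lazy_\alpha \subseteq \widetilde{\lazy}_\alpha$, I will observe that by definition $R_p \subseteq \tilde{R}_p$ for every $p \in P$, which gives $r_p(\alpha) \le \tilde{r}_p(\alpha)$ pointwise, so $B(p, r_p(\alpha)) \subseteq B(p, \tilde{r}_p(\alpha))$ and the inclusion follows by taking the union over $p$.

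For the second inclusion, I will take an arbitrary point $x \in \widetilde{\lazy}_\alpha$ and pick some $p \in P$ with $\|x - p\| \le \tilde{r}_p(\alpha)$, writing $\tilde{r} := \tilde{r}_p(\alpha)$. Since $\tilde{r} \in \tilde{R}_p$, two cases arise. If $\tilde{r}$ happens to lie in $R_p$ itself, then $\tilde{r} \le r_p(\alpha)$ and we are immediately done via $x \in B(p, r_p(\alpha)) \subseteq \lazy_\alpha \subseteq \lazy_{(1+8\delta)\alpha}$. Otherwise $\tilde{r}$ lies in $\tilde{R}_{(A,B)} = [d(A,B)/4, 4d(A,B)]$ for some WSP $(A,B) \in W$ with $p \in A$ (say), but $\tilde{r} \notin R_p$. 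Let $q := \mathrm{rep}_A$. The WSPD condition gives $\|p - q\| \le \mathrm{diam}(A) \le \delta d(A,B)$, and since $\tilde{r} \ge d(A,B)/4$ we get $\delta d(A,B) \le 4\delta \tilde{r}$. The triangle inequality then yields
\[
\|x - q\| \le \tilde{r} + \delta d(A,B) \le (1+4\delta)\tilde{r} \le (1+4\delta)\alpha \le (1+8\delta)\alpha.
\]

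It remains to witness this distance by an admissible radius at $q$. Since $\tilde{R}_{(A,B)} \subseteq R_{(A,B)} = [d(A,B)/8, 8d(A,B)] \subseteq R_q$, and using $\delta \le 1/10$ to check that $(1+4\delta)\tilde{r} \le 4(1+4\delta)d(A,B) \le 8d(A,B)$, the value $(1+4\delta)\tilde{r}$ itself lies in $R_{(A,B)} \subseteq R_q$ and is at most $(1+8\delta)\alpha$. Hence $r_q((1+8\delta)\alpha) \ge (1+4\delta)\tilde{r} \ge \|x-q\|$, so $x \in B(q, r_q((1+8\delta)\alpha)) \subseteq \lazy_{(1+8\delta)\alpha}$.

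The main obstacle I expect is bookkeeping: making sure that when one transfers from $p$ to its representative $q$, the slightly shifted radius $(1+4\delta)\tilde{r}$ still lands inside the wider interval $R_{(A,B)}$ used by $q$ (this is where the factor of $2$ between $[d(A,B)/4, 4d(A,B)]$ and $[d(A,B)/8, 8d(A,B)]$ is spent) and, simultaneously, is not above the query scale $(1+8\delta)\alpha$. The factor $8\delta$ is conservative — $4\delta$ actually suffices in the argument above — but the chosen constants in the definitions of $R$ and $\tilde{R}$ are what enable this slack to absorb the triangle inequality penalty without any further assumption beyond $\delta \le 1/10$.
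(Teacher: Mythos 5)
Your proof is correct and follows essentially the same route as the paper's: the first inclusion via $r_p\le\tilde r_p$, and the second by passing to the representative $q$ of the pair, bounding $\|x-q\|$ with the triangle inequality, and checking that the slightly enlarged radius still lies in $R_{(A,B)}\subseteq R_q$ below the query scale. The only (harmless) difference is that you work with the factor $(1+4\delta)$ where the paper uses $(1+8\delta)$, which just makes explicit the slack the paper also exploits.
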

\begin{proof}
The first inclusion follows at once from the fact that
$r_p \leq \tilde{r}_p$ for all points of $P$ and all scales.
For the second inclusion, let $x$ be a point in $\widetilde{\lazy}_\alpha$
and let $p$ be such that $x\in B(p,\tilde{r}_p(\alpha))$. 
There is some WSP $(A,B)$ such that
$\tilde{r}_p(\alpha)$ lies in $R_{(A,B)}$ and $p$ is the representative of
$A$ or of $B$, or 
$\tilde{r}_p(\alpha)$ lies in some $\tilde{R}_{(A,B)}$ a WSP,
and $p\in A$ or $p\in B$. 
In the first case, it follows that $\tilde{r}_p(\alpha)=r_p(\alpha)$
and hence $x\in\lazy_\alpha\subseteq\lazy_{(1+8\delta)\alpha}$.
For the rest of the proof, we assume without loss of generality that 
$\tilde{r}_p(\alpha)$ lies in some $\tilde{R}_{(A,B)}$ and $p\in A$.

Let $q\neq p$ be the representative of $A$.
Then $R_{(A,B)}$ is a subset of $R_q$. By our choice of
$R_{(A,B)}$ and $\tilde{R}_{(A,B)}$, we know that $R_{(A,B)}$ contains
the value $(1+8\delta)\tilde{r}_p(\alpha)$ because $(1+8\delta)<2$
for $\delta<1/10$. On the other hand,
$\distance{p}{q}\leq\delta d(A,B)\leq 8\delta\tilde{r}_p(\alpha)$, where the last inequality comes from the fact
 $\tilde{r}_p(\alpha)\in R_{(A,B)}$. By triangle inequality,
$\distance{x}{q}\leq (1+8\delta)\tilde{r}_p(\alpha)$,
which implies the claim.
\end{proof}

The lemma implies that the two filtrations $(1+8\delta)$-approximate
each other. Next, we consider
\[
\widetilde{B}_\alpha:=\{ B(p,\tilde{r}_p(\alpha))\mid p\in P\}.
\]
which is the collection of balls whose union forms $\widetilde{\lazy}_\alpha$.
By the nerve theorem, the nerve of $\widetilde{B}_\alpha$ is homotopically
equivalent to $\widetilde{\lazy}_\alpha$, and since the balls have
non-decreasing radius, the persistence modules
$(H(\widetilde{\lazy}_\alpha))_{\alpha\geq 0}$
and $(H(\nerve(\widetilde{B}_\alpha))_{\alpha\geq 0}$ are persistence-equivalent.
The latter, however, is exactly the \Cech persistence module, as we show next.

\begin{lemma}
\label{lem:nerves_are_same}
$\nerve(\widetilde{B}_\alpha) = \nerve(\B_\alpha)=\cech_\alpha$
for all scales $\alpha\ge 0$.
\end{lemma}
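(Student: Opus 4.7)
The plan is to handle the two equalities separately. The right-hand equality $\nerve(\B_\alpha)=\cech_\alpha$ is just the definition of the \Cech complex, so the real content is $\nerve(\widetilde{B}_\alpha) = \nerve(\B_\alpha)$, which I would prove by two inclusions. The vertex sets of both nerves are already the same (indexed by $P$), so only the simplex sets need to be compared.

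The easy direction $\nerve(\widetilde{B}_\alpha) \subseteq \nerve(\B_\alpha)$ is immediate: since $\tilde{r}_p(\alpha) \leq \alpha$ by definition of the max, each ball $B(p,\tilde{r}_p(\alpha))$ is contained in $B(p,\alpha)$, and a common intersection of smaller balls is also a common intersection of the larger ones. For the reverse direction, I would take $\sigma=\{p_0,\ldots,p_k\}\in\cech_\alpha$ and let $c$ and $\rho\leq\alpha$ denote the center and radius of the minimum enclosing ball (MEB) of the vertex set of $\sigma$. Then $\|c-p_i\|\leq\rho$ for all $i$, so it suffices to prove $\tilde{r}_{p_i}(\alpha)\geq\rho$ for every $i$; this would place $c$ in $B(p_i,\tilde{r}_{p_i}(\alpha))$ for all $i$ and certify $\sigma\in\nerve(\widetilde{B}_\alpha)$.

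The core step is a geometric lemma: for every $p_i\in\sigma$ there exists a partner $p_j\in\sigma$ with $\rho \leq \|p_i-p_j\| \leq 2\rho$. The upper bound is trivial because both points lie in the MEB. The lower bound uses MEB optimality, which forces $c$ to lie in the convex hull of the ``tight'' points (those with $\|c-p_j\|=\rho$); otherwise $c$ could be shifted towards $p_i$ while reducing all tight distances. This gives a tight $p_j$ with $(p_j-c)\cdot(p_i-c)\leq 0$, and expanding $\|p_i-p_j\|^2 = \|p_i-c\|^2 + \rho^2 - 2(p_i-c)\cdot(p_j-c) \geq \rho^2$ yields the claim. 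With such a $p_j$ fixed, I would invoke the WSP $(A,B)\in W$ covering $(p_i,p_j)$ with $p_i\in A$ without loss of generality. The $\delta$-well-separation property with $\delta\leq 1/10$ pins $d(A,B)$ to within a factor $1/(1\pm2\delta)$ of $\|p_i-p_j\|$, and a short arithmetic check using $\|p_i-p_j\|\in[\rho,2\rho]$ shows $d(A,B)/4\leq\rho\leq 4d(A,B)$, i.e.\ $\rho\in\tilde{R}_{(A,B)}$. Since $p_i\in A$, we have $\tilde{R}_{(A,B)}\subseteq\tilde{R}_{p_i}$, and $\rho\leq\alpha$ gives $\tilde{r}_{p_i}(\alpha)\geq\rho$ as required.

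The main obstacle is the geometric lemma on the existence of a partner with distance at least $\rho$. Without this lower bound, the covering WSP could correspond to a much shorter distance, so the active interval $\tilde{R}_{(A,B)}$ would sit entirely below $\rho$ and the argument would collapse. The convex-hull/MEB-optimality observation is precisely the leverage that makes the generous factor of $4$ baked into $\tilde{R}_{(A,B)}$ (which in turn is the reason for scaling $R_{(A,B)}$ and $\tilde{R}_{(A,B)}$ differently in the earlier definitions) just wide enough to envelop $\rho$ in every case.
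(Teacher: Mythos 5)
Your proposal is correct and follows essentially the same route as the paper: the inclusion $\nerve(\widetilde{B}_\alpha)\subseteq\nerve(\B_\alpha)$ from $\tilde{r}_p(\alpha)\leq\alpha$, and for the converse, showing $\tilde{r}_{p_i}(\alpha)\geq\rho$ for the minimum-enclosing-ball radius $\rho$ by finding a partner $p_j$ at distance in $[\rho,2\rho]$ and locating $\rho$ in $\tilde{R}_{(A,B)}$ for the covering WSP via the same $\delta\leq 1/10$ arithmetic. The only difference is that you explicitly justify the existence of a partner at distance at least $\rho$ via MEB optimality (center in the convex hull of tight points), a step the paper asserts as ``clear'' after choosing $p_j$ as the farthest point of $\sigma$ from $p_i$.
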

\begin{proof}
Because $\tilde{r}_p(\alpha)\leq\alpha$,
$\nerve(\widetilde{B}_\alpha) \subseteq \nerve(\B_\alpha)$
follows at once.
For the other direction, fix any simplex $\sigma=(p_0,\ldots,p_k)$ 
of $\cech_\alpha$.
Let $\rho\leq\alpha$ denote the smallest radius 
such that the $\rho$-balls around the $p_i$
intersect. It suffices to show that $\tilde{r}_{p_i}(\alpha)\geq \rho$
for all $i=\{0,\ldots,d\}$.
For $p_i$ fixed, let $p_j$ denote the point of $\sigma$ with maximal
distance to $p_i$.
Clearly, $\distance{p_i}{p_j}\geq \rho$. On the other hand,
there exists a WSP $(A,B)$ that covers $(p_i,p_j)$ giving rise to
the interval $\tilde{R}_{(A,B)}=[d(A,B)/4,4d(A,B)]$.
With the well-separation property and the triangle inequality,
\begin{align*}
d(A,B) \geq  \distance{p_i}{p_j}-diam (A)-diam (B)
 \geq \rho-2\delta d(A,B),
\end{align*}
which leads to the inequality
\[
\rho\leq (1+2\delta)d(A,B)\leq 2d(A,B)
\]
since $\delta\leq 1/10$. Moreover, $\distance{p_i}{p_j}\leq 2\rho$, and
a similar calculation shows that
\[
\rho\geq \frac{(1-2\delta)}{2}d(A,B) \geq d(A,B)/4
\]
which implies that $\rho\in \tilde{R}_{(A,B)}$. Since $p_i\in P_A$ 
or $p_i\in P_B$,
this implies immediately that $\tilde{r}_{p_i}(x)\geq \rho$ for $x\geq \rho$.
\end{proof}

Combining Lemma~\ref{lem:nerves_are_same} with Lemma~\ref{lem:sandwich_tilde}
completes the proof of Lemma~\ref{lem:lazy_interleaving}.

\subsection{Pixelization}
\label{subsection:pixelization}

We next define a collection of pixels $\lpixels_\alpha$
for each $\alpha\in I$. 
In contrast to Section~\ref{section:digitization},
these pixels will be taken from different grids. 
For technical reasons, we enlarge the critical intervals $R_{(A,B)}$
by the smallest amount such that their endpoints are points in $I$. 
The definitions of $R_p$, $r_p(\alpha)$ and $\lazy_\alpha$ get adapted
accordingly, without affecting the claims of Lemma~\ref{lem:sandwich_tilde}
and Lemma~\ref{lem:nerves_are_same}.

We assume again without loss of generality that we start the construction 
at a minimal scale $\alpha_0$
that is sufficiently small and set $\lpixels_{\alpha_0}=P$.
We define $\lpixels_{\alpha}$ for $\alpha>\alpha_0$ 
inductively:
let $\beta$ denote the largest power
of $2$ that is not larger than $\alpha$ 
as in Section~\ref{section:digitization}.
Call a point $p\in P$ \emph{active} if $r_p(\alpha)=\alpha$,
and \emph{inactive} otherwise. Let $\lpixels'_\alpha$ be the collection
of all pixels whose centers lie in the any ball of radius $(1+\eps/2)r_p(\alpha)$ at an active points $p$.
Define $\lpixels_\alpha$ as the inclusion-maximal pixels of the set
$\lpixels_{\alpha/(1+\eps)}\cup \lpixels'_\alpha$.
We denote by $|\lpixels_\alpha|$ the union of the pixels in $\lpixels_\alpha$.
See Figure~\ref{figure:hetero_2d} for an illustration.
We use the standard filling technique to define $|\lpixels_\alpha|$
for every $\alpha\geq 0$.

\begin{figure}[ht!]
\centering
\begin{subfigure}[t]{0.4\columnwidth}
\centering
\includegraphics[width=0.99\textwidth]{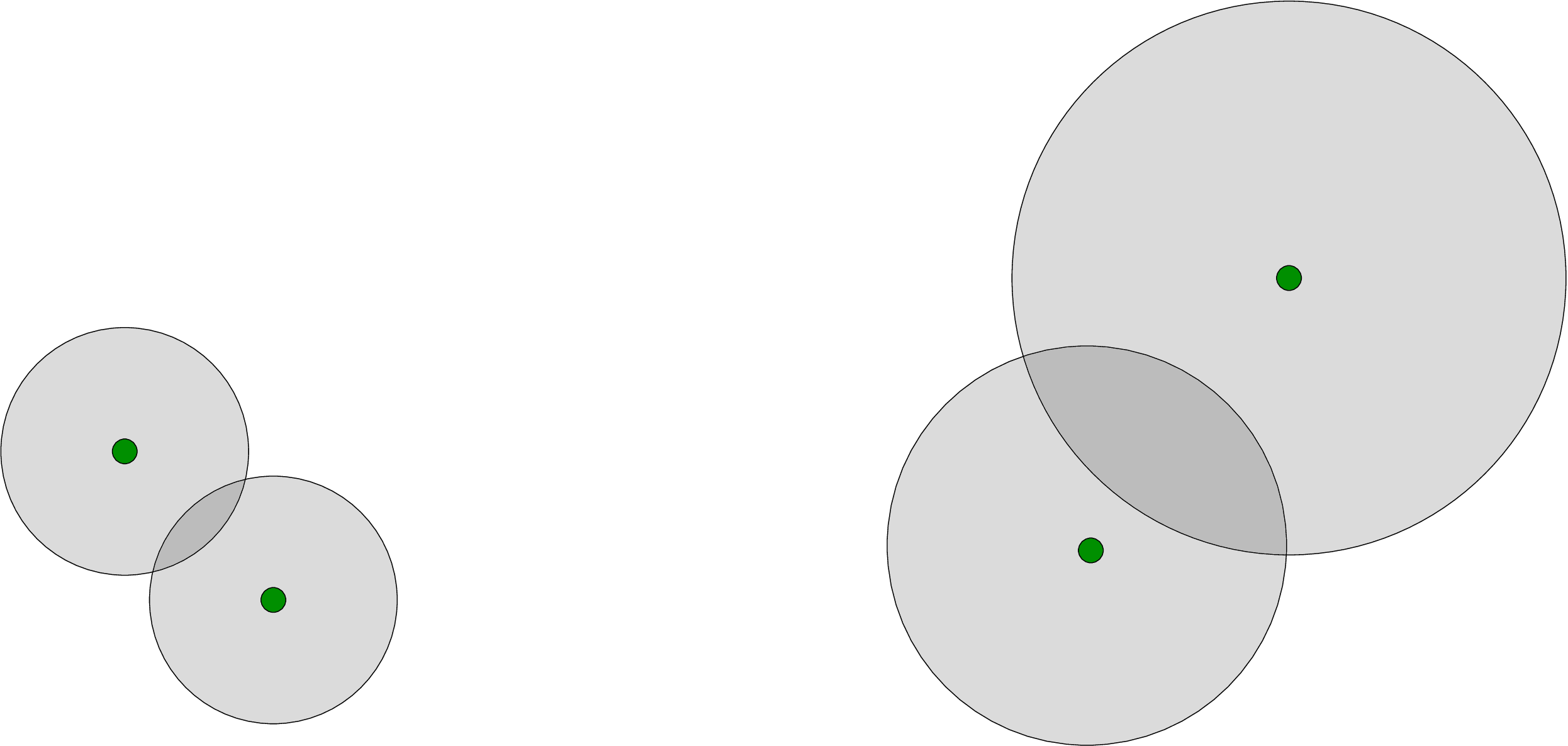}
\caption{
A collection of balls of three different radii.
}
\end{subfigure}\hspace{0.5em}
\begin{subfigure}[t]{0.4\columnwidth}
\centering
\includegraphics[width=0.99\textwidth]{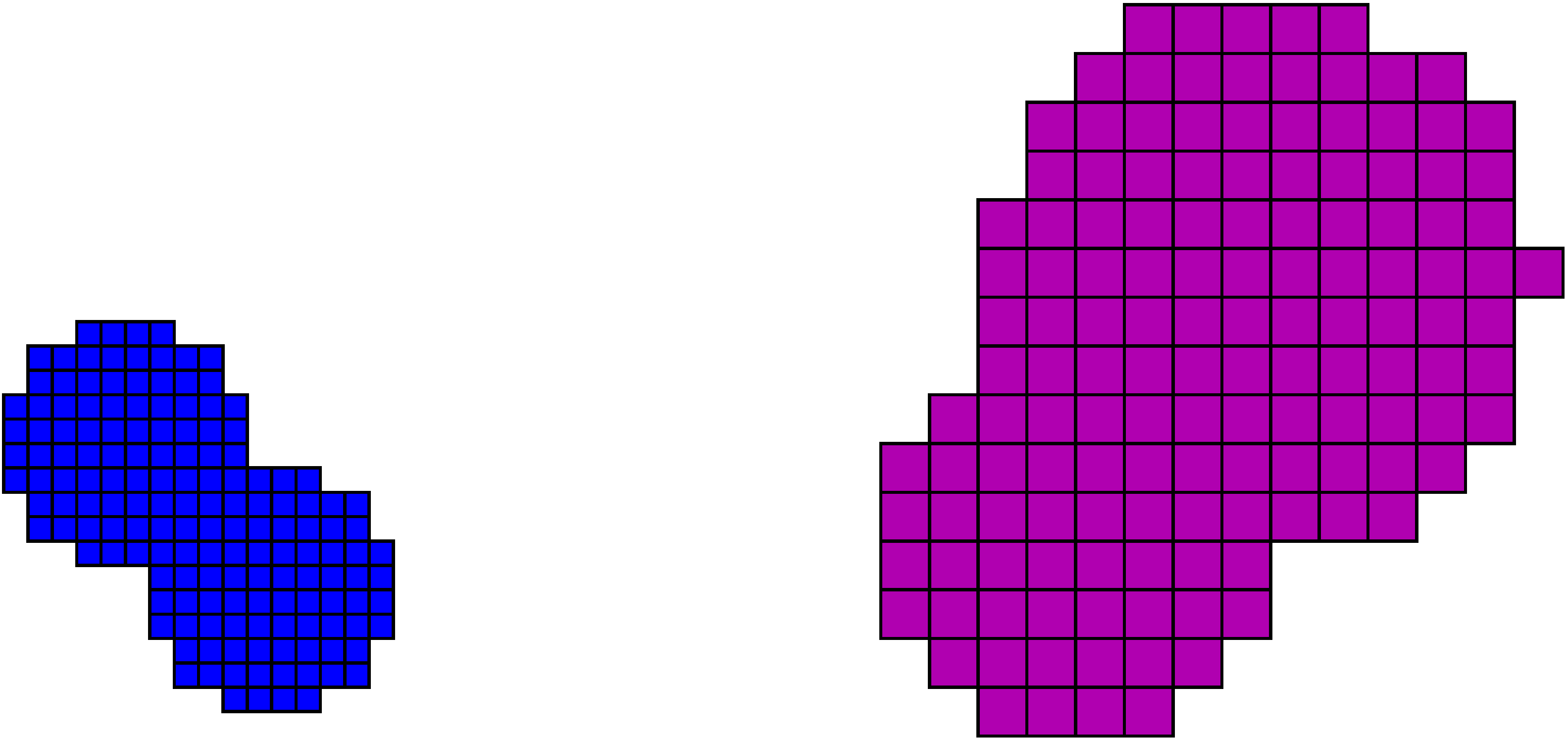}
\caption{
The corresponding pixels of two sizes.
}
\end{subfigure}
\caption{An example of digitization using lazy pixels.}
\label{figure:hetero_2d}
\end{figure}

To rephrase our construction: at scale $\alpha$, we resample the pixels
of all balls which are growing from $\alpha/(1+\eps)$ to $\alpha$
in the filtration $(\lazy_\alpha)_{\alpha\ge 0}$, 
and add them to the set of pixels,
removing lower-scale pixels which are covered by the new ones.

The idea of the digitization is to approximate $\lazy_\alpha$ with pixels.
Indeed, we have that $\lazy_\alpha\subseteq |\lpixels_\alpha|$. That follows
inductively, observing that the balls that grow from $\lazy_\alpha$
to $\lazy_{(1+\eps)\alpha}$ are covered by pixels.
It is also true that $|\lpixels_\alpha|\subseteq \lazy_\alpha^{1+\eps}$,
where $\lazy_\alpha^{1+\eps}$ is obtained from $\lazy_\alpha$ 
by enlarging every ball by a factor of $(1+\eps)$. 
The proofs are similar to the proof of Lemma~\ref{lemma:digital_sandwich}.
However, it is not true that $\lazy_\alpha^{1+\eps}=\lazy_{(1+\eps)\alpha}$
in general (more precisely, neither $\lazy_\alpha^{1+\eps}\subseteq\lazy_{(1+\eps)\alpha}$ nor $\lazy_\alpha^{1+\eps}\supseteq\lazy_{(1+\eps)\alpha}$ holds in general). Therefore, the simple sandwiching strategy 
from Section~\ref{section:digitization} fails. We need a more refined argument,
showing that the interleaving still works on the homology level.

For that, the following observation is crucial:
\begin{lemma}
The spaces $\lazy_\alpha^{1+\eps}$ and $\lazy_{(1+\eps)\alpha}$
are homotopically equivalent, and the maps $f_1,f_2$ realizing this
homotopic equivalence can be chosen 
as deformation retracts
to the intersection $\lazy_\alpha^{1+\eps}\cap\lazy_{(1+\eps)\alpha}$.
\end{lemma}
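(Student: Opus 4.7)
The plan is to construct explicit deformation retracts $f_1 \colon \lazy_\alpha^{1+\eps} \to \lazy_\alpha^{1+\eps} \cap \lazy_{(1+\eps)\alpha}$ and $f_2 \colon \lazy_{(1+\eps)\alpha} \to \lazy_\alpha^{1+\eps} \cap \lazy_{(1+\eps)\alpha}$; composing each with the inclusion of the intersection into the opposite space yields the desired homotopy equivalence. Write $b_p := (1+\eps) r_p(\alpha)$ and $c_p := r_p((1+\eps)\alpha)$, so that $\lazy_\alpha^{1+\eps} = \bigcup_{p\in P} B(p, b_p)$ and $\lazy_{(1+\eps)\alpha} = \bigcup_{p\in P} B(p, c_p)$. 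A case analysis on whether $p$ is active in $(\alpha, (1+\eps)\alpha]$ shows that either $c_p = r_p(\alpha)$ and $b_p = (1+\eps)c_p$ (so $B(p, c_p) \subseteq B(p, b_p)$), or $c_p = (1+\eps)\alpha \geq b_p$ (so $B(p, b_p) \subseteq B(p, c_p)$). In either case, both balls share the center $p$, one contains the other, and $B(p, \min(b_p, c_p))$ lies inside the intersection.

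For $f_1$, the idea is to radially shrink every $B(p, b_p)$ with $b_p > c_p$ toward its center, down to radius $c_p$, leaving the other balls fixed. Concretely, within $B(p, b_p)$, send $x$ at time $t$ to the point at distance $(1-t)\|x-p\| + t\min(\|x-p\|, c_p)$ from $p$ along the segment from $p$ to $x$. To combine the per-ball shrinkings into a single continuous global deformation on $\lazy_\alpha^{1+\eps}$, I would decompose the space via the power diagram of the weighted centers $\{(p, b_p)\}_{p\in P}$, assigning each $x$ to a unique cell of some $p^\ast(x)$ and performing the radial shrink toward $p^\ast(x)$ within that cell. The symmetric construction yields $f_2$: for each $p$ with $c_p > b_p$, radially shrink $B(p, c_p)$ toward $p$ until it fits inside $B(p, b_p)$, using the power diagram of $\{(p, c_p)\}$.

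I would then verify that (i) each trajectory stays inside the ambient space (immediate, since radial shrinking within a convex ball never leaves it), (ii) the endpoint of every trajectory lies in the intersection (by construction, it lies in some $B(p, \min(b_p, c_p))$, contained in both spaces), and (iii) the deformation fixes the intersection pointwise at all times. The main obstacle is establishing global continuity of the deformation across boundaries of power-diagram cells, since radial directions toward distinct centers disagree on shared faces. This is where I expect the proof to get most delicate: one option is to interpolate the radial directions smoothly using a partition of unity subordinate to the cover; alternatively, one may bypass explicit retractions altogether by invoking a nerve-theorem argument. Since $\{B(p, b_p)\}$, $\{B(p, c_p)\}$, and $\{B(p, \min(b_p, c_p))\}$ are all good covers of their respective unions (by Theorem~\ref{theorem:nervethm}), a persistent-nerve comparison between the three cover-inclusions would realize the claimed deformation retracts abstractly at the level of nerves, which suffices for the homotopy statement.
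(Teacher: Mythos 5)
The central ingredient of the paper's argument is missing from your proposal: the reason a deformation retraction to the intersection exists at all is that, by construction of the (scaled) active intervals, the three collections of concentric balls $\{B(p,b_p)\}_{p\in P}$, $\{B(p,c_p)\}_{p\in P}$ and $\{B(p,\min(b_p,c_p))\}_{p\in P}$ all have the \emph{same nerve}, i.e.\ shrinking the oversized balls neither destroys nor creates any intersection pattern. The paper reduces the lemma to exactly this statement (if $A_i\subseteq B_i$ are concentric closed balls with $\nrv(\mathcal A)=\nrv(\mathcal B)$, then $|\mathcal B|$ strongly deformation retracts to $|\mathcal A|$) and proves it by a flow argument on the lower envelope of distance-like functions in the spirit of the generalized gradient of~\cite{ccl-sampling}; the equal-nerve hypothesis is what guarantees that no singularity of the flow lies in the region swept by the retraction, and a limiting argument handles the passage from slightly inflated balls back to the closed ones. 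Your proposal never states or uses nerve equality; without it the claimed statement is simply false (shrinking a ball can break an intersection and change the homotopy type), so the correct case analysis of $b_p$ versus $c_p$ cannot by itself carry the proof.

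The explicit construction you sketch also fails on its own terms. The set $\lazy_\alpha^{1+\eps}\cap\lazy_{(1+\eps)\alpha}$ is the union of all pairwise intersections $B(p,b_p)\cap B(q,c_q)$ and is in general strictly larger than $\bigcup_p B(p,\min(b_p,c_p))$; moreover, a point of some $B(p,\min(b_p,c_p))$ may lie in the power cell of a different center $q$ at distance greater than $c_q$ from $q$, and your map then moves it. Hence item (iii) fails: the proposed homotopy neither fixes the intersection pointwise nor retracts onto it, and this pointwise-fixing property is precisely what is used afterwards (the step $f'(f(z))=f(z)$ in the proof of Lemma~\ref{lemma:lazysandwich_final}). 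The continuity across power-cell boundaries, which you yourself flag as the main obstacle, is left unresolved, and the suggested fallback via the nerve theorem only yields an abstract homotopy equivalence of the three unions; it does not produce maps realized as deformation retractions to the intersection, which is the actual content of the lemma.
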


The statement follows from the following statement on unions of balls,
noting that the balls forming $\lazy_\alpha^{1+\eps}$, $\lazy_{(1+\eps)\alpha}$, and $\lazy_\alpha^{1+\eps}\cap\lazy_{(1+\eps)\alpha}$
all have the same nerve by construction.

\begin{lemma}
Let $\mathcal{A}:=\{A_1,\ldots,A_n\}$, 
$\mathcal{B}:=\{B_1,\ldots,B_n\}$ be collections of closed balls
such that $A_i\subseteq B_i$ and $A_i$ and $B_i$ have the same center
for all $i=1,\ldots,n$. 
If $\nrv(\mathcal{A})=\nrv(\mathcal{B})$, there
is strong deformation retraction from $|\mathcal{B}|$ to $|\mathcal{A}|$.
\end{lemma}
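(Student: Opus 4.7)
The plan is to build an interpolating family of ball configurations, show that the nerve is constant along the interpolation, and extract a strong deformation retract from this shrinking.

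First, for each $i$, let $c_i$ denote the common center of $A_i$ and $B_i$, and let $r_i \leq R_i$ denote their respective radii. For $t\in [0,1]$, define $B_i^t$ to be the closed ball around $c_i$ of radius $r_i^t := (1-t)R_i + t r_i$, and set $\mathcal{B}^t := \{B_i^t\}_i$, so that $\mathcal{B}^0 = \mathcal{B}$, $\mathcal{B}^1 = \mathcal{A}$, and $B_i^{t'}\subseteq B_i^t$ for $t\leq t'$. The key observation is that $\nrv(\mathcal{B}^t) = \nrv(\mathcal{B})$ for every $t \in [0,1]$: the inclusion ``$\subseteq$'' is immediate from $B_i^t \subseteq B_i$, and for ``$\supseteq$'' any simplex $\sigma \in \nrv(\mathcal{B})$ also lies in $\nrv(\mathcal{A})$ by hypothesis, so $\bigcap_{i\in\sigma}A_i \neq \emptyset$; since $A_i \subseteq B_i^t$, this common point witnesses $\sigma \in \nrv(\mathcal{B}^t)$.

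Next, I would extract the retract by tracking each $x \in |\mathcal{B}|$ as $t$ grows. Define $\tau(x) := \sup\{t \in [0,1] : x \in |\mathcal{B}^t|\}$; this is a continuous function on $|\mathcal{B}|$ that equals $1$ precisely when $x \in |\mathcal{A}|$, and otherwise records the first time at which the shrinking balls cease to cover $x$. The homotopy $H : |\mathcal{B}| \times [0,1] \to |\mathcal{B}|$ is defined so that $H(x,0)=x$, $H(x,s)=x$ for all $s$ whenever $x \in |\mathcal{A}|$, and for $s > \tau(x)$ the point $x$ is pushed to a target in $\bigcap_{i \in \sigma(x)} B_i^s$, where $\sigma(x) := \{i : x \in B_i\}$. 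The constant-nerve property guarantees this intersection is non-empty, and since it is contained in every $B_i$ with $x \in B_i$, the straight-line trajectory of $x$ remains inside $|\mathcal{B}|$ throughout.

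The main obstacle is ensuring continuity of $H$ at points where $\sigma(x)$ jumps, because a new ball entering $\sigma(x)$ abruptly shrinks the convex target. I would regularize this by weighting with $w_i(x) := \max(0, R_i - \|x-c_i\|)$, which fade continuously to zero as $x$ exits $B_i$, and defining $H(x,s)$ as the corresponding weighted average of radial projections onto the individual $B_i^s$. A more abstract alternative is to combine the homotopy equivalences $|\mathcal{B}^t| \homt \nrv(\mathcal{B})$ coming from the nerve theorem with the fact that $|\mathcal{A}| \hookrightarrow |\mathcal{B}|$ is a cofibration (via compatible triangulations) and invoke Whitehead's theorem to promote this homotopy equivalence to a strong deformation retract. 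Either way, the explicit interpolation above reduces the problem to careful but routine continuity bookkeeping.
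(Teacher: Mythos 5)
Your constant-nerve interpolation $\mathcal{B}^t$ is fine, and you correctly identify the continuity of the pushing map at points where $\sigma(x)$ jumps as the main obstacle — but that obstacle is essentially the whole content of the lemma, and neither of your proposed fixes closes it. The weighted average of radial projections fails on two counts: a convex combination of points lying in different balls $B_i^s$ need not lie in the (non-convex) union $|\mathcal{B}^s|$, so the regularized $H$ can leave $|\mathcal{B}|$; and at $s=1$ it is not the identity on $|\mathcal{A}|$ (a point $x\in A_j\cap B_i$ with $x\notin A_i$ receives a nonzero pull toward $A_i$), so it is not even a retraction onto $|\mathcal{A}|$, let alone a strong deformation retraction with $H(x,s)=x$ for all $x\in|\mathcal{A}|$ and all $s$. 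Moreover, "pushed to a target in $\bigcap_{i\in\sigma(x)}B_i^s$" never specifies the target, and choosing it continuously in $x$ across the strata where $\sigma(x)$ changes is exactly the non-routine part. Your abstract alternative also has a gap: knowing $|\mathcal{A}|\homt\nrv(\mathcal{A})=\nrv(\mathcal{B})\homt|\mathcal{B}|$ does not show that the \emph{inclusion} $|\mathcal{A}|\hookrightarrow|\mathcal{B}|$ is a homotopy equivalence, and Whitehead's theorem is not the missing ingredient; what you need is that the nerve equivalences commute (up to homotopy) with the inclusion, i.e., the persistent nerve lemma of~\cite{co-pnl}. Combined with the fact that a closed inclusion of finite unions of closed balls is a cofibration, that route can be completed, but the naturality step is the key point and is absent from your argument.

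The paper takes a genuinely different (and heavier) route precisely because of these issues: it first enlarges the balls of $\mathcal{B}$ slightly to a family $\mathcal{C}$ without changing the nerve, encodes $A_i$ and $C_i$ as sublevel sets of distance-like functions, and uses the generalized-gradient flow of their lower envelope in the spirit of~\cite{ccl-sampling}. Constancy of the nerve guarantees that no singular point of this flow lies in $|\mathcal{C}|\setminus|\mathcal{A}|$, so the flow provides a deformation retraction of $|\mathcal{C}|$ onto $|\mathcal{A}|$; a limiting argument in the enlargement parameter then restricts the flow to $|\mathcal{B}|$ (since the naive restriction could push points out of $|\mathcal{B}|$). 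This machinery — critical-point theory for the lower envelope plus the $\eps\to 0$ limit — is exactly what replaces the step you dismiss as "careful but routine continuity bookkeeping," and without it (or the persistent-nerve/cofibration argument made precise) your proof does not go through.
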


We only briefly sketch the proof idea and postpone details to an extended
version.
Note that because the balls are closed, there exists an $\eps>0$,
such that we can increase all balls of $\mathcal{B}$ without changing
the nerve.
This yields a collection of balls $\mathcal{C}:=\{C_1,\ldots,C_n\}$ 
with the same centers, such that $A_i\subsetneq C_i$.
We can define a collection of distance-like functions $f_1,\ldots,f_n$,
such that $A_i$ is the sublevel set of $f_i$ for value $1$,
and $C_i$ is the sublevel set of $f_i$ for value $2$.
Let $f$ denote the lower envelope of these functions. This function
is continuous, but not differentiable; however, using techniques
similar to the \emph{generalized gradient}~\cite{ccl-sampling}, 
it is possible to define a flow on $|\mathcal{C}|$.
The singularities of this flow are outside of 
$|\mathcal{C}|\setminus|\mathcal{A}|$ because a singular points
of the generalized gradient triggers a change in the nerve.
That flow hence defines a deformation retract from $|\mathcal{C}|$
to $|\mathcal{A}|$.

We can use the above to define a deformation retract from $|\mathcal{B}|$
to $|\mathcal{A}|$: 
Simply restricting
the retract to $|\mathcal{B}|$ is not sufficient 
because a point of $|\mathcal{B}|$ may be pushed out of $|\mathcal{B}|$
during the flow.
However, note that $\mathcal{C}$ was defined using
some parameter $\eps$.
Call a point $x\in|\mathcal{B}$ \emph{$\eps$-affected} if there exists
an $i$ such that $x\notin B_i$, but $x\in C_i$.
For every $x\in |\mathcal{B}|$,
we can find some $\eps>0$ such that an open neighborhood of $x$
is not $\eps$-affected. If $x$ is not $\eps$-affected,
its flow does not change when further decreasing $\eps$.
Hence, for every $x\in |\mathcal{B}|$, we can define its flow to
be the limiting flow for $\eps\to 0$. Because of the neighborhood property
above, the resulting flow is continuous on $|\mathcal{B}|$,
and defines the desired deformation retraction.

\begin{lemma}[Lazy sandwich lemma]
\label{lemma:lazysandwich_final}
The two persistence modules $(H(|\lpixels_\alpha|))_{\alpha\ge 0}$ and 
$(H(\lazy_\alpha))_{\alpha\ge 0}$ are $(1+\eps)^2$-approximations 
of each other.
\end{lemma}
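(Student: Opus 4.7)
The plan is to construct a strong $(1+\eps)$-interleaving between the discrete modules $(H(|\lpixels_\alpha|))_{\alpha\in I}$ and $(H(\lazy_\alpha))_{\alpha\in I}$; extending to continuous scales with the standard filling trick (as in Theorem~\ref{theorem:digital_basic_intlv}) then yields the claimed $(1+\eps)^2$-approximation. Two facts drive the construction: the space-level sandwich $\lazy_\alpha\subseteq|\lpixels_\alpha|\subseteq\lazy_\alpha^{1+\eps}$, and the homotopy equivalence $\lazy_\alpha^{1+\eps}\simeq\lazy_{(1+\eps)\alpha}$ from the preceding unnumbered lemma, realized by mutual deformation retracts onto $X_\alpha:=\lazy_\alpha^{1+\eps}\cap\lazy_{(1+\eps)\alpha}$.

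I would first verify both sandwich inclusions by unfolding the definition of $\lpixels_\alpha$: the left inclusion follows by induction over active scales (whenever an active ball grows, the new material is covered by pixels added to $\lpixels'_\alpha$), and the right is a triangle-inequality argument parallel to Lemma~\ref{lemma:digital_sandwich}, exploiting that a pixel of $\lpixels_\alpha$ has its center within $(1+\eps/2)r_p(\alpha)$ of some input point and diameter at most a constant fraction of $\eps\,r_p(\alpha)$.

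Writing $c:=(1+\eps)$ and letting $h_\alpha:\lazy_\alpha^{1+\eps}\to\lazy_{c\alpha}$ denote the retract $r_\alpha:\lazy_\alpha^{1+\eps}\to X_\alpha$ followed by inclusion, I would define
\[
\gamma_\alpha:H(\lazy_\alpha)\to H(|\lpixels_{c\alpha}|)
\quad\text{via }\lazy_\alpha\hookrightarrow|\lpixels_\alpha|\hookrightarrow|\lpixels_{c\alpha}|,
\]
\[
\delta_\alpha:H(|\lpixels_\alpha|)\to H(\lazy_{c\alpha})
\quad\text{via }|\lpixels_\alpha|\hookrightarrow\lazy_\alpha^{1+\eps}\xrightarrow{h_\alpha}\lazy_{c\alpha}.
\]
By naturality of all inclusions involved, the commutativity of Diagram~\eqref{diag:strong_diag} reduces to the two core identities $\delta_{c\beta}\circ\gamma_\beta=\phi_{\beta,c^2\beta}$ and $\gamma_{c\beta}\circ\delta_\beta=\psi_{\beta,c^2\beta}$.

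The first identity is immediate: restricted to $\lazy_\beta\subseteq\lazy_{c\beta}\subseteq X_{c\beta}$ the retract $r_{c\beta}$ acts as the identity, so the composition collapses to the direct inclusion $\lazy_\beta\hookrightarrow\lazy_{c^2\beta}$. The main obstacle will be the second identity. The naive attempt---using the retract homotopy on $\lazy_\beta^{1+\eps}$ to deform $h_\beta$ to the identity---fails because $\lazy_\beta^{1+\eps}\not\subseteq|\lpixels_{c^2\beta}|$ in general (near a point $p$ that is inactive throughout $[\beta,c^2\beta]$ the pixel reach is only $\approx(1+\tfrac{5\eps}{8})r_p(\beta)$, strictly less than the $(1+\eps)r_p(\beta)$ needed). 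I would resolve this by diagram-chasing through the enlarged ambient space $\lazy_{c\beta}^{1+\eps}$, which does contain both $\lazy_\beta^{1+\eps}$ (by monotonicity of $r_p$) and $|\lpixels_{c\beta}|$ (by the sandwich at scale $c\beta$). Inside $\lazy_{c\beta}^{1+\eps}$ the retract homotopy now lies entirely in the ambient space, so on homology the map $\lazy_\beta^{1+\eps}\xrightarrow{h_\beta}\lazy_{c\beta}\hookrightarrow\lazy_{c\beta}^{1+\eps}$ coincides with the direct inclusion $\lazy_\beta^{1+\eps}\hookrightarrow\lazy_{c\beta}^{1+\eps}$. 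Appending $h_{c\beta}:\lazy_{c\beta}^{1+\eps}\to\lazy_{c^2\beta}$ and the sandwich inclusion into $|\lpixels_{c^2\beta}|$ reduces the second identity to an instance of the (already verified) first identity at scale $c\beta$, after which Diagram~\eqref{diag:strong_diag} commutes and the lemma follows.
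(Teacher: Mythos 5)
Your construction matches the paper's in its architecture: the same interleaving maps (an inclusion-induced $\gamma$, and a $\delta$ obtained from the sandwich inclusion $|\lpixels_\alpha|\subseteq\lazy_\alpha^{1+\eps}$ followed by the retraction-based homotopy equivalence onto $\lazy_\alpha^{1+\eps}\cap\lazy_{(1+\eps)\alpha}$), restriction to scales in $I$, and the filling trick to pass to all scales. Writing $c=(1+\eps)$, your proof of the lazy-side identity $\delta_{c\beta}\circ\gamma_\beta=\phi_{\beta,c^2\beta}$ is correct and, if anything, cleaner than the paper's treatment: since $\lazy_\beta\subseteq\lazy_{c\beta}\subseteq\lazy_{c\beta}^{1+\eps}\cap\lazy_{c^2\beta}$, the retraction is pointwise the identity there and the composite collapses to the inclusion. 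You also correctly spot the genuine subtlety, namely that $\lazy_\beta^{1+\eps}\not\subseteq|\lpixels_{c^2\beta}|$, so the retraction homotopy cannot simply be pushed into the pixel filtration.

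However, your fix for the pixel-side identity does not close the argument. After the comparison inside the ambient space $\lazy_{c\beta}^{1+\eps}$ and after appending $h_{c\beta}$ and the inclusion $\lazy_{c^2\beta}\subseteq|\lpixels_{c^2\beta}|$, what you have actually established is that $\gamma_{c\beta}\circ\delta_\beta$ equals the homology map induced by $|\lpixels_\beta|\hookrightarrow\lazy_{c\beta}^{1+\eps}\xrightarrow{h_{c\beta}}\lazy_{c^2\beta}\hookrightarrow|\lpixels_{c^2\beta}|$; it still remains to identify this with the inclusion-induced map $\psi_{\beta,c^2\beta}$. That residual statement is \emph{not} an instance of your first identity: there the source is $H(\lazy_\beta)$ and the pointwise trick applies because $\lazy_\beta$ lies in the retract target, whereas here the source is $H(|\lpixels_\beta|)$, and $|\lpixels_\beta|\not\subseteq\lazy_{c^2\beta}$ (around a point $p$ that stays inactive, pixels reach out to roughly $(1+\tfrac{5\eps}{8})r_p$ while the lazy ball stays at radius $r_p$), so $h_{c\beta}$ does not act as the identity on $|\lpixels_\beta|$ and its defining homotopy again lives in $\lazy_{c\beta}^{1+\eps}$, which is not contained in $|\lpixels_{c^2\beta}|$. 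In other words, the obstruction you identified at scale $\beta$ reappears verbatim at scale $c\beta$, and iterating the ambient trick only defers it; some new input is needed, e.g.\ the paper's route of arguing at the level of cycles that $z$ and its retracted image represent the same class inside the pixel space (which requires controlling where the retraction homotopy of a pixel cycle lies), or showing directly that the deformation retraction restricted to $|\lpixels_\beta|$ stays inside $|\lpixels_{c^2\beta}|$. A secondary looseness: the two rectangular sub-diagrams of Diagram~\eqref{diag:strong_diag} involving $\delta$ compare retractions at two different scales and do not follow from ``naturality of inclusions'' alone; they need a similar compatibility argument (the paper skips these as well).
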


\begin{proof}
We need to define interleaving maps such that the four diagrams of ~(\ref{diag:strong_diag}) commute. 
We restrict our attention to scales in $I$ and show
a $(1+\eps)$-interleaving for these scales, which implies the result.

From $H(\lazy_\alpha)\to H(|\lpixels_{(1+\eps)\alpha}|)$, we take the homology map
induced by the inclusion $\lazy_\alpha\subseteq |\lpixels_\alpha|$. With that,
the diagram
\begin{equation}
\xymatrix{
H(\lazy_{\alpha}) \ar[rr] \ar[rd] && H(\lazy_{\alpha'}) \ar[rd] \\
& H(|\lpixels_{(1+\eps)\alpha}|) \ar[rr] & & H(|\lpixels_{(1+\eps)\alpha'}|)
}
\end{equation}
commutes, since it already commutes on a space level.

We define the map $\phi:H(|\lpixels_{\alpha}|)\to H(\lazy_{(1+\eps)\alpha})$
as the composition of the homology map 
$H(|\lpixels_{\alpha}|)\to H(\lazy_\alpha^{1+\eps})$ induced by inclusion 
and the map $f^\ast:H(\lazy_\alpha^{1+\eps})\to H(\lazy_{(1+\eps)\alpha})$,
where $f$ is the map from above realizing the homotopy equivalence. 
We consider the diagram
\begin{equation}
\xymatrix{
 & H(\lazy_{c\alpha})\ar[r] & H(\lazy_{c\alpha'})
\ar[rd] &\\
H(|\lpixels_{\alpha}|) \ar[rrr] \ar[ru]^{\phi} &&& H(|\lpixels_{c^2\alpha'|})
}
\end{equation}
where $c=(1+\eps)$ and all maps except $\phi$ are induced by inclusion.
Now, fixing a (singular) $p$-cycle $z$ over $|\lpixels_{\alpha}|$,
composition with $f$ yields a $p$-cycle $f(z)$ in $H(\lazy_{(1+\eps)\alpha})$
which in turn includes as a $p$-cycle in $|\lpixels_{(1+\eps)^2\alpha'}|$.
Let $f'$ denote the map $\lazy_{(1+\eps)\alpha}\to \lazy_\alpha^{1+\eps}$
in the opposite direction of the homotopy equivalence. Since
$f'$ is realized as a deformation retraction to the intersection, and
$f(z)$ lies in the intersection, it follows that $f'(f(z))=f(z)$.
On the other hand, $f'\circ f$ is homotopic to the identity,
hence, $f(z)$ is homotopic to $z$. This proves that the $p$-cycles
$z$ and $f(z)$ represent the same homology class, proving that the above
diagram commutes. We skip the remaining two diagrams, which can be handled
with similar methods.
\end{proof}

\subsection{Approximation tower and equivalence to the lazy pixels filtration}

We define our approximation complex as 
$\lcomplex_\alpha := \nerve(\lpixels_\alpha)$.
Since $\lpixels_\alpha$ is a collection of pixels, $\lcomplex_\alpha$
is a flag complex using Lemma~\ref{lemma:digital_cubeflag}.
In general, $\lcomplex_\alpha$ does not include into 
$\lcomplex_{(1+\eps)\alpha}$ because pixels might be 
removed when passing to a larger grid.

We define a map $g':\lpixels_\alpha\to\lpixels_{(1+\eps)\alpha}$ that maps 
a pixel of $\lpixels_\alpha$ 
to the unique pixel in $\lpixels_{(1+\eps)\alpha}$ that contains it.
Because $\lcomplex_\alpha$ is a flag complex and intersecting pixels
are mapped to intersecting pixel under $g'$, it follows directly
that $g'$ induces a simplicial map 
$g:\lcomplex_\alpha\to\lcomplex_{(1+\eps)\alpha}$.
Using the same proof strategy as in Appendix~\ref{subsection:appendix-section3-interleaving-alternate}, we can show that the towers $(\lcomplex_\alpha)_{\alpha\geq 0}$
and $(|\lpixels|_\alpha)_{\alpha\geq 0}$ are persistence-equivalent.

We summarize the results of the section so far.

\begin{theorem}
Using a $\frac{\eps}{8}$-WSPD, our construction yields 
a persistence module $(H(\lcomplex_\alpha))_{\alpha\ge 0}$ 
that is a $(1+\eps)^3$-approximation of the \Cech persistence module.
\end{theorem}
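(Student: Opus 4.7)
The plan is to assemble the theorem by composing the three interleaving/equivalence results already established in this section. Roughly, the chain I would form is
\[
(H(\cech_\alpha))_{\alpha\ge 0}\;\xrightarrow{(1+8\delta)}\;(H(\lazy_\alpha))_{\alpha\ge 0}\;\xrightarrow{(1+\eps)^2}\;(H(|\lpixels_\alpha|))_{\alpha\ge 0}\;\xrightarrow{\cong}\;(H(\lcomplex_\alpha))_{\alpha\ge 0},
\]
and then substitute the WSPD parameter $\delta=\eps/8$ so that the first factor becomes $(1+\eps)$, yielding a total interleaving constant of $(1+\eps)\cdot(1+\eps)^2 = (1+\eps)^3$.

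First, I would invoke Lemma~\ref{lem:lazy_interleaving}: with a $\delta$-WSPD, the lazy union filtration $(H(\lazy_\alpha))_{\alpha\ge 0}$ is a $(1+8\delta)$-approximation of the \Cech persistence module. Setting $\delta=\eps/8$, which is allowed since $\eps\le 1/5$ and hence $\delta\le 1/10$ as required by the WSPD construction in Subsection~\ref{subsection:lazyunion}, this factor becomes exactly $(1+\eps)$. Next, I would invoke Lemma~\ref{lemma:lazysandwich_final}, which gives a $(1+\eps)^2$-interleaving between $(H(|\lpixels_\alpha|))_{\alpha\ge 0}$ and $(H(\lazy_\alpha))_{\alpha\ge 0}$.

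The third arrow is a persistence-module isomorphism rather than an interleaving, so it contributes no additional slack. I would justify it by the remark in Subsection~\ref{subsection:pixelization}/the following subsection: the nerve theorem applies to $\lpixels_\alpha$ at every fixed scale since pixels are convex, and the simplicial map $g:\lcomplex_\alpha\to\lcomplex_{(1+\eps)\alpha}$ is compatible at the homology level with the inclusion $|\lpixels_\alpha|\hookrightarrow|\lpixels_{(1+\eps)\alpha}|$; the argument mirrors Lemma~\ref{lemma:pixel_nerve_iso} and is carried out by introducing the intermediate union of all pixels ever included up to scale $\alpha$, exactly as in Appendix~\ref{subsection:appendix-section3-interleaving-alternate}. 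This yields isomorphic persistence modules.

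Finally, I would conclude by composing these relations. Composition of a strong $c_1$-interleaving with a strong $c_2$-interleaving yields a strong $c_1c_2$-interleaving (by pasting the corresponding commutative diagrams of type~\eqref{diag:strong_diag}), and composing with an isomorphism preserves the interleaving constant. Thus $(H(\lcomplex_\alpha))_{\alpha\ge 0}$ is a $(1+\eps)(1+\eps)^2=(1+\eps)^3$-approximation of the \Cech persistence module. The main subtlety, rather than an obstacle, is simply checking that the choice $\delta=\eps/8$ is consistent with the hypothesis $\delta\le 1/10$ in Subsection~\ref{subsection:lazyunion} and with the assumption $\eps\le 1/5$ used throughout Section~\ref{section:digitization}; all other pieces are straightforward assemblies of the lemmas already proved.
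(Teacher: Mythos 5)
Your proof is correct and follows essentially the same route as the paper: compose Lemma~\ref{lem:lazy_interleaving} with $\delta=\eps/8$ (giving the $(1+\eps)$ factor), Lemma~\ref{lemma:lazysandwich_final} (giving $(1+\eps)^2$), and the persistence-equivalence of $(\lcomplex_\alpha)$ with $(|\lpixels_\alpha|)$ established just before the theorem. Your extra care in verifying $\delta=\eps/8\le 1/10$ and in spelling out that interleavings compose multiplicatively is consistent with, and slightly more explicit than, the paper's brief argument.
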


\begin{proof}
$(H(\lpixels_\alpha))_{\alpha\ge 0}$ is a $(1+\eps)^2$-approximation
of $(H(\lazy_\alpha))_{\alpha\ge 0}$ from Lemma~\ref{lemma:lazysandwich_final}.
Further, the persistence module $(H(\lazy_\alpha))_{\alpha\ge 0}$ is a $(1+\eps)$-approximation
of the \Cech persistence module from Lemma~\ref{lem:lazy_interleaving}.
\end{proof}

We can easily get a $(1+\eps)$-interleaving as well by using $\eps'=\eps/4$
in our approximation, since 
$\left(1+\eps'\right)^3=\left(1+\frac{\eps}{4}\right)^3<1+\eps$.

\subsection{Size and computation time}
\label{subsection:sizecomputation_new}

Since our construction heavily depends on WSPDs, we state the well-known
complexity bound here:

\begin{theorem}[\cite{ck-wspd,hp-book}]
Given $\delta\in (0,1]$, a $\delta$-WSPD  of size 
$n \left(\frac{1}{\delta}\right)^{O(d)}$
can be computed in time
\[n\log n 2^{O(d)} + n \left(\frac{1}{\delta}\right)^{O(d)}.\]
\end{theorem}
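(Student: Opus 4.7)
The plan is to build a spatial hierarchy on $P$ and then generate the WSPD by a recursive procedure on pairs of nodes of this hierarchy, following the classical Callahan--Kosaraju approach.

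First I would construct a \emph{compressed quadtree} (equivalently, a fair-split tree) on $P$ in time $n\log n\, 2^{O(d)}$. The tree has $O(n)$ nodes; each node $v$ stores a subset $P_v\subseteq P$ and an axis-aligned bounding cuboid $R_v$, and the children of $v$ correspond to a partition of $P_v$ obtained by splitting $R_v$ by a hyperplane that separates its extent in the longest direction. Standard constructions yield diameters $\mathrm{diam}(R_v)$ that decrease geometrically along any root-to-leaf path (up to the compression of empty chains), and each point of $P$ appears on $O(\log n)$ levels. The $2^{O(d)}$ factor accounts for the cost per node of locating splits in $d$-dimensional space.

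Next, I would compute the WSPD by the recursion \textsc{FindPairs}$(u,v)$ that, given two tree nodes, first tests whether the pair $(P_u,P_v)$ is $\delta$-well-separated (using the bounding cuboids $R_u,R_v$ and their diameters as cheap surrogates for $\mathrm{diam}(P_u),\mathrm{diam}(P_v)$); if yes it outputs the pair, and if no it recurses on $\{(u',v)\mid u'\text{ child of }u\}$ when $\mathrm{diam}(R_u)\ge \mathrm{diam}(R_v)$ and symmetrically otherwise. The initial call is $\textsc{FindPairs}(\mathrm{root},\mathrm{root})$, together with pairs of children of the root to seed the recursion. Correctness (every point pair is covered exactly once) follows by induction: at each recursive call, the two subsets $P_u,P_v$ are either eventually declared well-separated and reported, or further subdivided, and the splitting guarantees that no point pair is missed or duplicated.

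The main obstacle is the size bound, for which the key is a packing argument: I would show that for a fixed tree node $v$, the set of nodes $u$ for which the pair $(u,v)$ (or a direct parent pair) is tested well-separated during the recursion has size at most $(1/\delta)^{O(d)}$. The point is that such nodes $u$ must have $\mathrm{diam}(R_u)\le \mathrm{diam}(R_v)/\delta$ (otherwise the recursion would have subdivided on the $u$-side first) and must be located within distance $O(\mathrm{diam}(R_v)/\delta)$ of $R_v$ (otherwise well-separation would already hold one level higher). Since the fair-split tree is balanced in terms of bounding-cuboid geometry, the bounding cuboids of such $u$ are interior-disjoint and each contains a ball of radius $\Omega(\mathrm{diam}(R_v))$, so a standard volume/packing bound in $\R^d$ caps their number by $(1/\delta)^{O(d)}$. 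Summing over all $O(n)$ tree nodes gives the claimed total size $n(1/\delta)^{O(d)}$, and the recursion runs in time proportional to the number of pairs it inspects, namely the same bound plus the tree construction cost $n\log n\,2^{O(d)}$.
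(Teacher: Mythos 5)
The paper does not prove this statement at all: it is quoted as a known result and attributed to Callahan--Kosaraju and Har-Peled, which is why no argument appears in the text. Your outline is essentially the standard proof from those cited sources (build a fair-split tree or compressed quadtree in $n\log n\,2^{O(d)}$ time, generate pairs by recursively splitting the node of larger diameter, and bound the output by a packing argument), so in that sense you are on the intended route. One caution on the packing step: for a fair-split tree the bounding cuboid $R_u$ of a node can be arbitrarily thin, so it is not true that each such cuboid ``contains a ball of radius $\Omega(\mathrm{diam}(R_v))$''; the classical analysis instead charges a reported pair to the \emph{parent} of $u$ (whose split guarantees a lower bound on side length) or, in the quadtree formulation, uses the fact that quadtree cells are genuine cubes of comparable side length, and then applies the volume bound. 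With that correction your sketch matches the standard argument and yields the stated $n(1/\delta)^{O(d)}$ size and running-time bounds.
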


Since we require an $(\eps/8)$-WSPD for our construction, the same
bounds hold up to a constant factor of the form $2^{O(d)}$
when replacing $\delta$ by $\eps$.

\begin{theorem}
\label{theorem:lazy_size}
The size of the $k$-skeleton of the tower $(\lcomplex_\alpha)_{\alpha\ge 0}$ 
is
\[
n\left(\frac{1}{\epsilon}\right)^{O(d)} 2^{O(d\log d + dk)}.
\]
\end{theorem}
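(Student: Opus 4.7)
The plan is to charge each newly included simplex of the tower to a \emph{truly new} pixel, namely a pixel $C \in \lpixels_\alpha \setminus \lpixels_{\alpha/(1+\eps)}$, and then bound the total number of such pixel introductions across all scales of $I$, multiplied by the number of $k$-simplices of $\lcomplex_\alpha$ incident to any single pixel. The charging is valid by the following observation: suppose $\sigma \in \lcomplex_\alpha$ has all vertex pixels carried over from the preceding scale. By inclusion-maximality of $\lpixels_\alpha$, each such vertex $C$ satisfies $g'(C)=C$ --- if some strictly larger pixel of $\lpixels_\alpha$ contained $C$, then $C$ and that larger pixel would both lie in $\lpixels_\alpha$, violating inclusion-maximality. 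Hence $g$ fixes the vertex set of $\sigma$, and by the flag property (Lemma~\ref{lemma:digital_cubeflag}) the pairwise-intersection pattern makes $\sigma$ already a simplex of $\lcomplex_{\alpha/(1+\eps)}$, mapped to itself. Therefore every newly included simplex has at least one vertex in $\lpixels_\alpha \setminus \lpixels_{\alpha/(1+\eps)}$, and by the definition $\lpixels_\alpha = $ inclusion-maximal of $\lpixels_{\alpha/(1+\eps)} \cup \lpixels'_\alpha$ this vertex must come from $\lpixels'_\alpha$.

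For the counting step I use the WSPD structure. Each WSP $(A,B)\in W$ has active interval $R_{(A,B)} = [d(A,B)/8,\, 8d(A,B)]$ spanning a multiplicative factor of $64$, so it contains $O(\log_{1+\eps} 64) = O(1/\eps)$ scales of $I$. A point $p \in P$ is active at $\alpha \in I$ only if $p$ is a representative of some WSP whose interval contains $\alpha$, and each WSP has exactly two representatives, so the total number of (active point, active scale) pairs is at most $2|W| \cdot O(1/\eps)$. With the $(\eps/8)$-WSPD of size $|W| = n(1/\eps)^{O(d)}$ from the cited theorem, this total is $n(1/\eps)^{O(d)}$. For each such pair, the packing argument from the proof of Theorem~\ref{theorem:digital_size} (using $\beta \geq \alpha/2$) shows that the number of pixels of $L_\beta$ with center inside $B(p,(1+\eps/2)\alpha)$ is at most $(16\sqrt{d}/\eps)^d = (1/\eps)^d 2^{O(d\log d)}$. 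Multiplying, $\sum_{\alpha \in I} |\lpixels'_\alpha| \leq n(1/\eps)^{O(d)} \, 2^{O(d\log d)}$.

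It remains to bound the number of $k$-simplices of $\lcomplex_\alpha$ incident to a given pixel by $2^{O(d\log d + dk)}$. The grids are nested ($L_{2\beta} \subset L_\beta$), so whenever two pixels of $\lpixels_\alpha$ have non-trivial interior intersection, one is contained in the other --- ruled out by inclusion-maximality. Consequently each pixel has at most $3^d-1 = 2^{O(d)}$ same-grid neighbors, and boundary-only incidences with pixels of other grids admit an analogous packing bound, giving $2^{O(d\log d + dk)}$ $k$-simplices per pixel. Multiplying the truly-new-pixel count by the simplices-per-pixel bound yields the claimed $n(1/\eps)^{O(d)} \, 2^{O(d\log d + dk)}$. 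The main obstacle compared to Theorem~\ref{theorem:digital_size} is precisely this last step: since $\lpixels_\alpha$ may mix pixels from several grid resolutions simultaneously, one must carefully exploit inclusion-maximality and grid nesting to prevent an unbounded number of fine pixels from abutting the boundary of a coarser pixel in $\lpixels_\alpha$.
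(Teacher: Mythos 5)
Your reduction of the pixel count to the WSPD (active intervals of multiplicative width $64$, hence $O(1/\eps)$ scales per pair, two representatives per pair, times the packing bound per ball) matches the paper, and your observation that every newly included simplex must contain a vertex from $\lpixels'_\alpha\setminus\lpixels_{\alpha/(1+\eps)}$ is correct as stated (inclusion-maximality does force $g'(C)=C$ for carried-over pixels, so a simplex all of whose vertices are carried over is in the image of $g$). The gap is in the step you yourself flag as the main obstacle: you charge each included simplex to a \emph{truly new} pixel, i.e.\ to a pixel of the \emph{current, coarse} grid $L_\beta$, and then assert that ``boundary-only incidences with pixels of other grids admit an analogous packing bound.'' That is not a packing statement at all: a cube of side $s$ can be touched along its boundary by $\Theta\bigl((s/s')^{d-1}\bigr)$ interior-disjoint cubes of side $s'\ll s$, and the resolutions coexisting in $\lpixels_\alpha$ can differ by a spread-dependent factor, since inactive points keep pixels from the grid of their last active scale. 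Inclusion-maximality and grid nesting only eliminate neighbors that are \emph{contained} in the new pixel; they say nothing about fine pixels abutting it from outside. So the bound of $2^{O(dk)}$ simplices per charged pixel --- the quantity your final multiplication needs --- is exactly what remains unproven.

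To close this gap along your route you would have to show that no much-finer pixel can \emph{survive} adjacent to a newly added coarse pixel, and that is a nontrivial consequence of the WSPD-driven construction rather than of the grid geometry: if a coarse pixel sampled around an active point $p$ at scale $\alpha$ touches old pixels belonging to some point $q$, then $\|p-q\|\approx\alpha$, so the WSP covering $(p,q)$ is active at $\alpha$, its $B$-side representative (within $\delta\, d(A,B)$ of $q$) is resampled at the current resolution, and those coarse pixels cover and hence remove $q$'s fine pixels. Some quantitative argument of this kind (using the widened intervals $R_{(A,B)}=[d(A,B)/8,8d(A,B)]$ and $\delta\le\eps/8$) is needed before your per-pixel count is legitimate. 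The paper sidesteps the issue entirely with a different charging scheme: each included simplex is charged to the \emph{minimal-size} pixel among its vertices, for which the genuine packing bound of at most $3^d-1$ neighbors of equal or larger size applies, and charges at later scales are controlled by noting that a pixel can acquire a new neighbor along a previously free face only $3^d-1$ times, keeping the total charge per pixel at $2^{O(dk)}$. As written, your proof does not establish the theorem; either adopt the minimal-pixel charging or supply the comparable-size argument sketched above.
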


\begin{proof}
Recall that the size of a tower is the number of simplices added in total
when increasing the scale.
We first bound the number of pixels that are added to the tower
by $n\left(\frac{d}{\epsilon}\right)^{O(d)}$.
This is done by a charging argument, where we charge the inclusion of a pixel
into the tower to one pair in the WSPD and show that each pair is charged
at most $\left(\frac{d}{\epsilon}\right)^{O(d)}$ times. The claim follows
then with the size bound of the WSPD.

When constructing $\lcomplex_\alpha$, we add new pixels to sample a ball
around $p\in P$ only if $r_p(\alpha)=\alpha$. That, in turn, means
that there is a WSP $(A,B)$ such that $\alpha\in R_{(A,B)}$
and $p$ is the representative of $A$
or of $B$. We charge the new pixels for the $\alpha$-ball around $p$
to the WSP $(A,B)$. Since $(A,B)$ has two representatives, it gets charged
only by the pixels of two balls. By an analogue packing argument as
in Theorem~\ref{theorem:digital_size}, 
an $\alpha$ ball gives rise to
not more than 
$\left(\frac{d}{\epsilon}\right)^d$
pixels, so $(A,B)$ gets charged by that many pixels per scale in the worst
case.

Finally, the pair $(A,B)$ can only get charged for scales that are
included in $R_{(A,B)}$. A simple calculation shows that the number of such
scales is the smallest $k$ that satisfies $(1+\eps)^k\geq 64$
which can be solved as $k=O(1/\eps)$. That means, multiplying
with number of scales on which $(A,B)$ can be charged does not change the
bound. This proves the bound on the number of pixels.

We next bound the number of simplex inclusions.
Again, we use a charging argument, charging the inclusion of a simplex
to one pixels in its boundary that has minimal size among all pixels
in the simplex. We show that every pixel gets charged at most $2^{O(dk)}$
times, which completes the proof.
A pixel $C$ can only have up to
$3^d-1$ neighbors that are of same or greater size than $C$. Hence,
at the scale $\alpha$ where it gets included, 
$C$ can only get charged for $3^{dk}=2^{O(dk)}$
simplices of dimension $\leq k$.
However, $C$ might also get charged at scales larger than $\alpha$.
This, in turn, can only happen if $C$ gets a new neighbor along a face 
on which it was not intersecting another pixel before (otherwise, the
simplex is not included, but originates from a previous scale).
Since this can only happen $3^d-1$ times, the total number of simplices
that $C$ gets charged for is still $2^{O(dk)}$.
\end{proof}

Next, we describe an algorithm to compute the tower.
By ``computing'' a tower, we mean the following: the output is a list
of tokens of three different types:

\begin{itemize}
\item ``(scale, $s$)'', denoting that the tower passes to scale~$s$
\item ``(add, $\sigma$)'', denoting that the simplex $\sigma$ is added to the tower
\item ``(contract, $v_0,v_1$)'', denoting that the vertices $v_0$ and $v_1$
are contracted in the tower. 
\end{itemize}

These three operations specify the tower completely, because every simplicial
map can be written as a sequence of inclusions and contractions~\cite{dfw-gic,ks-twr}.

A first step is to compute the $\frac{\eps}{8}$-WSPD $W$.
We traverse its pairs and compute the active ranges of each pair.
Let
\[
U\gets\bigcup_{(A,B)\in W} R_{(A,B)}.
\]

Clearly, the scales in $I$ that lie outside of $U$ can be disregarded
because no new pixels are added to $\lpixels_\alpha$. In total, the algorithm
considers only $O(1/\eps)$ scales per WSP, so $n\left(\frac{d}{\epsilon}\right)^{O(d)}$ scales in total (and these scales can be computed in the same
complexity).

For each scale, the algorithm determines the new pixels to be sampled.
On a scale $\alpha$, we can efficiently determine the points $p\in P$
where $r_p(\alpha)=\alpha$, for instance using a priority queue.
For each active point, we compute
the new pixels of its $\alpha$-ball, using the same flooding algorithm
as in Appendix~\ref{subsection:appendix-section3-algo}. The complexity
is proportional to the number of pixels added, and thus
the total complexity is bounded by $n\left(\frac{d}{\epsilon}\right)^{O(d)}$.

Next, the algorithm computes the contraction needed at scale $\alpha$.
Note that this problem simply means to find pixels in $\lpixels_{\alpha/(1+\eps)}$ which are covered by a pixel in $\lpixels_\alpha$. Each such pair of pixels
defines one contraction of the old pixel into the new one.
To find them efficiently, we store the collection of pixels
of $\lpixels_\alpha$ as leaves of a (compressed) quad-tree $Q$~\cite{hp-book}. 
Whenever a new pixel is added to $\lpixels_\alpha$, it is also added to $Q$
as a new node $v$.
All leaves in $Q$ that are children of $v$ are removed from the quad-tree,
and the corresponding contraction is enlisted.

Finally, the algorithm finds the $k$-simplices with $k\geq 1$
added to $\lcomplex_\alpha$. We only discuss the case $k=1$; the
simplices for higher $k$ can be combinatorially enlisted in time
proportional to the number of simplices because $\lcomplex_\alpha$ is
a flag complex~(see the end of Appendix~\ref{subsection:appendix-section3-algo}).
For edges, it suffices to iterate through the newly added vertices once more
and query the quad-tree data structure for all neighbors of the corresponding
leaf. It is not hard to see that all neighbors can be found in time
$O(2^d+N)$, where $N$ is the number of neighbors encountered.
Hence, the total time to find edges is also bounded
by $n\left(\frac{d}{\epsilon}\right)^{O(d)}$. That concludes the description
of the algorithm. In total, the running time
is dominated by the construction of the WSPD and the enumeration
step of all simplices, which is bounded by the size of the tower in
Theorem~\ref{theorem:lazy_size}. The space complexity
is dominated by the size of the output to be produced. 
We summarize the result.

\begin{theorem}
\label{theorem:lazy_time}
Computing the approximation tower takes time
\[
n \log n 2^{O(d)} + 
n\left(\frac{1}{\epsilon}\right)^{O(d)} 
2^{O(d\log d + dk)}, 
\]
and space
\[
n\left(\frac{1}{\epsilon}\right)^{O(d)} 2^{O(d\log d + dk)}.
\]
\end{theorem}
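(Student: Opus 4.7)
The time and space bounds follow from a careful accounting of the three algorithmic stages described just above the theorem: building the $\frac{\eps}{8}$-WSPD, iterating through active scales and sampling the new pixels, and enumerating the simplex inclusions and contractions via a quad-tree. My plan is to show that each stage respects the claimed bounds and then take the maximum. I would first invoke the Callahan--Kosaraju construction with $\delta=\eps/8$; this contributes the additive $n\log n\, 2^{O(d)}$ term directly, plus $n(1/\eps)^{O(d)}$ pairs and representative-distance computations, which is subsumed by the second summand. At the same time I would extract the active intervals $R_{(A,B)}$, insert their endpoints into a priority queue, and use it to iterate through the $O(1/\eps)$ scales in $I\cap R_{(A,B)}$ per pair, for a total of $n(1/\eps)^{O(d)}$ scale events, with the $O(\log n)$ queue overhead absorbed into this bound.

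Next I would analyze the per-scale work. At scale $\alpha$, the algorithm identifies the active points $p$ with $r_p(\alpha)=\alpha$ (available directly from the queue), and for each such point runs the flood-fill of Appendix~\ref{subsection:appendix-section3-algo} to emit the new pixels of its $\alpha$-ball. By the packing argument from the proof of Theorem~\ref{theorem:lazy_size}, this produces at most $(1/\eps)^d 2^{O(d\log d)}$ pixels per active ball, and the pixel-charging argument caps the total over all scales at $n(1/\eps)^{O(d)} 2^{O(d\log d)}$. Each new pixel is inserted into a compressed quad-tree $Q$. Neighbor queries in $Q$ recover the at most $3^d-1$ intersecting pixels per new cell, and the flag structure established in Lemma~\ref{lemma:digital_cubeflag} lets me enumerate the $k$-simplex inclusions combinatorially from these neighbors in $2^{O(dk)}$ time per pixel, matching the charging used in Theorem~\ref{theorem:lazy_size}. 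Summing over pixels gives the second summand of the time bound.

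The main obstacle I anticipate is arguing that the quad-tree supports the contraction reporting in truly output-sensitive time; a naive implementation could revisit old pixels at every scale and destroy the bound. The key point is that when a newer, coarser pixel $C'$ is inserted, only the quad-tree nodes in the subtree rooted at the cell containing $C'$ are inspected, and each such node is either the pixel being contracted into $C'$ (emitting exactly one token of amortized cost $O(1)$) or an $O(2^d)$-overhead navigation node on the path. Hence contraction discovery remains proportional to the number of emitted tokens. The space bound then follows immediately because every data structure the algorithm maintains, including the WSPD, the quad-tree, and the priority queue, has size bounded by the total number of output tokens, which is the tower size from Theorem~\ref{theorem:lazy_size}.
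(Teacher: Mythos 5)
Your proposal follows the paper's own argument essentially verbatim: WSPD construction with $\delta=\eps/8$, a priority queue over the $O(1/\eps)$ active scales per pair, flood-fill pixelization charged to WSPs as in Theorem~\ref{theorem:lazy_size}, a compressed quad-tree whose subtree removals report contractions output-sensitively, and flag-complex enumeration of higher simplices, with time dominated by WSPD construction plus simplex enumeration and space by the output. This matches the paper's proof, so there is nothing further to add.
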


\section{Conclusion}
\label{section:conclusion}

We presented an approximation scheme for approximating \Cech filtrations
in Euclidean space.
The main idea is that by introducing additional sample points,
we can asymptotically reduce the size of the complex,
because the number of close-by points is reduced to a constant
independent of $n$ and $\eps$.

We briefly discuss the extensions of our results mentioned in the introduction.
While turning our tower into a (flag complex) filtration is mostly a matter of
technicalities, employing the permutahedral instead of the cubical grid
requires an additional bag of techniques. The major complication
comes from the fact that we lose the property of cubical grids that
one pixel on the lower scale is always completely contained
in a single pixel on the next scale. For instance, this property made
the construction of the simplicial maps connecting the scales natural
and easy to prove. For the permutahedral case, we need heavier
algebraic machinery, including barycentric subdivisions and acyclic carriers
to obtain similar results.

While the introduction of sample points yields improved theoretical guarantees,
the algorithm in its current form is impractical, because the number
of sample points is large even for well-conditioned problem instances.
We discuss two possibilities to reduce that number. First of all, 
instead of covering an $\alpha$-ball uniformly by pixels of a certain scale,
we could employ a quad-tree like subdivision and approximate the interior
of the ball with larger pixels in practice. Moreover, it might be
advisable to reduce the obtained complex further by applying elementary
collapses to it which do not change the homotopy type.
In very recent work~\cite{pbp-collapse}, it has been demonstrated how such
collapses can be performed efficiently in a simplicial complex.
While these extensions would certainly help in practice, the question
remains whether they can also lead to a further asymptotic improvements.

\ignore{
Our approximation schemes contain a factor of $2^{O(d\log d)}$ in our bounds.
This is the case even when we wish to compute medium to low-dimensional
persistent homology, that is, when $k\ll d$.
It would be an interesting exercise to see whether the pixelization
technique can be adapted to be more efficient while constructing $k$-skeletons.
}

\paragraph{Acknowledgements}
Aruni Choudhary is partly supported by ERC grant StG 757609.
Michael Kerber is supported by Austrian Science Fund (FWF) grant number P 29984-N35. 
Sharath Raghvendra acknowledges support of NSF CRII grant CCF-1464276.

\appendix
\label{appendix:usual}

\section{Missing proofs for Section~\ref{section:digitization}}
\label{section:appendix-section3}

\subsection{Proof of Lemma~\ref{lemma:pixel_nerve_iso}}
\label{subsection:appendix-section3-interleaving-alternate}

In this subsection we prove that the filtration of pixels $(|\pixels_\alpha|)_{\alpha>0}$ is persistence-equivalent to the tower of
nerves $(\complex_\alpha)_{\alpha>0}$.
For that, we introduce the intermediate object
\[
S'_\alpha:= \cup_{\beta\le \alpha} \pixels_\beta,
\]
where $\beta$ ranges over scales in $I$.
We can assume for simplicity that the range of
scales is finite, for instance by ignoring all scales which are smaller
than a third times the closest point distances among points in $P$,
and those scales that are larger than the diameter of $P$.
In that way, $S'_\alpha$ also becomes a finite object.

We define $Y_\alpha:=\nerve (S'_\alpha)$ for $\alpha\in I$,
and by the filling technique extend it to $(Y_\alpha)_{\alpha\geq 0}$.
This is a filtration, because $S'_\alpha\subseteq S'_{(1+\eps)\alpha}$.
By the Sandwich Lemma, we also have that $|S'_\alpha|=|S_\alpha|$.
Since the Nerve isomorphism on homology commutes with maps induced by inclusion~\cite{co-pnl},
we obtain directly that $(Y_\alpha)_{\alpha\geq 0}$ 
and $(|\pixels_\alpha|)_{\alpha>0}$ are persistence-equivalent.
In the rest of the section, we will show that also $(Y_\alpha)_{\alpha\geq 0}$ 
and $(\complex_\alpha)_{\alpha\geq 0}$ are persistence-equivalent.

Note that the Nerve Theorem implies that for each $\alpha$,
$\complex_\alpha$ and $Y_\alpha$ are homotopically equivalent, hence
their homology groups are isomorphic. We will construct such an isomorphism
explicitly. Recall from Section~\ref{section:digitization} that we have
constructed a simplicial map $g:\complex_\alpha\to\complex_{(1+\eps)\alpha}$
which maps a pixel of $\pixels_\alpha$ to the pixel of $\pixels_{(1+\eps)\alpha}$
that contains it. For $\beta\leq\alpha\in I$, we can define the map
$g_{\beta\to\alpha}$ as the composition of such $g$-maps,
yielding a simplicial map from $\complex_\beta$ to $\complex_{\alpha}$.

Next, we define a vertex map
$
g_{\to\alpha}':S'_\alpha\to S_\alpha
$
as follows: for a pixel $C$ in $S'_\alpha$, let $\beta\leq\alpha$ be such that
$C\in S_\beta$. Then, $g_{\to\alpha}'(C):=g_{\beta\to\alpha}(C)$. In words,
the map assigns to a pixel at scale $\leq\alpha$ the (unique)
pixel at scale $\alpha$ that contains it.

\begin{lemma}
The map $g_{\to\alpha}'$ induces a simplicial map $g_{\to\alpha}:Y_\alpha\to\complex_\alpha$.
\end{lemma}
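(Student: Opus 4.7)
The strategy is to unpack the definition of $g'_{\to\alpha}$ and reduce the simpliciality condition to nerve monotonicity under set containment. Everything will follow from Lemma~\ref{lemma:digital_gsimplicial} applied iteratively.

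First, I would verify that $g'_{\to\alpha}$ takes each vertex of $Y_\alpha$ to a vertex of $\complex_\alpha$. For $C\in S'_\alpha$, there is some $\beta\le\alpha$ in $I$ with $C\in\pixels_\beta$, and $g'_{\to\alpha}(C)$ is defined as $g_{\beta\to\alpha}(C)$, a composition of the single-step maps $g:\complex_{\alpha_k}\to\complex_{(1+\eps)\alpha_k}$ from Lemma~\ref{lemma:digital_gsimplicial}. As a composition of simplicial maps, $g_{\beta\to\alpha}$ sends vertices to vertices, so indeed $g'_{\to\alpha}(C)\in\pixels_\alpha$. Each single step $g$ sends a pixel to the pixel of the coarser lattice that contains it, and since containment is transitive, the composition inherits the crucial property
\[
C\ \subseteq\ g'_{\to\alpha}(C)\qquad\text{for every }C\in S'_\alpha.
\]

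Second, I would use this containment to check the simpliciality condition directly on the nerves. Let $\sigma=\{C_0,\ldots,C_k\}$ be a simplex of $Y_\alpha=\nerve(S'_\alpha)$; by definition, $\bigcap_{i=0}^k C_i\neq\varnothing$. Applying the containment property termwise yields
\[
\bigcap_{i=0}^k g'_{\to\alpha}(C_i)\ \supseteq\ \bigcap_{i=0}^k C_i\ \neq\ \varnothing,
\]
so $\{g'_{\to\alpha}(C_0),\ldots,g'_{\to\alpha}(C_k)\}$ is a simplex of $\nerve(\pixels_\alpha)=\complex_\alpha$. This is precisely the condition for $g_{\to\alpha}:Y_\alpha\to\complex_\alpha$ to be a simplicial map, allowing the image of $\sigma$ to collapse to a lower-dimensional simplex in case two of the $C_i$ happen to lie in the same pixel of the lattice $L_{\beta_0}$ (with $\beta_0$ the largest power of $2$ not exceeding $\alpha$) that underlies $\complex_\alpha$.

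\textbf{Expected obstacle.} I do not anticipate a real obstacle, since both ingredients are already in hand. The only point worth verifying carefully is that the iterated composition $g_{\beta\to\alpha}$ coincides with the intuitive map ``send $C$ to the unique $L_{\beta_0}$-pixel containing it''; this is immediate from the fact that each single-step $g$ was itself defined by containment and that ``contains'' composes. Once this is observed, the lemma boils down to nerve monotonicity under containment, which is a one-line set-theoretic argument.
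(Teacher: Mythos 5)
Your proof is correct, and it takes a slightly different route than the paper. The paper's argument first observes that the relevant nerve is a flag complex (Lemma~\ref{lemma:digital_cubeflag}) and therefore reduces simpliciality to the edge case, which it then settles exactly as in Lemma~\ref{lemma:digital_gsimplicial}: two intersecting pixels either coincide or still intersect after being mapped to the coarser grid. You instead isolate the containment property $C\subseteq g'_{\to\alpha}(C)$ (which holds for each single-step map by definition and is preserved under composition) and conclude directly for an arbitrary simplex $\{C_0,\ldots,C_k\}$ of $Y_\alpha$ that $\bigcap_i g'_{\to\alpha}(C_i)\supseteq\bigcap_i C_i\neq\varnothing$, so the image vertex set spans a simplex of $\complex_\alpha=\nerve(\pixels_\alpha)$ with no flag-complex machinery needed. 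Your version is more elementary and self-contained, handles all dimensions at once, and even yields the slightly stronger fact that the image simplex's common intersection contains the original one; it also sidesteps a small imprecision in the paper's phrasing, which invokes flagness of the domain $Y_\alpha$ when what the edge-reduction really requires is flagness of the codomain $\complex_\alpha$ (both hold by the same lemma, so this is cosmetic). What the paper's route buys is uniformity with the earlier proof pattern and reliance only on the weaker pairwise-intersection fact, which matters in settings (such as the permutahedral grid mentioned in the conclusion) where a pixel need not be contained in a single coarser cell; your containment argument is specific to the nested cubical grids. Your one flagged caveat--that the iterated composition agrees with ``send $C$ to the unique containing pixel of the current lattice''--is indeed the only point needing care, and it holds because containment composes and each finer pixel lies in exactly one coarser pixel.
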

\begin{proof}
Note that $Y_\alpha$ is a flag complex by Lemma~\ref{lemma:digital_cubeflag},
so it suffices to consider edges. 
The statement follows as in the proof of Lemma~\ref{lemma:digital_gsimplicial}
from the fact that if two pixels are intersecting, they either coincide
or still intersect when mapped to a larger scale.
\end{proof}

By the definition of the $g$-maps, we have that for $\beta\leq\alpha\in I$, 
\[
g_{\beta\to\alpha}\circ g_{\to\beta}=g_{\to\alpha}
\]
which implies at once that the following diagram commutes
\begin{eqnarray}
\label{equation:digital_intlv_alt}
\xymatrix{
\ldots\ar[r]&Y_\beta \ar[r]^{\inc} \ar[d]^{g_{\to\beta}} & 
Y_\alpha \ar[d]^{g_{\to\alpha}} \ar[r] 
& \ldots
\\
\ldots\ar[r]&\complex_\beta \ar[r]^{g_{\beta\to\alpha}} &
\complex_\alpha \ar[r]
& \ldots
}
\end{eqnarray}
To prove the equivalence of $(\complex_\alpha)$ and $(Y_\alpha)$,
it remains to show that the induced map 
\[
g_{\to\alpha}^\ast:H(Y_\alpha)\to H(\complex_\alpha)
\]
is an isomorphism. We do so by showing that the map 
$\inc^\ast:H(\complex_\alpha)\to H(Y_\alpha)$ induced by inclusion is inverse
to $g_{\to\alpha}^\ast$. Note that $g_{\to\alpha}$ is the identity on
the subcomplex $\complex_\alpha$, hence $g_{\to\alpha}\circ inc$ 
is the identity map on $\complex_\alpha$ as well. It follows that
$g_{\to\alpha}^\ast\circ inc^\ast$ is the identity map on the homology groups.

For the other direction, 
we make use of the following standard definition~\cite{munkres}:
two simplicial maps $f_1,f_2:K\rightarrow L$ are called \emph{contiguous}
if for every simplex $\sigma=(v_0,\ldots,v_k)\in K$,
the set of vertices $\{f_1(v_0),\ldots,f_1(v_k),f_2(v_0),\ldots,f_2(v_k)\}$
forms a simplex in $L$. Two contiguous maps are homotopic and therefore,
the induced maps $f_1^\ast$, $f_2^\ast$ on homology are equal.

\begin{lemma}
The maps $\id_{Y_\alpha}$ and
$\inc\circ g_{\to\alpha}:Y_\alpha\to Y_\alpha$
are contiguous.
\end{lemma}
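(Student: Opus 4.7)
The plan is to exploit the flag property of $Y_\alpha$ (which holds by Lemma~\ref{lemma:digital_cubeflag}, since $Y_\alpha=\nerve(S'_\alpha)$ is the nerve of a finite collection of cuboids) to reduce the contiguity check to a pairwise intersection condition on vertices. By definition, contiguity of $\id_{Y_\alpha}$ and $\inc\circ g_{\to\alpha}$ amounts to showing that for every simplex $\sigma=\{C_0,\ldots,C_k\}\in Y_\alpha$, the combined vertex set
\[
\{C_0,\ldots,C_k,\, g'_{\to\alpha}(C_0),\ldots,g'_{\to\alpha}(C_k)\}
\]
spans a simplex of $Y_\alpha$. Since $Y_\alpha$ is a flag complex, this reduces to verifying that any two of these $2(k+1)$ pixels have non-empty common intersection.

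I would then split the verification into three cases. First, $C_i\cap C_j\neq\varnothing$ for all $i,j$ is immediate from $\sigma\in Y_\alpha$. Second, $g'_{\to\alpha}(C_i)\cap g'_{\to\alpha}(C_j)\neq\varnothing$ follows because $g_{\to\alpha}$ is already known to be a simplicial map into $\complex_\alpha\subseteq Y_\alpha$ (proven in the preceding lemma), so it sends the edge $\{C_i,C_j\}$ to a simplex, i.e.\ to a pair of intersecting pixels. Third, for the mixed case $C_i$ versus $g'_{\to\alpha}(C_j)$, I would invoke the defining property of $g'_{\to\alpha}$: if $C_j\in\pixels_\beta$ with $\beta\le\alpha$, then $g'_{\to\alpha}(C_j)$ is the unique pixel of $\pixels_\alpha$ that contains $C_j$ geometrically, so $C_j\subseteq g'_{\to\alpha}(C_j)$. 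Therefore
\[
C_i\cap g'_{\to\alpha}(C_j)\;\supseteq\; C_i\cap C_j\;\neq\;\varnothing,
\]
and symmetrically for $g'_{\to\alpha}(C_i)\cap C_j$.

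There is no real obstacle in this argument: the entire content of the lemma is bookkeeping once one observes the set-theoretic containment $C\subseteq g'_{\to\alpha}(C)$ that is baked into the definition of the vertex map, combined with the flag property of nerves of cuboids. The only subtlety worth flagging is ensuring that the different grid levels do not cause issues: because the lattices $L_\beta$ are nested for $\beta\in J$ (each $L_\beta$-pixel sits inside a unique $L_\alpha$-pixel whenever $\beta\le\alpha$ lies in $J$), the map $g'_{\to\alpha}$ is well-defined as a composition of the single-step maps $g'$, and the containment statement is preserved under composition. Once contiguity is established, the induced homomorphisms satisfy $\inc^\ast\circ g_{\to\alpha}^\ast=\id_{H(Y_\alpha)}$, which together with the equality $g_{\to\alpha}^\ast\circ\inc^\ast=\id_{H(\complex_\alpha)}$ already argued makes $g_{\to\alpha}^\ast$ an isomorphism, completing the persistence-equivalence through diagram~\eqref{equation:digital_intlv_alt}.
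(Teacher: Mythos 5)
Your proof is correct and follows essentially the same route as the paper: reduce contiguity to pairwise intersections via the flag property of $Y_\alpha$ (Lemma~\ref{lemma:digital_cubeflag}), handle pairs of original vertices by $\sigma\in Y_\alpha$, pairs of images by simpliciality of $g_{\to\alpha}$, and mixed pairs by the containment $C\subseteq g'_{\to\alpha}(C)$. The only cosmetic difference is that you treat the mixed pairs $(C_i,g'_{\to\alpha}(C_j))$ uniformly via $C_i\cap g'_{\to\alpha}(C_j)\supseteq C_i\cap C_j$, whereas the paper separates the cases $i=j$ and $i\neq j$; the substance is identical.
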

\begin{proof}
Fix any simplex $\sigma=(x_0,\ldots,x_k)$ in $Y_\alpha$
and consider the set of vertices 
\[
(x_0,\ldots,x_k,g_{\to\alpha}(x_0),\ldots,g_{\to\alpha}(x_k)).
\]
We have to show that this set spans a simplex in $Y_\alpha$.
Since $Y_\alpha$ is a flag complex, it suffices to show that all edges
formed by two (distinct) vertices are in $Y_\alpha$.
Since $\sigma\in Y_\alpha$, every pair $(x_i,x_j)$ is in $Y_\alpha$ as well.
Since $g_{\to\alpha}$ is simplicial, the same is true for pairs
of the form $(g_{\to\alpha}(x_i),g_{\to\alpha}(x_j))$.
For pairs of the form $(x_i,g_{\to\alpha}(x_i))$, note that $g_{\to\alpha}(x_i)$ contains $x_i$, so the edge is also contained in the nerve.
Finally, for pairs $(x_i,g_{\to\alpha}(x_j))$ with $i\neq j$, we observe
that the pixels $x_i$ and $x_j$ are intersecting and since $g_{\to\alpha}(x_j)$
contains $x_j$, the pair is also intersecting.
\end{proof}

This lemma shows that for all $\alpha$, $g_{\to\alpha}^\ast$ 
is an invertible map and hence an isomorphism.
Therefore, the filtrations $(Y_\alpha)_{\alpha\geq 0}$ and
$(\complex_\alpha)_{\alpha\geq 0}$ are persistence equivalent,
finishing the proof of Lemma~\ref{lemma:pixel_nerve_iso}.

\subsection{Algorithm to compute the tower}
\label{subsection:appendix-section3-algo}

First, we describe the flooding algorithm to find the pixels in the $\alpha$-ball
of each point $p\in P$.
The complex $\complex_{\alpha}$ is built on the lattice $L_\beta$.
To find the pixels with centers inside $B_1:=B(p,\alpha)$, we first map
$p$ to the nearest lattice point $x$ in $L_\beta$.
Since the diameters of the pixels are upper bounded by $\eps\beta/4$,
a simple triangle inequality shows that for each pixel whose center lies 
in $B_1$, the pixel's center and the centers of its neighboring pixels 
all lie within $B_2:=B(x,\alpha+\eps\beta)$.
It suffices to inspect each pixel in $B_2$ and 
to add it to the list of pixels if the distance of its center 
to $p$ is at most $\alpha$.
We simply enqueue the neighbors of $x$ in a queue $Q$.
We dequeue an element $y\in Q$ and inspect whether the distance to $p$ is
at most $\alpha$; if so, we add it to the list of pixels of $p$.
Also, if the distance to $p$ is at most $\alpha+\eps\beta$, we enqueue the neighbors
of $y$ into $Q$.
We continue dequeuing until $Q$ is empty. 

Since the neighborhood of each pixel $y$  whose center is in $B_1$ is completely
contained inside $B_2$, it is easy to see that there is a sequence of 
pixels $x,c_1,\ldots,c_m,y$ such that consecutive pixels intersect.
In this way, the search finds $y$, 
and terminates after a finite number of steps.

To compute the $k$-skeleton, we store the pixels at $\alpha$ in a dictionary.
We go over each pixel and inspect whether its $3^d-1$ neighbors lie in the
dictionary; if so, we put an edge between the pixel and its neighbor.
The $k$-skeleton can be obtained from the $1$-skeleton by a simple 
combinatorial algorithm,
traversing the Hasse diagram of the complex
(refer to Algorithm 5.8 in~\cite{ckr-polynomial-dcg}).

\subsubsection{Proof of Theorem~\ref{theorem:digital_size}}
\label{subsubsection:appendix-section3-size}

\begin{proof}
First, we bound the computation time required to find the pixels
associated to each point $p\in P$.
We find $x$, the closest point to $p$ in the lattice, and this takes
$O(d)$ time.
To find the pixels, we use a breadth-first search.
A packing argument similar to the one in Theorem~\ref{theorem:digital_size}
shows that the number of vertices of this graph is at most
$\left(\frac{1}{\eps}\right)^d2^{O(d\log d )}$.
Each vertex of this graph has degree $2^{O(d)}$.
Inspecting the distance of a pixel-center to $p$ takes $O(d)$ time.
The breadth-first search then requires 
$\left(\frac{1}{\eps}\right)^d2^{O(d\log d )}$ time.
Since there are $n$ points in $P$, the total time required 
is $n\left(\frac{1}{\eps}\right)^d2^{O(d\log d )}$.

Computing the $k$-skeleton requires $2^{O(dk)}$ time per pixel.
To compute the simplicial map, it suffices to report the vertex map
to the next scale in the tower.
When the lattice does not change between scales, the vertex map
is simply an inclusion, so there is nothing to compute.
Otherwise, we compute the nearest lattice point to each pixel-center
of the current scale and this takes $O(d)$ time per pixel-center.
This cost is subsumed by the cost of the previous operations.
The claim follows.
\end{proof}

\bibliographystyle{plain}

\end{document}